\newtheorem{theorem}{Theorem}
\newtheorem{lemma}{Lemma}
\newtheorem{note}{Note}
\newtheorem{definition}{Definition}
\newcommand{\field}[1]{\mathbb{#1}}
\newcommand{\N}{\field{N}} 
\newcommand{\R}{\field{R}} 
\newcommand{\Z}{\field{Z}} 
\newcommand{\1}{{\bf 1}} 
\newcommand{\E}{\, \mathsf{E}} 
\newcommand{\Dscr}{{\cal D}}
\newcommand{\Nscr}{{\cal N}}
\newcommand{\Rscr}{{\cal R}}
\newcommand{\Sscr}{{\cal S}}
\newcommand{\bea}{\begin{eqnarray}}
\newcommand{\eea}{\end{eqnarray}}
\newcommand{\beas}{\begin{eqnarray*}}
\newcommand{\eeas}{\end{eqnarray*}}
\newcommand{\eq}[1] {(\ref{#1})}
\newcommand{\dist}{\phi}
\newcommand{\qd}{\xi}
\newcommand{\bv}[1]{{\bf #1}}
\newcommand{\til}{\tilde}
\newcommand{\ep}{\epsilon}
\newcommand{\vep}{\varepsilon}
\newcommand{\C}{{\mathcal C}}
\newcommand{\D}{{\mathcal D}}
\newcommand{\one}{{\mathds 1}}
\newcommand{\defi}{\triangleq}
\newcommand{\goesto}{\rightarrow}
\newcounter{constcount}
\newcounter{numcount}
\newcommand{\rescnt}{\setcounter{numcount}{1}}
\newcommand{\DFT}{\text{DFT}}
\newcommand{\aln}[1]{\begin{align*}#1\end{align*}}
\newcommand{\al}[1]{\begin{align}#1\end{align}}
\renewcommand{\vec}[1]{{\bf #1}}
\title{Network Compression: Worst-Case Analysis}
\author{Himanshu Asnani, Ilan Shomorony, A. Salman Avestimehr, and Tsachy Weissman\thanks{A shorter version of this paper was submitted to the International Symposium on Information Theory 2013}}
\begin{document}
\maketitle

\begin{abstract}

We study the problem of communicating a distributed correlated memoryless source over a memoryless network, from source nodes to destination nodes, under quadratic distortion constraints. 
We establish the following two complementary results: (a) for an arbitrary memoryless network, among all distributed memoryless sources of a given correlation, Gaussian sources are least compressible, that is, they admit the smallest set of achievable distortion tuples, and (b) for any memoryless source to be communicated over a memoryless additive-noise network, among all noise processes of a given correlation, Gaussian noise admits the smallest achievable set of distortion tuples. 
We establish these results constructively by showing how schemes for the corresponding Gaussian problems can be applied to achieve similar performance for (source or noise) distributions that are not necessarily Gaussian but have the same covariance.   

\end{abstract}

\section{introduction}
\label{sec::intro}

Stochastic modeling of the data source and the communication medium are essential in data compression and data communication problems. However, extracting these descriptions from a practical system is in general difficult and often leads to intractable problems from a theoretical point of view. As a result, Gaussian models for both the data sources and the noise in communication networks prevail.

The modeling of the noise in communication links as additive Gaussian is generally justified through the Central Limit Theorem, which suggests that the cumulative effect of many independent noise sources should be approximately Gaussian. The modeling of data sources as Gaussian, on the other hand, is less justifiable and done largely for the sake of analytical tractability.

From a theoretical standpoint, one way of supporting the Gaussian assumption is by establishing that it is worst-case, meaning that, within a given family of distributions (usually defined by a covariance constraint), the Gaussian assumption results in the smallest possible capacity or rate-distortion region.
In fact, this has long been known to be the case in two classical single-user Information Theory scenarios.
In the channel coding setting, it is known that, given a fixed variance of the noise, the Gaussian distribution minimizes the capacity of a memoryless additive-noise channel. The source coding counterpart of this result is that, for a fixed-variance i.i.d. random source, the Gaussian distribution minimizes the rate-distortion region. Both of these assertions can be proved using the fact that, subject to a variance constraint, the Gaussian distribution maximizes the entropy. In the channel coding case, a more operational proof of the fact that Gaussian noise is the worst-case noise was provided in  \cite{LapidothNearest}, where it was shown that random Gaussian codebooks and nearest-neighbor decoding achieve the capacity of the corresponding AWGN channel on a non-Gaussian channel. 

There are a few other worst-case characterizations in the literature. One example is \cite{DiggaviWorstCase}, where the authors consider vector channels with additive noise subject to the constraint that the noise covariance matrix lies in a convex set.
It is shown that, in this setting, the worst-case noise is vector Gaussian with a covariance matrix that depends on the transmit power constraints.
In \cite{ShamaiWorstCase}, a scalar additive-noise channel with binary input is considered. 
In this setting, the probability mass function of the (discrete) worst-case noise is characterized, and the worst-case capacity (i.e., the capacity under the worst-case noise) is found. Another example is the work of \cite{Aaron2Terminal} that characterizes the rate-distortion region for the two-encoder source coding problem with quadratic distortion constraints and Gaussian sources, which in turn allows the characterization of the joint Gaussian source as the worst-case source for the two-encoder quadratic source coding problem.

Beyond the aforementioned examples, worst-case analysis of  more general multi-user networks was, until recently, fairly limited. The main challenge lay in the fact that most multi-user Information Theory problems remain unsolved, i.e., without an explicit characterization of the capacity or rate-distortion regions. Recently, a new approach was introduced in \cite{wcnoisefull} that allowed to generalize the worst-case noise result from additive-noise point-to-point channels to arbitrary linear additive-noise wireless networks\footnote{In these networks, the received signal at each node is a linear combination of the transmit signals at all other nodes plus a noise term.}. The framework in \cite{wcnoisefull} can be described in two main steps. 
First, a  DFT (Discrete Fourier Transform)-based linear transformation is applied to all transmitted and received signals in the network in order to create an effective network where the additive-noise terms are ``approximately Gaussian''\footnote{In the sense that their distribution converges to a Gaussian distribution as the size of the blocks to which we apply the DFT-based transformation increases.}.  Next, by demonstrating the optimality of  coding schemes with finite precision\footnote{In coding schemes with finite precision, the encoding and decoding operations of the nodes in the network may only take inputs with a finite decimal expansion. This precision can become arbitrary large as the coding block length increases.} in Gaussian networks, it is proven that the capacity region of the Gaussian network is contained in the capacity region of the effective network asymptotically (as the size of the blocks to which we apply the DFT-based transformation increases). 
This approach was later utilized in~\cite{wcsourceITW} to establish that Gaussian sources are worst-case data sources for distributed compression of correlated sources over rate-constrained, noiseless channels, with a quadratic distortion measure (i.e., in the context of the quadratic $k$-encoder source coding problem).

In this work, we pursue the analogue of these worst-case results in joint source-channel coding, by considering the problem of distributed compression of information over an arbitrary network. 
More precisely, $k$ nodes in the network have access to correlated stochastic sources and wish to transmit them over an $N$-node network to respective destinations. A coding scheme is employed to define the encoding, relaying and decoding operations of the network nodes, and its performance metric is the mean square error in the destinations' reconstruction of their desired sources. This problem lies at the heart of increasingly many applications concerning distributed compression of information over a network, such as sensor networks. 

Since this setup involves the modeling of both the sources and the network, the worst-case characterization takes the form of two related sub-questions:
\begin{itemize}
\item \textbf{Question 1}: Given an arbitrary memoryless network, for
a fixed correlation amongst its distributed memoryless components, are the jointly Gaussian sources, the worst compressible? 
In other words, do they have the smallest set of achievable distortion tuples?
\item \textbf{Question 2}: Given an arbitrary memoryless distributed source, for an
additive-noise network with a given noise correlation, is the Gaussian noise worst-case,
in the sense of having the smallest set of achievable
distortion tuples?
\end{itemize} 

In this paper, we answer both of these questions in the affirmative. We utilize the aforementioned framework to propose a universal way of converting a coding scheme designed under the Gaussian assumption into coding schemes that can handle and attain similar performances for non-Gaussian sources or noises. In particular, we start by using the DFT-based linear transformation as a way to make either the sources or the noises approximately Gaussian. Since this operation introduces a statistical dependence between the resulting sources or noises, an interleaving scheme is employed, in order to create blocks of i.i.d. approximately Gaussian sources and noises. Within each of the resulting blocks we then apply the original coding scheme designed under Gaussian models. We show that such a scheme, when performed over sufficiently long blocks, can achieve distortions arbitrarily close to those achieved by the original coding scheme designed for Gaussian sources or noises. This is done by showing that our original scheme can be assumed without loss of generality to satisfy two properties: finite precision\footnotemark[\value{footnote}] and bounded  outputs\footnote{In a coding scheme with bounded outputs, each component of the source reconstruction sequences produced by the destinations cannot exceed a given number $M$.}. These properties allow us to use standard tools regarding the convergence of random variables, such as the Dominated Convergence Theorem, to bound the distortion attained by the new coding scheme constructed based on the DFT-based linear transformation.

Our contribution lies not only in answering the above two questions in the affirmative and showing the worst-case nature of Gaussian assumptions, but also in describing a systematic way of converting coding schemes designed under Gaussian assumptions into coding schemes that can handle non-Gaussian assumptions. 
The idea behind the construction of such schemes is simple conceptually, using DFT-based linear transformations, which renders them also algorithmically tractable.

The rest of the paper is organized as follows. Section \ref{sec::problem}
presents the formal problem formulation along with the main results of the
paper.  An overview of the main ingredients used in the proofs of the main results is
provided in Section \ref{sec::preresults}.  Section \ref{sec::worstsource} studies the problem of finding the worst case source, given a fixed correlation
matrix, for compression over a given memoryless network while the worst case
nature of Gaussian additive noise, again for a fixed correlation structure, is proven for an arbitrary memoryless distributed source in Section \ref{sec::worstnoise}.  The paper is concluded in Section
\ref{sec::conclusion}.

\section{Problem Formulation and Main Results}
\label{sec::problem}

We are given a (stochastic) network, where source nodes want to communicate correlated memoryless sources across the network to respective destination nodes, subject to a distortion constraint. 
As outlined in Section \ref{sec::intro}, we address two complementary questions in this paper. First, we consider characterizing, for a fixed network, the worst-case source distribution; i.e., the source distribution for which the set of achievable distortion constraints is smallest.
In order to make this question meaningful, we fix the covariance of the joint distribution of the sources.
Second, we consider fixing the source distribution and asking what is the worst-case noise in the network.
To make the latter problem well-posed, we focus on additive-noise networks, where the covariance matrix of the noise terms is fixed. 

In order to formally state these two problems, we will need the following notation.
We refer to an $n$-tuple $\{X[t]\}_{t=0}^{n-1}$ by both $X^n$ and $\mathbf{X}$
(when the size of the tuple $n$ is clear from the context).  
If a random variable $X$ has a probability density function, it is denoted as $f_X(x)$, and if the conditional distribution of $X$ given $Y$  has a conditional probability density function, it is denoted as $f_{X|Y}(x|y)$.
The notation $[0:k]$ is shorthand for the set of natural numbers $\{0,1,\ldots,k\}$, and $X_i[0:k] = \{ X_i[0], X_i[1],\ldots, X_i[k] \}$.

\begin{figure}[htbp] 	
\begin{center}
\scalebox{0.5}{\input{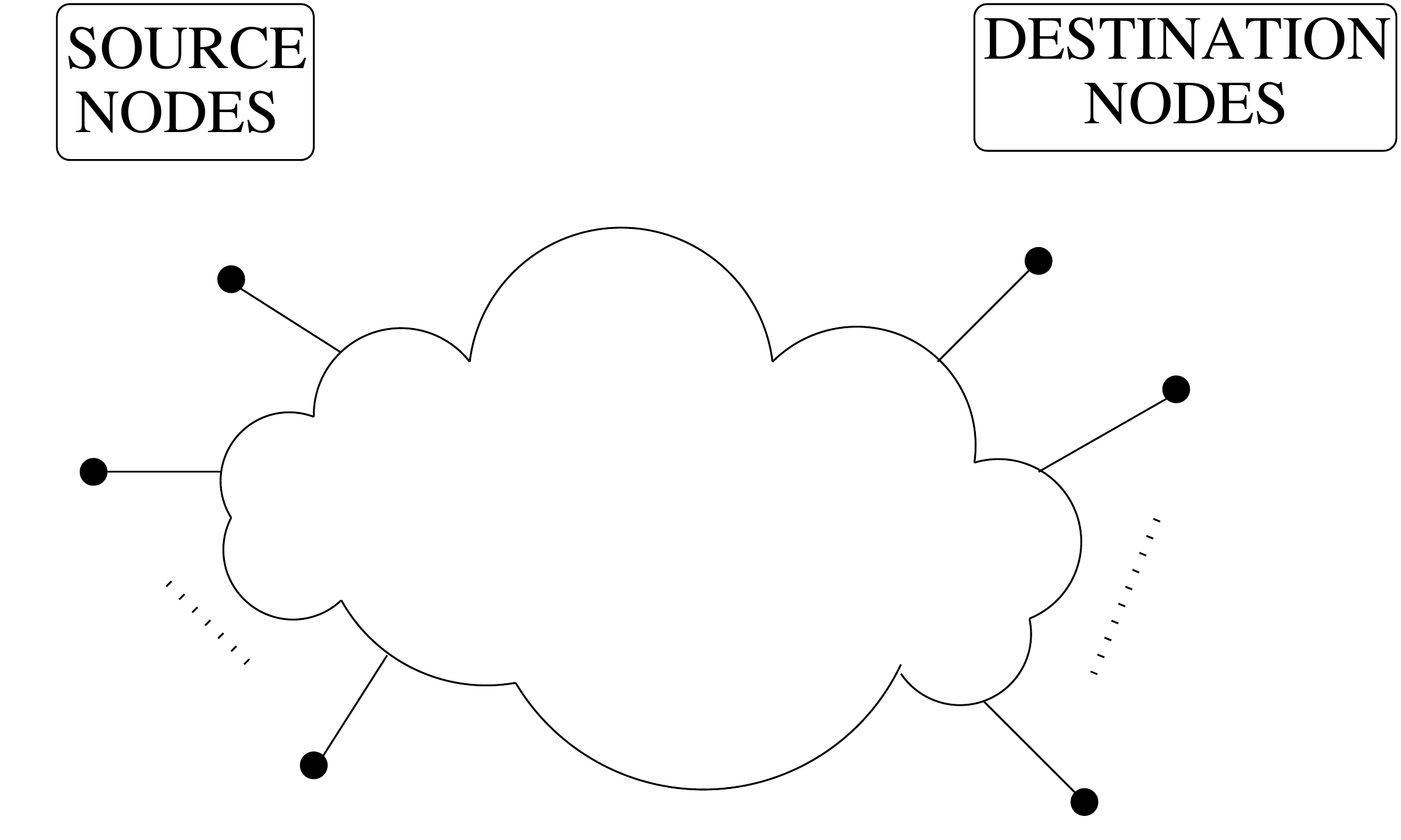_t}}
\caption{$(k,N)$-memoryless network. $\phi(\cdot,\cdot)$ refers to the squared error.} 
\label{dm_mn}
\end{center}
\end{figure}

A $(k,N)$-memoryless network, illustrated in Fig. \ref{dm_mn}, is characterized by the conditional density $f_{Y_1,\ldots,Y_N|U_1,\ldots,U_N}$, which relates the real valued network inputs $(U_1,\ldots, U_N)$ to real valued network outputs $(Y_1,\ldots, Y_N)$.  
The set of source nodes is denoted as $\Sscr=\{s_1,s_2,\ldots,s_k\}\subseteq [1:N]$, and the set of destination nodes is denoted as $\Dscr=\{d_1,d_2,\ldots,d_k\}\subseteq [1:N]$.
The remaining nodes  (we assume without loss of generality that the sets of source and destination nodes have empty intersection) are relays $\mathcal{R}=\{r_1,r_2,\ldots,r_{N-2k}\}\subseteq [1:N]$.
Source node $s_m\in \Sscr$ has access to the i.i.d.\ source $X_m[t]$, $t=0,1,...$, which must be communicated to the corresponding destination node $d_m \in \Dscr$.
The i.i.d.\ vectors $(X_1[t],\ldots,X_k[t])$ have a joint distribution with covariance matrix $\vec K$.


\begin{definition} \label{codingdefn}
A coding scheme $\mathcal{C}$ with block length $n\in\N$ for distributed compression of a real valued memoryless source $(X_1,X_2,\ldots,X_k)$ over a $(k,N)$-memoryless network consists of the following:
\begin{enumerate}
\item {\emph{Source Encoding Functions}}: 
Source node $s_m\in \Sscr$ encodes the source $X_m$ as $U_{s_m}[t]=f_{s_m,t}(\vec X_m,Y_{s_m}^{t-1}),\ \forall\  t \in [0:n-1]$, where $f_{s_m,t}:{\R}^n\times \R^{t-1}\rightarrow \R$, $\forall\  m\in [1:k],\ \forall\  t \in [0:n-1]$ are the source encoding functions\footnote{Here and throughout we use the terms `functions' and `mappings' interchangeably and assume that they are measurable.}. 

\item {\textit{Relay Encoding Functions}}: Relay node $r_p\in \Rscr$ receives the channel outputs from the network and encodes it as $U_{r_p}[t]=f_{r_p,t}(Y_{r_p}^{t-1}),\ \forall\  t \in [0:n-1]$, where $f_{r_p,t}: \R^{t-1}\rightarrow \R$, $\forall\  p\in [1:N-2k],\ \forall\  t=[0:n-1]$, are the relay encoding functions. 

\item {\textit{Destination Encoding Functions}}: Destination node $d_m\in \Dscr$ receives the channel output from the network and encodes it as $U_{d_m}[t]=f_{d_m,t}(Y_{d_m}^{t-1})$, where $f_{d_m,t}: \R^{t-1}\rightarrow \R$, $\forall\  m\in [1:k],\ \forall\  t \in [0:n-1]$, are the destination encoding functions. 

\item {\textit{Destination Decoding Functions}}: At the end of the block of communication, each destination $d_m\in \Dscr$ constructs an estimate of the source as $\mathbf{\hat{X}}_m= g_{d_m}(\mathbf{Y}_{d_m})$, where $g_{d_m}: \R^{n}\rightarrow \R^n$, $\forall\  m\in [1:k]$, are the destination decoding functions. 
\end{enumerate}
\end{definition}

\begin{definition}
A distortion measure is a mapping $\dist : \R \times \R \to \R^+$.
\end{definition}

\begin{definition}
A distortion tuple $(D_1,D_2,\ldots,D_k)$ is said to be $\phi$-achievable if for some block length $n$, there exists a coding scheme $\mathcal{C}$, as described above, such that, 
\bea \label{distconst}
\frac{1}{n}  \E  \left[ \sum_{t=1}^n \dist(X_m[t],\hat X_m[t]) \right] \le D_m,\ \forall\ m \in [1:k]. 
\eea
\end{definition}


We focus on the quadratic distortion measure, i.e., where $\dist (x,y) = \qd(x,y) \defi (x-y)^2$.
Notice that, in this case, the expression in \eq{distconst} can be equivalently written as
\aln{
\frac{1}{n} \E\Big{[}\parallel \mathbf{X}_m-\hat{\mathbf{X}}_m\parallel^2\Big{]}\le D_m,\ \forall\ m \in [1:k]. 
}

\begin{definition}
The $\phi$-achievable distortion region $\D$ of a $(k,N)$-memoryless network is the closure of the set of achievable distortion tuples.
\end{definition}

\begin{theorem}[Main Result 1: \textbf{\textit{Worst-Case Source for $(k,N)$-memoryless network}}]
\label{theorem1}
For a $(k,N)$ memoryless network, let $\mathcal{D}^{source}_{NG}$ and $\mathcal{D}^{source}_{G}$ stand for the $\qd$-achievable distortion regions for an arbitrary memoryless non-Gaussian source with covariance matrix $\mathbf{K}$ and for a memoryless Gaussian source with the same covariance matrix, respectively. Then 
\bea
\mathcal{D}^{source}_{G} \subseteq \mathcal{D}^{source}_{NG}.
\eea
\end{theorem}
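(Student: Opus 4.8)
\emph{Proof strategy.} Since $\mathcal{D}^{source}_{NG}$ is closed and $\mathcal{D}^{source}_{G}$ is the closure of the Gaussian-achievable tuples, it suffices to show that every tuple $(D_1,\dots,D_k)$ that is $\qd$-achievable for the Gaussian source with covariance $\mathbf{K}$ is also $\qd$-achievable for the given non-Gaussian source with the same covariance, and then take closures. Fix such a tuple, achieved by a Gaussian coding scheme $\mathcal{C}_G$ of some block length $n$. By the reduction results of Section \ref{sec::preresults} we may assume, at the cost of an arbitrarily small loss in the distortions, that $\mathcal{C}_G$ has \emph{finite precision} and \emph{bounded outputs}. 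Given $\epsilon>0$, the plan is to build from $\mathcal{C}_G$ a scheme for the non-Gaussian source whose distortions are at most $D_m+\epsilon$ for every $m$; letting $\epsilon\to 0$ then places $(D_1,\dots,D_k)$ in $\mathcal{D}^{source}_{NG}$.

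\emph{Construction.} Pick a large integer $L$ and a large multiple $N'$ of $n$, and operate over $N'L$ consecutive source symbols, viewed as $N'$ independent super-blocks of length $L$. At every source node $s_m\in\Sscr$ apply, within each super-block, the real orthonormal DFT-based transform $\mathbf{T}_L$ of \cite{wcnoisefull} to the length-$L$ vector of source symbols, producing transformed symbols $\tilde X_m[\cdot]$. Because $\mathbf{T}_L$ is orthonormal and is applied identically at all source nodes, the transformed vectors $(\tilde X_1[\ell],\dots,\tilde X_k[\ell])$ are uncorrelated across the transformed index $\ell$ and across super-blocks, and each one has covariance exactly $\mathbf{K}$; moreover, since the entries of $\mathbf{T}_L$ are $O(1/\sqrt{L})$, the CLT-type statement of Section \ref{sec::preresults} gives that, for each fixed $\ell$, the marginal law of $(\tilde X_1[\ell],\dots,\tilde X_k[\ell])$ converges to $\mathcal{N}(\mathbf{0},\mathbf{K})$ as $L\to\infty$. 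Now interleave across super-blocks: for each transformed index $\ell\in\{0,\dots,L-1\}$ the $N'$ symbols obtained from the $N'$ distinct super-blocks are genuinely i.i.d.\ (they are built from disjoint sets of original i.i.d.\ symbols) and approximately $\mathcal{N}(\mathbf{0},\mathbf{K})$-distributed, so we run $\mathcal{C}_G$ on this length-$N'$ i.i.d.\ block over the same network ($N'/n$ invocations). Finally each destination $d_m$ collects its reconstructions, applies $\mathbf{T}_L^{-1}=\mathbf{T}_L^{\top}$ within each super-block, and outputs $\hat{\mathbf{X}}_m$.

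\emph{Analysis.} Orthonormality of $\mathbf{T}_L$ makes the per-node squared error invariant under the transform, so the average distortion delivered to destination $d_m$ over the $N'L$ symbols equals the average over $\ell$ of the distortion that $\mathcal{C}_G$ incurs on the approximately-Gaussian i.i.d.\ input at transformed index $\ell$. It therefore remains to show that this distortion converges, uniformly in $\ell$ as $L\to\infty$, to the distortion that $\mathcal{C}_G$ incurs on a true i.i.d.\ $\mathcal{N}(\mathbf{0},\mathbf{K})$ source, which by hypothesis is at most $D_m$. Here the two reductions are used: finite precision makes the composite map from a source realization (and a realization of the network's internal randomness) to the reconstruction continuous outside a Lebesgue-null set, and both the approximately-Gaussian and the Gaussian laws assign zero mass to any such set, so the continuous mapping theorem gives convergence in distribution of the reconstruction error; bounded outputs control the $\hat X_m$ contribution to $(X_m[t]-\hat X_m[t])^2$, while the transformed source symbols have the fixed second moment $K_{mm}$ and satisfy a Lindeberg condition, which yields uniform integrability of the squared error. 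Convergence in distribution together with uniform integrability upgrades to convergence of the expected squared error (via Vitali / Dominated Convergence as in Section \ref{sec::preresults}), so for $L$ large enough each per-index distortion, hence their average, is at most $D_m+\epsilon$. This produces the desired scheme and completes the argument.

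\emph{Main obstacle.} The crux is the last step: showing that the distortion achieved by a \emph{fixed} scheme is continuous in the source distribution at the Gaussian point. This fails for arbitrary schemes, which is exactly why the finite-precision and bounded-output reductions are indispensable --- the former turns weak convergence of the inputs into weak convergence of the outputs despite the scheme having no a priori continuity, and the latter (together with the controlled second moment of the DFT coefficients) supplies the uniform integrability needed to pass from weak convergence to convergence of mean-square distortion. Two further points must be handled with care: the presence of the network's own randomness, dealt with by conditioning before invoking the continuous mapping theorem and then integrating; and the verification that the interleaving truly feeds $\mathcal{C}_G$ an i.i.d.\ input whose covariance is exactly $\mathbf{K}$, so that $\mathcal{C}_G$ is being applied within its design specification.
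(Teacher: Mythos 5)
Your proposal follows essentially the same approach as the paper: reduce to a bounded-output, finite-encoding-precision Gaussian scheme, apply the DFT-based unitary transform $\mathbf{Q}$ blockwise to the source, interleave to form i.i.d.\ approximately-Gaussian blocks, run the Gaussian code on each block, invert the transform at the decoder, and then combine the a.e.\ continuity afforded by finite precision with the Cram\'er--Wold/Lindeberg convergence and a dominated-convergence argument (using the bounded outputs and the unitary invariance of $\|\cdot\|$ to supply the dominating function), with the Functional Representation Lemma used to condition on the network's internal randomness. One small overstatement: you assert that the approximately-Gaussian (finite-$L$) laws assign zero mass to the discontinuity set, which need not hold if the underlying source is singular with respect to Lebesgue measure; this is harmless because the continuous mapping theorem only requires the \emph{limit} (Gaussian) law to assign zero mass there, exactly as the paper uses it.
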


\begin{note}
A special case of Theorem \ref{theorem1} is that of wireline networks where each link is a (noiseless) bit pipe. This gives us the result of Gaussian source being the worst case source for the $k$-encoder distributed compression problem studied in \cite{wcsourceITW}.
\end{note}
In order to state our second main result, 
we focus on the following class of networks.

\begin{definition}
A $(k,N)$-memoryless network is said to be an additive-noise network if the input-output relationship is given by
\al{ \label{additiveio}
\begin{bmatrix}  Y_{1} \\  Y_{2} \\ \vdots \\  Y_{N} \end{bmatrix} & = 
H
\begin{bmatrix}  U_{1} \\  U_{2} \\ \vdots \\  U_{N} \end{bmatrix} + \begin{bmatrix}  Z_{1} \\  Z_{2} \\ \vdots \\  Z_{N} \end{bmatrix},
}
where $H$ is a real-valued $N \times N$ matrix and $(Z_1,\ldots,Z_N)$ is a noise vector with joint distribution $\mu_{\bf Z}$ independent of $(U_1,\ldots,U_N)$.
If $(Z_1,\ldots,Z_N)$ is distributed as $\Nscr({\bf 0},{\bf K})$ for some covariance matrix $\vec K$, then we call the network a $(k,N)$-additive white Gaussian noise (AWGN) network.
\end{definition}

\begin{theorem}[Main Result 2: \textbf{\textit{Worst-Case Noise for $(k,N)$-memoryless additive-noise network}}]
\label{theorem2}
For an arbitrary source of finite covariance and a $(k,N)$ memoryless additive-noise network, let $\mathcal{D}^{noise}_{NG}$ and $\mathcal{D}^{noise}_{G}$ stand for the $\qd$-achievable distortion regions for an arbitrary additive-noise non-Gaussian distribution with covariance matrix $\mathbf{K}$ and for additive Gaussian noise with the same covariance matrix, respectively. Then  
\bea
\mathcal{D}^{noise}_{G} \subseteq \mathcal{D}^{noise}_{NG}.
\eea
\end{theorem}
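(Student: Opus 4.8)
The plan is to reuse the machinery behind Theorem \ref{theorem1} with the roles of source and noise interchanged: there a DFT-based transformation is used to make the \emph{source} approximately Gaussian, and here it is used to make the \emph{noise} approximately Gaussian. Fix a distortion tuple $(D_1,\ldots,D_k)\in\mathcal D^{noise}_G$ and an $\ep>0$. By definition of the achievable region there is a block length $B$ and a coding scheme $\mathcal C$ for the $(k,N)$-AWGN network with noise $\Nscr(\vec 0,\vec K)$ achieving $(D_1+\ep/2,\ldots,D_k+\ep/2)$, and, invoking the reductions established in Section \ref{sec::preresults}, we may assume $\mathcal C$ has finite precision and bounded outputs, i.e.\ every reconstructed symbol lies in $[-M,M]$ for a fixed finite $M$. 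It then suffices to build, for every large $L$, a scheme $\mathcal C'_L$ for the non-Gaussian additive-noise network whose per-symbol distortion at each destination converges, as $L\to\infty$, to the distortion achieved by $\mathcal C$ on the AWGN network; since $\mathcal D^{noise}_{NG}$ is closed and $\ep$ is arbitrary, this gives $(D_1,\ldots,D_k)\in\mathcal D^{noise}_{NG}$.

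We construct $\mathcal C'_L$ over $n'=BL$ channel uses, grouped into $B$ consecutive super-blocks of length $L$. At the start of each super-block every node applies a fixed DFT-based orthogonal transform $F\in\R^{L\times L}$ to the length-$L$ vector of symbols it is about to transmit, and at the end of the super-block applies $F^{\mathsf{T}}$ to the length-$L$ vector of symbols it received. Since $F$ is linear and invertible, \eq{additiveio} shows that this turns each super-block into $L$ interleaved parallel copies of the original network $H$, the $m$-th copy carrying additive noise $\tilde Z^{(b)}(m)=\sum_\ell (F^{\mathsf{T}})_{m\ell}\,Z^{(b)}(\ell)$ in super-block $b$; orthogonality of $F$ together with memorylessness of $\vec Z$ makes distinct copies uncorrelated, gives each copy noise covariance $\vec K$, and, because distinct super-blocks use disjoint channel uses, makes $\big(\tilde Z^{(1)}(m),\ldots,\tilde Z^{(B)}(m)\big)$ i.i.d.\ across super-blocks for each fixed $m$. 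Distributing the i.i.d.\ source symbols over the $L$ copies in the same interleaved fashion, we then run the Gaussian-designed scheme $\mathcal C$, with its own block length $B$, independently on each of the $L$ copies; this is causal at the super-block timescale because within a super-block only linear pre- and post-processing is carried out and $\mathcal C$ is itself causal.

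It remains to control the distortion of $\mathcal C'_L$. For each copy $m$, the effective noise $\tilde Z^{(b)}(m)$ is a normalized sum of the $L$ i.i.d.\ noise vectors with weights $|(F^{\mathsf{T}})_{m\ell}|=O(1/\sqrt L)$, so a multivariate Lindeberg central limit theorem (applicable since $\vec K$ is finite) shows that its law converges weakly to $\Nscr(\vec 0,\vec K)$ as $L\to\infty$, uniformly over $m$ since the Lindeberg bound does not depend on $m$. As the source is fixed and independent of the noise, the pair (source block, effective-noise block of copy $m$) converges in distribution to (source block, i.i.d.\ $\Nscr(\vec 0,\vec K)$). The reconstruction produced on copy $m$ is a fixed composition of finite-precision maps applied to this pair; chosen, as in Section \ref{sec::preresults}, so that its discontinuities form a set of zero probability under the limiting law, this composition is almost everywhere continuous there, so by the continuous mapping theorem the reconstruction converges in distribution to the one produced by $\mathcal C$ on the AWGN network. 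Bounded outputs give $(X_m[t]-\hat X_m[t])^2\le 2X_m[t]^2+2M^2$, an integrable bound (finite source covariance) independent of the noise law, so convergence in distribution upgrades, by uniform integrability, to convergence of $\E[(X_m[t]-\hat X_m[t])^2]$, uniformly over copies; averaging the per-copy distortions over the $L$ interleaved copies, the per-symbol distortion of $\mathcal C'_L$ at destination $m$ converges to $D_m+\ep/2\le D_m+\ep$ for $L$ large. The main obstacle is precisely this last step — coupling the weak convergence of the DFT-processed noise with the finite-precision and bounded-output structure of $\mathcal C$, uniformly over a number of interleaved copies that grows with $L$ — and it is exactly what the preliminary results of Section \ref{sec::preresults} are built to handle; the remaining bookkeeping parallels the proof of Theorem \ref{theorem1}.
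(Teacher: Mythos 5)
Your proposal is correct and follows essentially the same route as the paper: DFT-based pre/post-processing of transmit and receive signals to turn the additive-noise network into $L$ interleaved effective networks with approximately Gaussian noise, followed by finite-precision and bounded-output reductions, the continuous mapping theorem, and a dominated-convergence/uniform-integrability step to pass from weak convergence to convergence of expected distortion. The only cosmetic difference is that you argue uniformity of the CLT convergence over the $L$ copies directly via the weight bound $O(1/\sqrt L)$, whereas the paper's Lemma \ref{convlemma} encodes the same uniformity by allowing an arbitrary index sequence $\ell_b$ and then applying it to the worst-distortion copy.
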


\begin{note}
In \cite{wcnoisefull}, Gaussian noise was also characterized as the worst-case additive noise in wireless networks.
However, \cite{wcnoisefull} considers a channel coding setting, and Gaussian noise is shown to minimize the capacity region, while, in this paper, we focus on a joint source-channel coding setting, and Theorem \ref{theorem2} establishes that Gaussian noises minimize the distortion region.
\end{note}

\section{Overview of Proof Ingredients}
\label{sec::preresults}

In this section, we give an overview of the main proof ingredients, describing at a high level how they are connected, and highlighting the connections between the proofs of Theorems \ref{theorem1} and \ref{theorem2}.

The overarching idea is to use a coding scheme for distributed compression designed for a Gaussian model (Gaussian source or Gaussian additive-noise network) to construct a new coding scheme that achieves approximately the same distortion tuple when the source or the additive noises are not Gaussian but have the same covariance as in the Gaussian case.

The first main step in the construction of this new coding scheme is to utilize the DFT-based linear transformation introduced in~\cite{wcnoisefull} in order to transform blocks of i.i.d. non-Gaussian random variables into ``approximately Gaussian'' random variables.
More specifically, we define the unitary $b \times b$ matrix $\vec Q$ (for simplicity we assume $b$ to be even) by setting the entry in the $(i+1)$th row and $(j+1)$th column to be
\al{ \label{qdef}
Q{(i,j)} =\left\{ \begin{array}{ll} 1/\sqrt{b} & \text{if $i = 0$} \\ \sqrt{2/b} \cos\left( \frac{2 \pi j i}{b} \right) & \text{if $i = 1,\ldots,\frac b2 - 1$} \\  (-1)^j/\sqrt{b} & \text{if $i = \frac b2 $} \\ \sqrt{2/b} \sin\left( \frac{2 \pi j (i-b/2) }{b} \right) & \text{if $i = \frac b2+1,\ldots,b - 1$} \end{array} \right. 
}
for $i,j \in \{0,\ldots,b-1\}$.
Applying $\vec Q$ to a vector $\vec x$ can be intuitively seen as first taking the $\DFT$ of $\vec x$, then separating the real and imaginary parts of the resulting vector, and renormalizing them so that the resulting transformation is unitary.
It is readily verified that $\vec Q$\footnote{Note that one can potentially come up with other choices for this transformation as well. Intuitively, $\mathbf{Q}$ should not put large mass on any of its components and distribute the mass almost uniformly. Mathematically, $\mathbf{Q}$ should be unitary and should satisfy the Lindenberg Condition, in the proof of Lemma 1, so as to have the corresponding Central Limit type theorem (Lemma 1). Our particular choice of $\mathbf{Q}$ is made for concreteness, mathematical convenience, and also due to the practical consideration that it would have a FFT-like implementation.} is a unitary transformation, i.e., that $\|\vec Q \vec x\| = \|\vec x\|$ for any $\vec x \in \R^b$.

The fact that, for any random vector $\vec x$ with i.i.d. non-Gaussian random variables, $\vec Q \vec x$ converges in distribution to a Gaussian random vector (as $b$ increases) was formalized in \cite{wcnoisefull,wcsourceITW}.
For random variables $X_1,X_2,...$ and $X$, we let $X_n \stackrel{d}{\to} X$ mean that $X_n$ converges in distribution to $X$ as $n \to \infty$.
%
The following lemma was first proven in \cite{wcsourceITW}, but we include a proof in Appendix \ref{appendixG} for completeness.

\begin{lemma}
[Convergence Lemma] 
\label{convlemma}
Suppose $\left\{ \left( X_1[i],\ldots,X_k[i]\right) \right\}_{i=0}^{nb-1}$ is an i.i.d.~sequence of length-$k$ random vectors with covariance matrix $\bf K$, and let $\mathbf{Q}$ be the unitary linear transformation in (\ref{qdef}) and
\al{ \label{tildex}
& \begin{bmatrix} \tilde X_{1}^{(0)}[t] & \cdots & \tilde X_{k}^{(0)}[t] \\ 
\tilde X_{1}^{(1)}[t] & \cdots & \tilde X_{k}^{(1)}[t]  \\ \vdots & \ddots & \vdots \\ 
\tilde X_{1}^{(b-1)}[t] & \cdots & \tilde X_{k}^{(b-1)}[t] \end{bmatrix}  =
 {\bf Q} 
\begin{bmatrix} X_1[t b] & \cdots & X_k[t b] \\ 
X_1[t b+1] & \cdots & X_k[t b+1]  \\ \vdots & \ddots & \vdots \\ 
X_1[t b+b-1] & \cdots & X_k[t b+b-1] \end{bmatrix}
}
for $t = 0,1,\ldots,n-1$.
Then, for any sequence $\ell_b$ such that, for $b=1,2,...$, $\ell_b \in \{0,1,\ldots,b-1\}$, and any $t \in \{0,1,\ldots,n-1\}$, 
\aln{
\left( \tilde X_1^{(\ell_b)}[t],\ldots,\til X_k^{(\ell_b)}[t]\right) \stackrel{d}{\to} \Nscr({\bf 0},{\bf K}), \text{ as $b \to \infty$}.}
\end{lemma}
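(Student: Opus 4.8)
The plan is to reduce the multivariate convergence to a one-dimensional central limit theorem via the Cram\'er--Wold device, and then to prove that one-dimensional statement using the Lindeberg--Feller CLT for triangular arrays. At the outset I would recall that we may assume the source vectors $(X_1[i],\ldots,X_k[i])$ to be zero-mean: this is without loss of generality for the covariance-constrained problems under consideration (adding back a known mean is a deterministic operation), and it is in fact needed, since for the constant row $\ell_b = 0$ of $\mathbf Q$ the quantity $\tilde X_j^{(0)}[t]$ is just a CLT-normalized empirical sum.

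Fix $t$ and observe that $\tilde X_j^{(\ell_b)}[t] = \sum_{m=0}^{b-1} Q(\ell_b,m)\,X_j[tb+m]$, so that $\big(\tilde X_1^{(\ell_b)}[t],\ldots,\tilde X_k^{(\ell_b)}[t]\big) = \sum_{m=0}^{b-1} Q(\ell_b,m)\,\mathbf{X}[tb+m]$, where $\mathbf{X}[tb+m] \defi (X_1[tb+m],\ldots,X_k[tb+m])$ are i.i.d.\ zero-mean length-$k$ vectors with covariance $\mathbf K$. For an arbitrary $\mathbf a \in \R^k$, set $W_m \defi \mathbf a^\top \mathbf X[tb+m]$; these are i.i.d., zero-mean, with variance $\sigma^2 \defi \mathbf a^\top \mathbf K\mathbf a < \infty$. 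By Cram\'er--Wold it suffices to show $S_b \defi \sum_{m=0}^{b-1} Q(\ell_b,m)W_m \cond \mathcal N(0,\sigma^2)$, the degenerate case $\sigma^2 = 0$ being trivial.

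View $S_b$ as the row sum of the triangular array of independent, zero-mean variables $Y_{b,m} \defi Q(\ell_b,m)W_m$, $m = 0,\ldots,b-1$. Since $\mathbf Q$ is unitary its rows are unit vectors, so $\mathrm{Var}(S_b) = \sigma^2\sum_m Q(\ell_b,m)^2 = \sigma^2$ for every $b$. The one quantitative fact I would extract from the explicit definition \eq{qdef} is the uniform bound $\max_{i,j}|Q(i,j)| \le \sqrt{2/b}$ (each entry is of the form $\pm 1/\sqrt b$, or $\sqrt{2/b}$ times a sine or cosine). Consequently, for every $\epsilon > 0$,
\begin{align*}
\sum_{m=0}^{b-1}\E\!\left[Y_{b,m}^2\,\I{|Y_{b,m}|>\epsilon}\right]
&= \sum_{m=0}^{b-1} Q(\ell_b,m)^2\,\E\!\left[W_m^2\,\I{|W_m|>\epsilon/|Q(\ell_b,m)|}\right] \\
&\le \Big(\sum_{m=0}^{b-1} Q(\ell_b,m)^2\Big)\E\!\left[W_0^2\,\I{|W_0|>\epsilon\sqrt{b/2}}\right]
= \E\!\left[W_0^2\,\I{|W_0|>\epsilon\sqrt{b/2}}\right],
\end{align*}
and the right-hand side tends to $0$ as $b\to\infty$ since $\E[W_0^2] = \sigma^2 < \infty$. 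Thus the Lindeberg condition holds, so $S_b\cond\mathcal N(0,\sigma^2)$ by Lindeberg--Feller, and as $\mathbf a$ was arbitrary, Cram\'er--Wold yields $\big(\tilde X_1^{(\ell_b)}[t],\ldots,\tilde X_k^{(\ell_b)}[t]\big)\cond\mathcal N(\mathbf 0,\mathbf K)$.

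I expect the crux to be precisely the Lindeberg verification, and within it the observation that $\max_{i,j}|Q(i,j)| = O(b^{-1/2})$ --- i.e., that $\mathbf Q$ spreads its mass nearly uniformly so that no individual term $Q(\ell_b,m)W_m$ can dominate the sum. This bound, being uniform in both $\ell_b$ and $m$, is exactly what lets the argument tolerate an arbitrary, possibly $b$-dependent, choice of the row index $\ell_b$. The remaining ingredients --- the Cram\'er--Wold reduction, the unitarity-based variance identity, and the dominated-convergence argument for the truncated second moment of $W_0$ --- are routine.
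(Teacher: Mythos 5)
Your proof is correct and follows essentially the same route as the paper's: Cram\'er--Wold to reduce to one dimension, then Lindeberg's CLT for the triangular array $Y_{b,m}=Q(\ell_b,m)W_m$, with the decisive quantitative input being the uniform bound $|Q(i,j)|\le\sqrt{2/b}$ coming from unitarity and the explicit form of $\mathbf Q$. Your Lindeberg verification is a bit more streamlined than the paper's (a single direct estimate using that bound, versus the paper's two-step argument of first showing the truncated summands $U_{b,j_b}$ converge to zero in probability and then invoking dominated convergence), and you are right to flag the zero-mean assumption, which the lemma statement omits but which is indispensable --- for the $\ell_b=0$ row, $\tilde X_m^{(0)}[t]$ is a CLT-normalized empirical sum whose mean would diverge otherwise.
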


In the proof of Theorem \ref{theorem1}, we apply $\vec Q$ to blocks of $b$ source symbols in order to create an effective source which is approximately Gaussian.
Similarly, in the proof of Theorem \ref{theorem2}, we apply $\vec Q^{-1}$ to blocks of $b$ transmit signals, and $\vec Q$ to blocks of $b$ received signals, in order to make the effective additive noises approximately Gaussian.
Since the application of the linear tranformation $\vec Q$ (and $\vec Q^{-1}$) results in statistical dependencies between the resulting sources or noises, a simple interleaving scheme is employed in order to create i.i.d. approximately Gaussian sources or noises.
We then apply a coding scheme designed to achieve a given distortion tuple $(D_1,\ldots,D_k)$ under Gaussian assumptions to these resulting i.i.d. blocks.


The main technical challenge in the proofs of Theorems \ref{theorem1} and \ref{theorem2} is to show that, as $b \to \infty$, the resulting distortion converges to $(D_1,\ldots,D_k)$.
In order to do this, we establish several technical lemmas, which allow us to assume without loss of generality that our original coding scheme designed for a Gaussian model satisfies certain properties.


First, 
we need a lemma that allows us to restrict attention to \emph{bounded output} coding schemes; i.e., coding schemes in which the output of the decoding functions is bounded.
Thus we need to show that any achievable distortion tuple can be attained arbitrarily closely by a bounded output scheme. 
The advantage of dealing with codings schemes with bounded output is that it becomes easier to apply
standard results such as the Dominated Convergence Theorem to the associated sequence of distortions. 

\begin{lemma} [Bounded Output Lemma]
\label{boundedoutputlemma}
Suppose $(X_1[t],\ldots,X_k[t])$ has an arbitrary joint distribution with covariance matrix $\bv K$ and a coding scheme $\C$ with blocklength $n$ achieves distortion vector $(D_1,\ldots,D_k)$.
Then, for any $\ep > 0$, one can build another coding scheme $\tilde \C$ of block length $n$ with decoding functions $\tilde g_{d_m}$ with the property that 
\aln{
\left\| \tilde g_{d_j} ( y_1,\ldots,y_n) \right\|_\infty \leq M,
}
for any $(y_1,\ldots,y_n) \in \R^n$, $j = 1,\ldots,k$ and a fixed $M > 0$, which achieves distortion vector $(D_1+\ep,\ldots,D_k+\ep)$.
\end{lemma}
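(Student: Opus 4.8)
The plan is to take the given coding scheme $\C$ achieving $(D_1,\ldots,D_k)$ and simply \emph{truncate} each decoding function's output to a large cube, then show that the extra distortion introduced by truncation can be made as small as $\ep$ by choosing the truncation level $M$ large enough. Concretely, I would define, for a threshold $M > 0$, the coordinate-wise clipping map $\psi_M : \R \to [-M,M]$ with $\psi_M(u) = u$ if $|u| \le M$, $\psi_M(u) = M \,\mathrm{sign}(u)$ otherwise, and set $\tilde g_{d_m}(\vec y) = (\psi_M(\hat X_m[1]),\ldots,\psi_M(\hat X_m[n]))$ where $\hat{\vec X}_m = g_{d_m}(\vec y)$. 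All encoding/relay functions are left unchanged, so $\tilde \C$ has the same block length $n$, and the bounded-output property $\|\tilde g_{d_m}(\vec y)\|_\infty \le M$ holds by construction for every $\vec y \in \R^n$.

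The heart of the argument is the distortion bound. The key pointwise inequality is that clipping can only move the estimate \emph{closer} to any target point inside the cube, and in particular closer to $X_m[t]$ on the event that $|X_m[t]| \le M$: that is, $(X_m[t] - \psi_M(\hat X_m[t]))^2 \le (X_m[t] - \hat X_m[t])^2$ whenever $|X_m[t]| \le M$. On the complementary event $\{|X_m[t]| > M\}$ I would bound $(X_m[t] - \psi_M(\hat X_m[t]))^2 \le 2 X_m[t]^2 + 2M^2 \le 4 X_m[t]^2$, since there $M < |X_m[t]|$. Combining, for each $t$,
\al{
\E\big[(X_m[t] - \psi_M(\hat X_m[t]))^2\big] \le \E\big[(X_m[t] - \hat X_m[t])^2\big] + 4\,\E\big[X_m[t]^2 \,\one\{|X_m[t]| > M\}\big].
}
Averaging over $t \in [1:n]$ and using that $X_m$ is i.i.d.\ with finite second moment (a consequence of the finite covariance matrix $\vec K$), the first term averages to at most $D_m$, while the second term equals $4\,\E[X_m[0]^2\,\one\{|X_m[0]| > M\}]$, which tends to $0$ as $M \to \infty$ by the Dominated Convergence Theorem (the integrand is dominated by $X_m[0]^2 \in L^1$ and converges to $0$ pointwise). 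Hence for all sufficiently large $M$ this term is at most $\ep$ simultaneously for every $m \in [1:k]$ (finitely many sources), giving the claimed distortion vector $(D_1+\ep,\ldots,D_k+\ep)$ and fixing the value of $M$.

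I do not anticipate a serious obstacle here; the only points requiring a little care are (i) noting that the original distortion constraint is stated with the sum running over $t \in [1:n]$ while the coding definition indexes symbols by $[0:n-1]$, so one should be consistent about indexing but nothing substantive changes, and (ii) making sure the choice of $M$ is uniform over the $k$ destinations, which is immediate since $k$ is finite and each source has finite variance. A minor subtlety worth mentioning is that $\psi_M$ is measurable (indeed continuous), so $\tilde g_{d_m}$ remains a legitimate decoding function under the measurability convention adopted in the paper. If one wanted to be slightly more careful about the event $\{|X_m[t]| > M\}$ bound, one could instead write $(X_m[t] - \psi_M(\hat X_m[t]))^2 \le (|X_m[t]| + M)^2 \le (2|X_m[t]|)^2 = 4X_m[t]^2$ on that event, which is what I used above.
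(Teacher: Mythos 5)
Your proof is correct and follows essentially the same approach as the paper's: both clip the decoder outputs to a cube, observe that clipping cannot increase the distortion when the true source symbol lies inside the cube, bound the residual contribution by a function of the source alone, and send this to zero via the Dominated Convergence Theorem. The paper partitions on a slightly finer event (both $X_j[i]$ and $g_{d_j}(Y)[i]$ exceeding $m$ in the same direction) and thereby avoids your factor of $4$, but this is a cosmetic difference and your argument is complete as written.
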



Another important property that we need to assume for the original coding scheme designed for a Gaussian model is that of \emph{finite reading precision}, which was introduced in \cite{wcnoisefull}.
In coding schemes with finite precision, the encoding and decoding operations of the nodes in the network may only take finite precision inputs; i.e., inputs with a finite number of decimal places.
More formally, for a real-valued vector $x^n = (x_1,\ldots,x_n)$ and a positive integer $\rho$, we let $\left\lfloor x^n \right\rfloor_{\rho} = 2^{-\rho}\left(\lfloor 2^\rho x_1 \rfloor,\ldots,\lfloor 2^\rho x_n \rfloor\right)$, and define the following.

\begin{definition} \label{finiteprecisiondefn2}
A coding scheme $\mathcal{C}$ of block length $n$ is said to have
\textit{finite reading precision} $\rho=[\rho_1,\ldots,\rho_N]\in\N^N$ if the encoding function at each source $s_m \in \mathcal{S}$ satisfiess
\aln{
f_{s_m,t} ( x_m^n, y^{t-1}) = f_{s_m,t} (  x_m^n, \left\lfloor y^{t-1}\right\rfloor_{\rho_{s_m}}),
}
and the encoding functions at each node $i \in \Rscr \cup \Dscr$ satisfies
\aln{
f_{i,t} (y^{t-1}) = f_{i,t} ( \left\lfloor y^{t-1}\right\rfloor_{\rho_i}),
}
for any $x_m^n \in \R^n$, any $y^{t-1} \in \R^{t-1}$, and any time $t$.
\end{definition}

While finite reading precision is useful in the proof of Theorem \ref{theorem2}, to prove Theorem \ref{theorem1}, we instead require that the source nodes only have access to a finite number of decimal places of the source symbols.
We call this \emph{finite encoding precision}.

\begin{definition} \label{finiteprecisiondefn}
A coding scheme $\mathcal{C}$ with block length $n$ is said to have
\textit{finite encoding precision} $\rho=[\rho_1,\ldots,\rho_k]\in\N^k$ if the encoding function at each source $s_m \in \mathcal{S}$ satisfies
\aln{
f_{s_m,t} ( x_m^n, y^{t-1}) = f_{s_m,t} ( \left\lfloor x_m^n \right\rfloor_{\rho_m}, y^{t-1}),\ \forall\ m\in[1:k]
}
for any $x_m^n \in \R^n$, any $y^{t-1} \in \R^{t-1}$, and any time $t$.
\end{definition}




In order to prove the optimality of coding schemes with finite reading/encoding precision (i.e., that they can come arbitrarily close to achieving any point in the achievable distortion region), our main tool is the following result.


\begin{lemma} \label{densitylemma}
Suppose $\mathbf{Y}=(Y_1,\ldots,Y_i,\ldots,Y_k)$ is a random vector with density $f_{Y_1,\ldots,Y_i,\ldots,Y_k}$. 
Consider some $\rho\in\N$. 
For some $i\in[1:k]$, let $\tilde{Y}_{i}^{(\rho)}=\lfloor Y_{i}\rfloor_{\rho}+U_{\rho}$,  where $U_{\rho}$ is uniformly distributed in $(-2^{-\rho-1},2^{-\rho-1})$ and is independent of $\mathbf{Y}$. 
Then
\bea
\lim_{\rho\rightarrow\infty}f_{Y_1,\ldots,\tilde{Y}_i^{(\rho)},\ldots,Y_k}(y_1,\ldots,y_i,\ldots,y_k)=f_{Y_1,\ldots,Y_i,\ldots,Y_k}(y_1,\ldots,y_i,\ldots,y_k), \ \forall\ i\in[1:k],
\eea
for almost every $(y_1,\ldots,y_i,\ldots,y_k) \in \R^k$.
\end{lemma}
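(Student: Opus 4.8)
The plan is to express the perturbed density as a convolution and then invoke a standard approximate-identity argument. Write $W_\rho = \tilde Y_i^{(\rho)} - Y_i = \lfloor Y_i \rfloor_\rho - Y_i + U_\rho$. Since $\lfloor y \rfloor_\rho - y \in (-2^{-\rho},0]$ and $U_\rho$ is uniform on $(-2^{-\rho-1},2^{-\rho-1})$, we have $|W_\rho| < 2^{-\rho} + 2^{-\rho-1} \to 0$. The key observation is that, \emph{conditioned on $Y_i$}, the variable $\tilde Y_i^{(\rho)}$ has a density: given $Y_i = y$, $\tilde Y_i^{(\rho)}$ is uniform on an interval of length $2^{-\rho}$ centered at $\lfloor y \rfloor_\rho + 0$ (namely $(\lfloor y \rfloor_\rho - 2^{-\rho-1}, \lfloor y \rfloor_\rho + 2^{-\rho-1})$). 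Hence, integrating out $Y_i$ against the joint density $f_{\mathbf Y}$, the joint density of $(Y_1,\ldots,\tilde Y_i^{(\rho)},\ldots,Y_k)$ at $(y_1,\ldots,y_i,\ldots,y_k)$ equals
\al{ \label{convexpr}
f_{Y_1,\ldots,\tilde Y_i^{(\rho)},\ldots,Y_k}(\vec y) = 2^{\rho}\int_{A_\rho(y_i)} f_{Y_1,\ldots,Y_i,\ldots,Y_k}(y_1,\ldots,s,\ldots,y_k)\, ds,
}
where $A_\rho(y_i) = \{ s : \lfloor s \rfloor_\rho \in (y_i - 2^{-\rho-1}, y_i + 2^{-\rho-1}) \}$ is a union of at most two dyadic intervals of total length $2^{-\rho}$, all contained in the window $(y_i - 2^{-\rho}, y_i + 2^{-\rho})$. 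So the right-hand side is an average of $f_{\mathbf Y}(y_1,\ldots,\cdot,\ldots,y_k)$ over a shrinking neighborhood of $y_i$.

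Next I would fix the other coordinates $y_{-i} := (y_1,\ldots,y_{i-1},y_{i+1},\ldots,y_k)$ and apply the Lebesgue differentiation theorem to the one-dimensional function $s \mapsto g_{y_{-i}}(s) := f_{\mathbf Y}(y_1,\ldots,s,\ldots,y_k)$. For almost every $y_i$ this function is locally integrable near $y_i$ and $y_i$ is a Lebesgue point, so the average over $A_\rho(y_i)$ — which sits inside an interval of radius $2^{-\rho}$ around $y_i$ and has relative measure bounded below (length $2^{-\rho}$ inside length $2\cdot 2^{-\rho}$, i.e.\ density $\tfrac12$) — converges to $g_{y_{-i}}(y_i) = f_{\mathbf Y}(\vec y)$. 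To upgrade ``for a.e.\ $y_i$ for a.e.\ $y_{-i}$'' to ``for a.e.\ $\vec y \in \R^k$'', I would invoke Fubini: the set of $(y_i, y_{-i})$ where convergence fails, being the countable intersection over rational $\rho$-tails of measurable sets, is measurable, has $y_i$-sections of measure zero for a.e.\ $y_{-i}$, hence has $k$-dimensional Lebesgue measure zero. Finally I take the limit over $\rho$ ranging through all integers (not just a subsequence), which is fine since \eqref{convexpr} and the Lebesgue-point estimate hold for every sufficiently large $\rho$ simultaneously.

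The main obstacle is the bookkeeping around $A_\rho(y_i)$: the floor operation makes $A_\rho(y_i)$ a slightly awkward set (it is $\{s : |\lfloor s\rfloor_\rho - y_i| < 2^{-\rho-1}\}$, which for generic $y_i$ is a single dyadic interval but can be a disjoint pair, and its precise location jitters with $\rho$). The point to get right is the uniform containment $A_\rho(y_i) \subseteq (y_i - 2^{-\rho}, y_i + 2^{-\rho})$ together with $|A_\rho(y_i)| = 2^{-\rho}$, which is exactly what is needed so that the Lebesgue-point limit applies with a fixed comparison density ($\tfrac12$) independent of $\rho$ and $y_i$; without the lower bound on relative measure one could not conclude. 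Everything else — the conditional-density computation giving \eqref{convexpr}, and the Fubini argument transferring the a.e.\ statement to $\R^k$ — is routine. An alternative, if one prefers to avoid Lebesgue points, is to note that \eqref{convexpr} exhibits $f_{Y_1,\ldots,\tilde Y_i^{(\rho)},\ldots,Y_k}$ as $f_{\mathbf Y}$ convolved in the $i$-th coordinate with a kernel supported in $(-2^{-\rho},2^{-\rho})$ of total mass one, i.e.\ an approximate identity, and convolution with an approximate identity converges to the original function in $L^1_{\mathrm{loc}}$ and hence along a subsequence a.e.; but the Lebesgue-point route gives the full a.e.\ limit directly and is cleaner here.
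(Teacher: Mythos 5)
Your proposal is correct and follows essentially the same route as the paper's Appendix~B proof: both compute the joint density of $(Y_1,\ldots,\tilde Y_i^{(\rho)},\ldots,Y_k)$ as a local average $2^\rho\int_{A_\rho(y_i)} f_{\mathbf Y}(y_1,\ldots,s,\ldots,y_k)\,ds$ over a window of length $2^{-\rho}$ near $y_i$, and then invoke a Lebesgue-differentiation argument (the paper defers this last step to a cited lemma, while you carry it out). One minor slip in your bookkeeping: $A_\rho(y_i)$ is at most a single dyadic interval (an open interval of length $2^{-\rho}$ can contain at most one multiple of $2^{-\rho}$, never two), and it sits inside $(y_i-2^{-\rho-1},\,y_i+3\cdot 2^{-\rho-1})$ rather than $(y_i-2^{-\rho},y_i+2^{-\rho})$; neither correction affects the lower bound on relative measure needed for the Lebesgue-point conclusion.
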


This lemma allows us to take a coding scheme that does not have finite precision, and consider finer and finer discretizations of its encoding functions, in a way that the resulting distortion tuple approaches that of the original coding scheme.

Another technical tool that is useful in proving the optimality of coding schemes with finite encoding precision is the following lemma.
Intuitively, it allows us to view our stochastic network (as defined in Section \ref{sec::problem}) as a collection of deterministic networks, which facilitates the bounding of the resulting distortion.

\begin{lemma}[Functional Representation Lemma] \label{frlemma}
For any two random vectors $Y$ and $U$, there exist a (deterministic, measurable) function $h$ and a random vector $Q$, independent of $U$, for which the pair $(h(U,Q),U)$ has the same distribution as $(Y,U)$.
\end{lemma}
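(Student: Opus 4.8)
This is a classical fact, usually attributed to Willems or appearing in El Gamal–Kim. The plan is to construct $Q$ and $h$ explicitly from a single auxiliary uniform random variable. First I would reduce to the case where $Y$ is a scalar real random variable; once the scalar case is handled, the vector case follows by applying the scalar construction coordinate-by-coordinate to $Y$ (peeling off one coordinate of $Y$ at a time conditioned on $U$ and the previously constructed coordinates), so it suffices to build, for any pair $(Y,U)$ with $Y\in\R$, a uniform-$[0,1]$ random variable $Q$ independent of $U$ and a measurable $h$ with $(h(U,Q),U)\stackrel{d}{=}(Y,U)$.

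For the scalar case, the key step is the conditional quantile transform. For each value $u$, let $F_{Y|U}(\cdot\mid u)$ denote the conditional CDF of $Y$ given $U=u$, and let $F^{-1}_{Y|U}(q\mid u) = \inf\{y : F_{Y|U}(y\mid u)\ge q\}$ be its generalized inverse. Set $h(u,q) = F^{-1}_{Y|U}(q\mid u)$. The plan is then: (i) take $Q$ uniform on $(0,1)$ and \emph{independent of $U$} — this is where the ``independent of $U$'' clause is secured, simply by fiat of the construction; (ii) verify that, conditionally on $U=u$, the random variable $h(u,Q)=F^{-1}_{Y|U}(Q\mid u)$ has CDF exactly $F_{Y|U}(\cdot\mid u)$, which is the standard inverse-transform sampling fact and holds for arbitrary (not necessarily continuous) conditional distributions because of the properties of the generalized inverse; (iii) conclude that the joint law of $(h(U,Q),U)$ equals that of $(Y,U)$ by integrating over $u$. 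Measurability of $(u,q)\mapsto F^{-1}_{Y|U}(q\mid u)$ follows from measurability of $u\mapsto F_{Y|U}(y\mid u)$ for each $y$ (a regular conditional distribution exists since we are on $\R^k$, a standard Borel space) together with the fact that generalized inverses of monotone functions are obtained by countable operations.

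For the vector reduction: write $Y=(Y_1,\ldots,Y_m)$, and construct independent uniforms $Q_1,\ldots,Q_m$, all independent of $U$. Define $\hat Y_1 = h_1(U,Q_1)$ via the scalar construction applied to $(Y_1,U)$; then define $\hat Y_2 = h_2(U,\hat Y_1, Q_2)$ via the scalar construction applied to $(Y_2,(U,Y_1))$ but with $Y_1$ replaced by the already-constructed $\hat Y_1$; and so on, at stage $j$ using the conditional law of $Y_j$ given $(U,Y_1,\ldots,Y_{j-1})$ evaluated at $(U,\hat Y_1,\ldots,\hat Y_{j-1})$. An induction on $j$ shows $(\hat Y_1,\ldots,\hat Y_j, U)\stackrel{d}{=}(Y_1,\ldots,Y_j,U)$. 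Finally set $Q=(Q_1,\ldots,Q_m)$, which is independent of $U$, and let $h$ be the composed map $U,Q \mapsto (\hat Y_1,\ldots,\hat Y_m)$.

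The only genuinely delicate point — and the one I would spend the most care on — is the existence and measurability of the regular conditional distribution $F_{Y|U}(\cdot\mid u)$ and the joint measurability of the quantile map $h$; everything else is routine. Since all random vectors here live in Euclidean space, which is a standard Borel space, a regular conditional probability exists, so this obstacle is surmountable but must be stated carefully. I would also remark that no continuity or density assumptions on $Y$ or $U$ are needed, which is exactly why this lemma is useful for the stochastic-network-to-deterministic-network reduction invoked in the proofs of Theorems~\ref{theorem1} and~\ref{theorem2}.
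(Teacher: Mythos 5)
Your proposal matches the paper's proof: both establish the scalar case via the conditional quantile transform (a uniform $Q$ independent of $U$ and the generalized inverse $h(u,q)=F^{-1}_{Y|U}(q\mid u)$), and both extend to the vector case by induction, peeling off one coordinate of $Y$ at a time and conditioning on $U$ together with the already-reconstructed coordinates. Your added remarks about existence of a regular conditional distribution on standard Borel spaces and joint measurability of the quantile map are a welcome refinement of a point the paper's proof leaves implicit.
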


Lemmas \ref{boundedoutputlemma}, \ref{densitylemma} and \ref{frlemma} allow us to prove the essential optimality of finite precision coding schemes, stated in the next two lemmas, whose proofs are presented in Appendices \ref{appendixD} and \ref{appendixE}.

\begin{lemma}[Finite Encoding Precision Lemma]\label{finiteprecisionlemma}
Suppose the distortion tuple $(D_1,\ldots,D_k)$ is achievable over the $(k,N)$-memoryless
network. 
Then, for any $\epsilon>0$, there exists a coding scheme with finite encoding precision that achieves distortion tuple
$(D_1+\epsilon,\ldots,D_k+\epsilon)$.
\end{lemma}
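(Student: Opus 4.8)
The plan is to start with an arbitrary coding scheme $\mathcal{C}$ achieving $(D_1,\ldots,D_k)$ and, source node by source node, replace the raw source symbols $X_m^n$ fed into the encoding functions $f_{s_m,t}$ with quantized versions, while controlling the increase in distortion. First I would invoke Lemma \ref{boundedoutputlemma} to assume, at the cost of an $\epsilon/2$ penalty in each coordinate, that $\mathcal{C}$ has bounded outputs: the decoding functions satisfy $\|g_{d_j}(\mathbf{y})\|_\infty \le M$ for all $\mathbf{y}$. This bound is what will later let us invoke dominated convergence. The remaining job is to show that quantizing the source inputs to the encoders costs at most an additional $\epsilon/2$.

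The key device is a dithered-quantization argument built on Lemma \ref{densitylemma}. For a fixed precision $\rho$ and a fixed source index $i$, consider replacing $X_i^n$ at the encoder of $s_i$ by $\lfloor X_i^n \rfloor_\rho$. To analyze this, I would introduce an independent dither $U_\rho^n$ (each coordinate uniform on $(-2^{-\rho-1},2^{-\rho-1})$) and compare the following two systems: (a) the original system driven by $X_i^n$, and (b) a system in which the encoder at $s_i$ sees $\tilde X_i^{(\rho),n} = \lfloor X_i^n \rfloor_\rho + U_\rho^n$. By Lemma \ref{densitylemma} (applied coordinatewise, together with the assumption — which we may make, since the worst-case source in Theorem \ref{theorem1} is the one we are proving achievability \emph{from}, i.e.\ a Gaussian source, which has a density, or more generally handled by a preliminary smoothing), the joint density of $\tilde X_i^{(\rho),n}$ and everything else converges to the joint density of $X_i^n$ and everything else as $\rho \to \infty$, pointwise a.e. Since the distortion functional, evaluated on system (b), is a bounded continuous functional of the source realization and the channel noise (bounded because of the bounded-output property and because the source has finite second moment, so the quadratic distortion is dominated by an integrable envelope of the form $c(\|X_m^n\|^2 + M^2)$), Scheffé's lemma together with dominated convergence gives that the distortion of system (b) converges to that of system (a) as $\rho \to \infty$. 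Hence for $\rho$ large enough, system (b) achieves $(D_1 + \epsilon/2 + \delta, \ldots)$ for arbitrarily small $\delta$. Finally, in system (b) the dither $U_\rho^n$ is known to nobody but can be \emph{absorbed}: there exists a realization $u_\rho^n$ of the dither for which the resulting deterministic offset does no worse than the average, so we may hard-wire $u_\rho^n$ into the encoding function at $s_i$, obtaining a scheme whose encoder at $s_i$ genuinely depends on $X_i^n$ only through $\lfloor X_i^n \rfloor_\rho$ (the fixed shift $u_\rho^n$ is just part of the function). Iterating this over $i = 1, \ldots, k$, each step costing an arbitrarily small increment, yields a scheme with finite encoding precision $\rho = [\rho_1,\ldots,\rho_k]$ achieving $(D_1 + \epsilon, \ldots, D_k + \epsilon)$.

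The Functional Representation Lemma (Lemma \ref{frlemma}) enters to make the dominated-convergence step rigorous: it lets us write each network output $Y_j$ (and hence the whole cascade of relay/destination operations) as a deterministic measurable function of the source inputs and an auxiliary randomness vector $Q$ independent of the inputs, so that the distortion becomes an honest expectation of a fixed measurable functional $\Phi(X_1^n,\ldots,X_k^n, Q)$, and replacing $X_i^n$ by $\tilde X_i^{(\rho),n}$ is literally plugging a convergent-in-density sequence into $\Phi$. The main obstacle I anticipate is exactly this measure-theoretic bookkeeping: one must check that the relevant joint density (of the quantized-plus-dithered source together with $Q$ and the other sources) exists and converges — Lemma \ref{densitylemma} handles the one-coordinate-at-a-time version, but care is needed because $Q$ and the other $X_j^n$ need not have densities, so the convergence must be argued in the product form "density in $X_i^n$ times arbitrary law of the rest," and then integrated against the bounded functional $\Phi$. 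Once that is set up, the domination is routine (finite second moments of the source plus bounded decoder outputs), and the "absorb the dither by choosing a good realization" step is a one-line averaging argument.
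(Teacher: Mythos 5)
Your overall route -- bound the decoder outputs via Lemma~\ref{boundedoutputlemma}, represent the channel deterministically via Lemma~\ref{frlemma}, dither-quantize the encoder inputs, invoke Lemma~\ref{densitylemma} and dominated convergence, then derandomize by fixing a good dither realization -- is exactly the paper's route (Appendix~\ref{appendixD}). The two cosmetic differences (you iterate one source at a time; the paper dithers all sources simultaneously and takes nested limits $\rho_1\to\infty,\ldots,\rho_k\to\infty$) are immaterial. However, the pivotal convergence step in your argument has a genuine gap.

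You assert that ``the distortion of system (b) converges to that of system (a)'' because the distortion functional is a ``bounded continuous functional of the source realization and channel noise'' and one may plug in a convergent-in-density sequence. There are two problems. First, the encoding functions $f_{s_m,t}$ are arbitrary measurable maps, so the map $(\vec x_1,\ldots,\vec x_k,\vec q)\mapsto\|\vec x_m - g_{d_m}(F_m(\vec x_1,\ldots,\vec x_k,\vec q))\|^2$ is not continuous; you cannot appeal to continuity here (indeed, establishing a usable almost-everywhere continuity is the entire purpose of Lemma~\ref{continuouslemma}, which only applies \emph{after} one has a finite-precision scheme -- invoking it now would be circular). Second, and more fundamentally, the distortion of system (b) is $\E\|\vec X_m - g_{d_m}(F_m(\vec{\tilde X}_1^{(\rho)},\ldots,\vec{\tilde X}_k^{(\rho)},\vec Z))\|^2$: the \emph{unquantized} $\vec X_m$ appears as the distortion target while $\vec{\tilde X}_m^{(\rho)}$ appears inside the encoder. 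The pair $(\vec X_m,\vec{\tilde X}_m^{(\rho)})$ has a singular joint law (it is concentrated near the diagonal, with no density on $\R^{2n}$), so the density convergence supplied by Lemma~\ref{densitylemma} -- which concerns only the vector $(\vec{\tilde X}_1^{(\rho)},\ldots,\vec{\tilde X}_k^{(\rho)},\vec Z)$ -- does not let you conclude convergence of this mixed expectation. The paper closes exactly this gap by decomposing the distortion as $\|\vec X_m-\vec{\tilde X}_m\|^2 + \|\vec{\tilde X}_m - g_{d_m}(\cdots)\|^2 + \text{cross term}$: the first term is deterministically $O(2^{-2\rho_m})$, the second term \emph{is} a functional of $(\vec{\tilde X}_1^{(\rho)},\ldots,\vec{\tilde X}_k^{(\rho)},\vec Z)$ alone and so converges by Scheff\'e and dominated convergence, and the cross term vanishes by Cauchy--Schwarz. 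You need some such decomposition to make your convergence claim correct; without it the argument does not go through.
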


%
%
%

\begin{lemma}[Finite Reading Precision Lemma]\label{finiteprecisionlemma2}
Suppose the distortion tuple $(D_1,\ldots,D_k)$ is achievable over the $(k,N)$-AWGN network. 
Then, for any $\epsilon>0$, there exists a coding scheme with finite reading precision that achieves distortion tuple
$(D_1+\epsilon,\ldots,D_k+\epsilon)$.
\end{lemma}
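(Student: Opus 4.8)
The plan is to start from an arbitrary coding scheme $\C$ of block length $n$ achieving $(D_1,\ldots,D_k)$ over the $(k,N)$-AWGN network, and to show that replacing each node's read-in of the channel outputs $y^{t-1}$ by its $\rho$-bit truncation $\lfloor y^{t-1}\rfloor_\rho$ changes the induced joint distribution of $(\mathbf X_1,\ldots,\mathbf X_k,\hat{\mathbf X}_1,\ldots,\hat{\mathbf X}_k)$ only slightly when $\rho$ is large, and hence changes the distortion only slightly. By Lemma \ref{boundedoutputlemma} I may first assume, at the cost of an extra $\ep/2$ in each coordinate, that $\C$ has \emph{bounded outputs}: $\|g_{d_j}(\cdot)\|_\infty\le M$ for all $j$. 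This is the key reduction that will let me pass limits through expectations via the Dominated Convergence Theorem, since then $\dist(X_m[t],\hat X_m[t])=(X_m[t]-\hat X_m[t])^2\le 2X_m[t]^2+2M^2$, which is integrable (the source has finite covariance) and independent of $\rho$.

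Next I would introduce, for each reading precision $\rho$, a dithered version of the truncation: in Lemma \ref{densitylemma} the quantity $\tilde Y_i^{(\rho)}=\lfloor Y_i\rfloor_\rho+U_\rho$ with $U_\rho$ uniform on $(-2^{-\rho-1},2^{-\rho-1})$ has a density converging pointwise a.e.\ to that of $Y_i$. The point of working with the AWGN network here is exactly that the relevant channel-output vectors \emph{have densities} (they are inputs plus nondegenerate Gaussian noise), so Lemma \ref{densitylemma} applies. I would build an intermediate scheme $\C'$ in which each node, upon receiving $y^{t-1}$, first forms the dithered truncation $\tilde y^{t-1}$ (the dither randomness being shared/available as common randomness, or absorbed into the node via the Functional Representation Lemma \ref{frlemma}) and then applies the original $f_{i,t}$. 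Because $\tilde y^{t-1}$ has a density converging to that of $y^{t-1}$, and the downstream maps $f_{i,t}, g_{d_m}$ are fixed measurable functions with the bounded-output property, the joint law of the reconstructions under $\C'$ converges to that under $\C$; combined with the dominating function above and Dominated Convergence, the distortion under $\C'$ tends to that under $\C$ as $\rho\to\infty$. Finally I remove the dither: since $\lfloor \tilde y^{t-1}\rfloor_\rho = \lfloor y^{t-1}\rfloor_\rho$ would not literally hold, I would instead argue that the \emph{dithered} scheme already is (or can be converted into) a finite reading precision scheme — the node's behavior depends on $y^{t-1}$ only through $\lfloor y^{t-1}\rfloor_\rho$ together with independent dither, which is precisely the structure allowed by Definition \ref{finiteprecisiondefn2} once the dither is treated as local randomness. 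Picking $\rho$ large enough that the distortion increase is below $\ep/2$ per coordinate completes the argument.

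The main obstacle I anticipate is the \emph{sequential} nature of the network: the channel output $Y_i^{t-1}$ seen by node $i$ at time $t$ depends on the (already truncated) transmissions of all nodes at earlier times, so the perturbation introduced by truncation propagates and compounds across the $n$ time steps and across nodes. I would handle this by an induction on $t$ (and a fixed topological-type ordering of node activations within each time slot), showing that at each stage the joint density of the vector of all signals produced so far converges, using Lemma \ref{densitylemma} one coordinate at a time and the continuity of composition with fixed measurable maps. A secondary technical point is justifying that the relevant intermediate vectors indeed possess densities at every stage — this again leans on the additive nondegenerate Gaussian noise in the AWGN model, and is where the hypothesis "$(k,N)$-AWGN network" (rather than an arbitrary memoryless network) is essential; it is also why the analogous Lemma \ref{finiteprecisionlemma} for general memoryless networks must instead use finite \emph{encoding} precision and the Functional Representation Lemma rather than finite reading precision.
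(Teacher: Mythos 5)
Your proposal is correct and follows essentially the same route as the paper's proof: reduce to bounded outputs via Lemma \ref{boundedoutputlemma}, form a dithered $\rho$-bit reading at each node, use Lemma \ref{densitylemma} (factoring the conditional joint density over time via the chain rule, which is exactly how you propose to handle the sequential dependence) to get pointwise density convergence and hence convergence in total variation by Scheff\'e, apply the Dominated Convergence Theorem with the bound $(X_m[t]-\hat X_m[t])^2\le 2X_m[t]^2+2M^2$, and finally derandomize the dither by picking its best realization. The one point worth making explicit in your write-up is that the passage from "density of $\tilde Y$ converges to density of $Y$" to "distribution of the reconstruction converges" goes through total-variation convergence (Scheff\'e), since the downstream encoding/decoding maps are only measurable, not continuous.
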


\noindent {\bf Remark.}\;We point out that, from the proofs of Lemmas \ref{boundedoutputlemma}, \ref{finiteprecisionlemma} and \ref{finiteprecisionlemma2}, it can be seen that there exists a \emph{single} coding scheme that has both bounded outputs and finite encoding/reading precision and achieves distortion tuple $(D_1+\epsilon,\ldots,D_k+\epsilon)$.

\vspace{3mm}

In order to state the next result, we use the following definition.

\begin{definition}
A function $f : \R^a \to \R^b$ is locally constant at a point $x \in \R^a$ if it is constant in some neighborhood of $x$.
\end{definition}

The importance of finite encoding/reading precision is expressed in the following lemma. 

\begin{lemma}[Continuity Lemma] \label{continuouslemma}
If a function $f : \R^a \to \R^b$ satisfies
\aln{
f( \bv x) = f( \left\lfloor \bv x \right\rfloor_{\rho})
}
for some $\rho \in \N$ and any $\bv x \in \R^a$, then $f$ is locally constant (and, thus, continuous) almost everywhere.
\end{lemma}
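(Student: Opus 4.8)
The plan is to prove the Continuity Lemma directly from the definition of local constancy, using the structure of the quantization map $\lfloor \cdot \rfloor_\rho$. The key observation is that $\lfloor \cdot \rfloor_\rho : \R^a \to \R^a$ has countable image: every coordinate of $\lfloor \bv x \rfloor_\rho$ lies in $2^{-\rho}\Z$, so $\lfloor \R^a \rfloor_\rho \subseteq (2^{-\rho}\Z)^a$, which is countable. Moreover, for each point $\bv z \in (2^{-\rho}\Z)^a$, the preimage $\lfloor \cdot \rfloor_\rho^{-1}(\bv z)$ is exactly the half-open box $B_{\bv z} = \prod_{i=1}^a [z_i, z_i + 2^{-\rho})$, since $\lfloor x_i \rfloor_\rho = z_i$ iff $z_i \le x_i < z_i + 2^{-\rho}$.

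Given the hypothesis $f(\bv x) = f(\lfloor \bv x \rfloor_\rho)$, the function $f$ is constant on each box $B_{\bv z}$, namely equal to $f(\bv z)$. Now fix any point $\bv x$ lying in the \emph{interior} of its box, i.e., with $z_i < x_i < z_i + 2^{-\rho}$ for $\bv z = \lfloor \bv x \rfloor_\rho$; equivalently $\bv x$ has no coordinate in $2^{-\rho}\Z$. Then there is an open neighborhood of $\bv x$ contained in $B_{\bv z}$, and $f$ is constant on it, so $f$ is locally constant at $\bv x$. The set of points that are \emph{not} interior to their box is contained in the union $\bigcup_{i=1}^a \{ \bv x : x_i \in 2^{-\rho}\Z \}$, which is a countable union of hyperplanes, each of Lebesgue measure zero in $\R^a$; hence this exceptional set has measure zero. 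Therefore $f$ is locally constant almost everywhere, and since a locally constant function is trivially continuous at every point where it is locally constant, $f$ is continuous almost everywhere.

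Concretely, I would structure the proof as: (i) record that $\lfloor \cdot \rfloor_\rho^{-1}(\bv z)$ is the box $B_{\bv z}$ and that these boxes partition $\R^a$ over $\bv z \in (2^{-\rho}\Z)^a$; (ii) deduce from the hypothesis that $f \equiv f(\bv z)$ on $B_{\bv z}$; (iii) observe that any $\bv x$ with all coordinates outside $2^{-\rho}\Z$ has an open neighborhood inside its box, so $f$ is locally constant there; (iv) bound the complement by $a$ hyperplanes and invoke that a finite (indeed countable) union of measure-zero hyperplanes is measure zero. The last sentence of the lemma ("and, thus, continuous") follows because local constancy at a point is a strictly stronger condition than continuity at that point.

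I do not anticipate a genuine obstacle here — the lemma is essentially bookkeeping about the fibers of the floor map. The only mild subtlety is being careful that the exceptional set is the one where some coordinate is \emph{on} a grid hyperplane (rather than, say, all coordinates), and that this is indeed Lebesgue-null in $\R^a$; this is immediate since each hyperplane $\{x_i = c\}$ is null and there are countably many of them. No deeper machinery (not even the earlier lemmas) is required for this particular statement.
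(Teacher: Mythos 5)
Your proof is correct and follows essentially the same approach as the paper's: partition $\R^a$ into the dyadic half-open boxes on which $f$ is constant, and observe that the exceptional (non-interior) set has Lebesgue measure zero. Your characterization of the exceptional set as the union over $i$ of the grid hyperplanes $\{\bv x : x_i \in 2^{-\rho}\Z\}$ is a bit cleaner than the paper's informal reference to ``boundaries of disjoint bounded rectangles,'' but the argument is the same.
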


Therefore, coding schemes with finite precision have encoding functions that are continuous almost everywhere.
As a result, we may start off the proofs of Theorems \ref{theorem1} and \ref{theorem2} with a coding scheme (designed for a Gaussian model) that has bounded outputs and finite encoding precision (in the case of Theorem \ref{theorem1}) or finite reading precision (in the case of Theorem \ref{theorem2}).
The continuity of these functions allows us to bound the distortion achieved by the coding scheme constructed through the application of the linear transformation $\vec Q$ and the interleaving scheme, and show that it converges to the distortion of the original coding scheme as $b \to \infty$,  by invoking the results pertaining to weak convergence to the Gaussian distribution in the transform domain.

\section{Worst Case Source for a given Network}
\label{sec::worstsource}
%

\begin{figure}[t]
        \centering
        \subfigure[Overview of the proof of Theorem \ref{theorem1}.]{
\scalebox{0.44}{\input{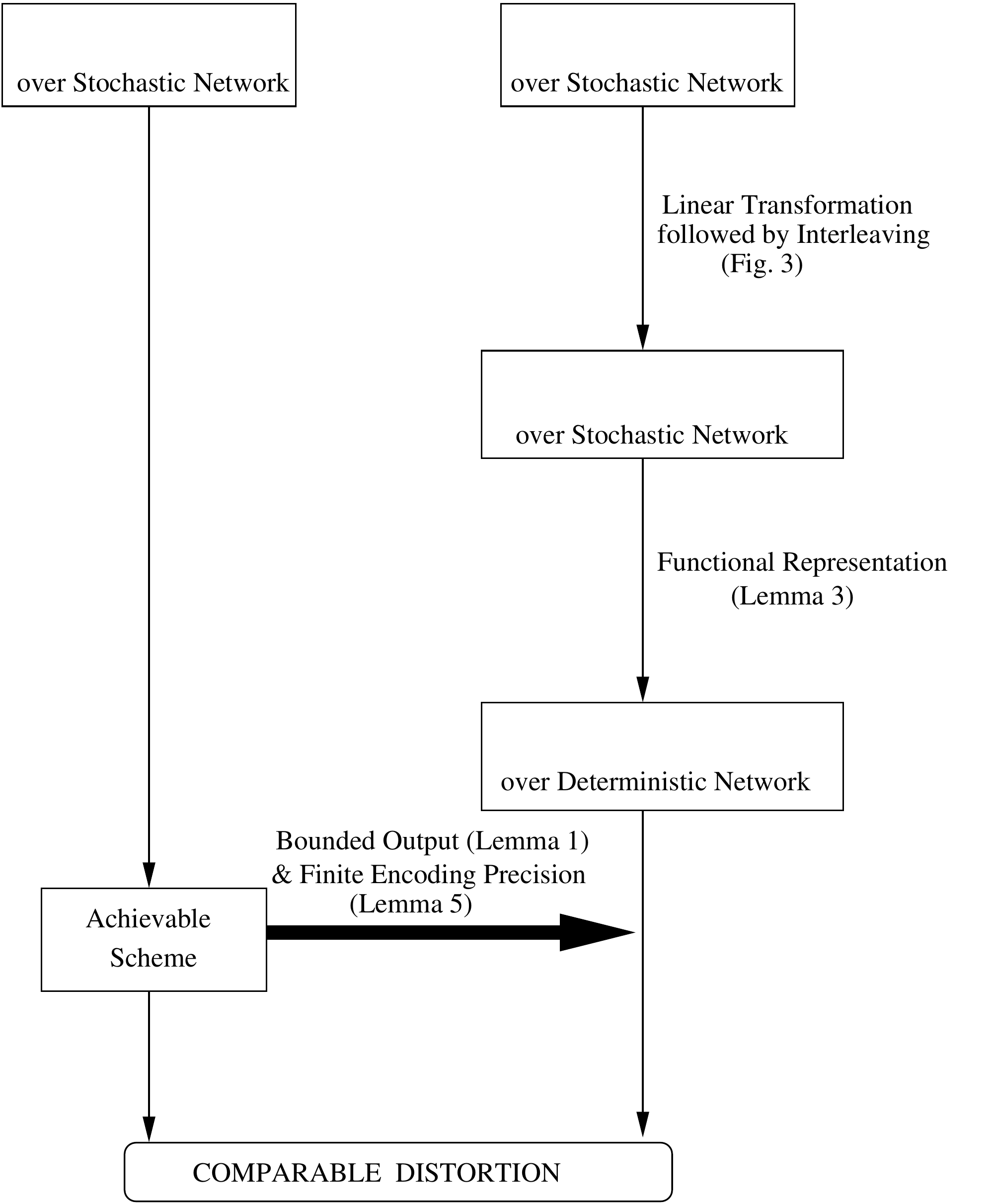_t}}

\label{fig::worstsource}
}
\subfigure[Overview of the proof of Theorem \ref{theorem2}.]{
\scalebox{0.44}{\input{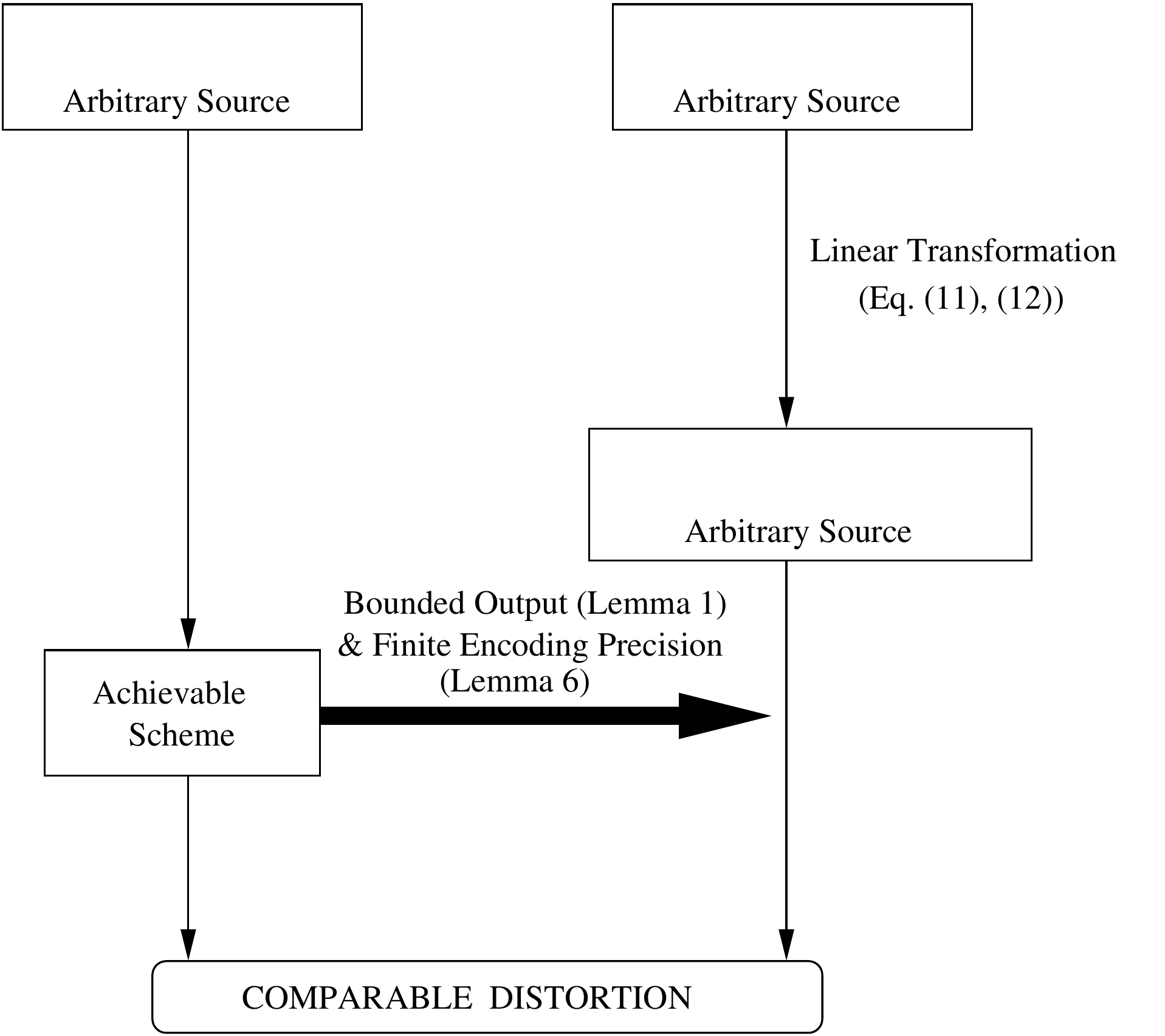_t}}
\label{fig::worstnoise}
}
        \caption{Flow Diagrams showing the overview of proofs of Theorems \ref{theorem1} and \ref{theorem2}}
        \label{fig:overview}
\end{figure}

%

In this section, we prove Theorem \ref{theorem1}; that is, that among all the sources with a given covariance, the Gaussian source is least compressible over a fixed memoryless network.
 Note that proving Theorem \ref{theorem1} is equivalent to proving the following theorem.
\begin{theorem}[Equivalent to Theorem \ref{theorem1}] \label{thm1b}
If a distortion tuple $(D_1,\ldots,D_k)$ is $\qd$-achievable when $({X}_1,\ldots,{X}_k) $ is jointly Gaussian with covariance matrix ${\mathbf{K}}$, then for any $\epsilon>0$, the distortion tuple $(D_1+\epsilon,\ldots,D_k+\epsilon)$ is $\xi$-achievable when  $({X}_1,\ldots,{X}_k) $ has an arbitrary distribution with covariance matrix ${\mathbf{K}}$.
\end{theorem}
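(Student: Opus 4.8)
\textbf{Proof proposal for Theorem \ref{thm1b}.}

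\emph{Overall strategy and reductions.} The plan is to feed the coding scheme $\mathcal{C}$ designed for the Gaussian source a $\mathbf{Q}$-transformed, interleaved version of the arbitrary source: by Lemma \ref{convlemma} this transformed input is close (in distribution) to Gaussian, so $\mathcal{C}$ should incur almost the same distortion. Two preliminary reductions make this rigorous. First, we may assume the arbitrary source is zero-mean, since the source encoders can subtract, and the destination decoders add back, the (known, fixed) mean vector without changing the covariance or the distortion. Second, by Lemma \ref{finiteprecisionlemma}, Lemma \ref{boundedoutputlemma}, and the Remark following Lemma \ref{finiteprecisionlemma2}, we may assume $\mathcal{C}$ has \emph{finite encoding precision} $\rho$ and \emph{bounded outputs} $\|g_{d_m}(\cdot)\|_\infty\le M$, at the cost of enlarging its Gaussian distortion from $(D_1,\ldots,D_k)$ to $(D_1+\epsilon/2,\ldots,D_k+\epsilon/2)$; denote the resulting Gaussian distortion of $\mathcal{C}$ by $(D_1^\ast,\ldots,D_k^\ast)$ with $D_m^\ast\le D_m+\epsilon/2$.

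\emph{Construction of the new scheme.} We build $\tilde{\mathcal{C}}$ of block length $nb$. Given $\mathbf{X}_m\in\R^{nb}$, for each $t\in\{0,\ldots,n-1\}$ apply $\mathbf{Q}$ from \eqref{qdef} to the sub-block $(X_m[tb],\ldots,X_m[tb+b-1])$ to obtain $\tilde X_m^{(\ell)}[t]$, $\ell\in\{0,\ldots,b-1\}$, as in \eqref{tildex}. We then run $b$ ``copies'' of $\mathcal{C}$: copy $\ell$ uses the $n$ channel uses indexed $\ell n,\ldots,\ell n+n-1$ and the length-$n$ virtual source block $(\tilde X_m^{(\ell)}[0],\ldots,\tilde X_m^{(\ell)}[n-1])_{m=1}^k$ (the source nodes can compute this from $\mathbf{X}_m$; relays and destinations use only the feedback of the current copy), producing a reconstruction $\hat{\tilde X}_m^{(\ell)}[0:n-1]$ at $d_m$. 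Finally $d_m$ applies $\mathbf{Q}^{-1}$ to $(\hat{\tilde X}_m^{(0)}[t],\ldots,\hat{\tilde X}_m^{(b-1)}[t])$ to output $(\hat X_m[tb],\ldots,\hat X_m[tb+b-1])$. All operations are causal and measurable, so $\tilde{\mathcal{C}}$ is a legitimate scheme per Definition \ref{codingdefn}. Since the $n$ sub-blocks $\{(X_1[tb+j],\ldots,X_k[tb+j])\}_{j=0}^{b-1}$, $t=0,\ldots,n-1$, are i.i.d.\ across $t$, the virtual source of copy $\ell$ is i.i.d.\ across time; and, by memorylessness, copy $\ell$ sees a fresh $n$-use channel independent of everything else. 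Hence $D_m^{(b,\ell)}$, the per-symbol distortion of copy $\ell$, equals \emph{exactly} the distortion of $\mathcal{C}$ fed an i.i.d.\ source of per-symbol law $\mathcal{L}\big(\tilde X_1^{(\ell)}[0],\ldots,\tilde X_k^{(\ell)}[0]\big)$. Applying unitarity of $\mathbf{Q}$ sub-block by sub-block, the per-symbol distortion of $\tilde{\mathcal{C}}$ is $\tilde D_m=\frac1b\sum_{\ell=0}^{b-1}D_m^{(b,\ell)}$.

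\emph{Convergence of the distortion.} It remains to prove $D_m^{(b,\ell_b)}\to D_m^\ast$ as $b\to\infty$, uniformly over $\ell_b\in\{0,\ldots,b-1\}$; then $\tilde D_m\to D_m^\ast$, so large $b$ gives $\tilde D_m\le D_m^\ast+\epsilon/2\le D_m+\epsilon$. By Lemma \ref{convlemma}, for any sequence $\ell_b$ the per-symbol virtual source law tends weakly to $\mathcal{N}(\mathbf{0},\mathbf{K})$, hence the length-$n$ virtual block $\mathbf{V}_b$ satisfies $\mathbf{V}_b\cond\mathbf{V}_\infty$, where $\mathbf{V}_\infty$ is the Gaussian source block. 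Realizing the memoryless $n$-use channel by Lemma \ref{frlemma} as $Y=h(U,W)$ with $W$ independent of all inputs, we get $(\mathbf{V}_b,W)\cond(\mathbf{V}_\infty,W)$. For each fixed $W=w$, the map $\mathbf{v}\mapsto\hat{\mathbf{V}}$ (push $\mathbf{v}$ through $\mathcal{C}$ and the channel) is locally constant at Lebesgue-a.e.\ $\mathbf{v}$, by induction on the channel uses of the copy: the source encoders factor through $\lfloor\cdot\rfloor_\rho$, which is locally constant a.e.\ (Lemma \ref{continuouslemma}); composing any measurable map (relay/destination encoders, decoders, $h(\cdot,w)$) with a locally constant map stays locally constant; and a vector of locally constant maps is locally constant. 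Since $\mathbf{V}_\infty$ is absolutely continuous, $\hat{\mathbf{V}}$ is a.s.\ a continuous function of $(\mathbf{V}_\infty,W)$, and it is bounded by $M$. Writing $D_m^{(b,\ell_b)}=\frac1n\sum_t\E\big[(\tilde X_m^{(\ell_b)}[t])^2\big]-\frac2n\sum_t\E\big[\tilde X_m^{(\ell_b)}[t]\,\hat{\tilde X}_m^{(\ell_b)}[t]\big]+\frac1n\sum_t\E\big[(\hat{\tilde X}_m^{(\ell_b)}[t])^2\big]$, the first term equals $\mathbf{K}_{mm}$ for all $b,\ell_b$ (the zero-mean reduction is used here), the third converges by the Portmanteau theorem and the $M$-bound (bounded convergence) to its Gaussian value, and the cross term converges by a truncation argument, since $\{\tilde X_m^{(\ell_b)}[t]\}_b$ has second moment $\mathbf{K}_{mm}$ and is therefore uniformly integrable while $\hat{\tilde X}_m^{(\ell_b)}[t]$ stays $\le M$. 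Summing, $D_m^{(b,\ell_b)}\to D_m^\ast$; because this holds for every sequence $\ell_b$, the convergence is uniform in $\ell_b$, completing the proof.

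\emph{Main obstacle.} The crux is the last step: converting weak convergence in the transform domain (Lemma \ref{convlemma}) into convergence of a \emph{quadratic} distortion. This needs (a) the end-to-end source-to-reconstruction map to be continuous almost everywhere — exactly why we first impose finite encoding precision and then invoke the Continuity Lemma, with Lemma \ref{frlemma} used to freeze the channel's randomness — and (b) a uniform-integrability argument to tame the unbounded quadratic term, which is where the bounded-output reduction and the exact preservation of second moments by the unitary $\mathbf{Q}$ enter. A secondary subtlety is obtaining the convergence \emph{uniformly} over the interleaving index $\ell_b$, since $\tilde{\mathcal{C}}$ averages over all $\ell$; this is handled by the fact that Lemma \ref{convlemma} is stated for an arbitrary sequence $\ell_b$.
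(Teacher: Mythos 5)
Your proposal is correct and follows essentially the same architecture as the paper's proof: reduce to a finite-encoding-precision, bounded-output scheme (Lemmas \ref{boundedoutputlemma}, \ref{finiteprecisionlemma}), apply $\mathbf{Q}$ and interleave to form $b$ virtual copies, invoke the Functional Representation Lemma to freeze the channel randomness, use the Continuity Lemma to get a.e.\ continuity of the end-to-end map, and pass from weak convergence (Lemma \ref{convlemma}) to convergence of the quadratic distortion using the $M$-bound and the exact second-moment preservation $\E\|\tilde{\mathbf X}_m^{(\ell_b)}\|^2 = n\mathbf{K}_{mm}$. Where you genuinely differ is only in the final limit step: the paper applies the sandwich version of the Dominated Convergence Theorem from Appendix \ref{appendixH} directly to $\|\tilde{\mathbf X}_m^{(\ell_b)} - \hat{\tilde{\mathbf X}}_m^{(\ell_b)}\|^2 \le 2\|\tilde{\mathbf X}_m^{(\ell_b)}\|^2 + 2nM^2$, whereas you expand the square into a constant term, a bounded term, and a cross term handled by uniform integrability plus truncation; both routes are valid and of comparable length, and yours avoids needing the bespoke DCT variant. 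Two small remarks. First, your sentence ``$\hat{\mathbf V}$ is a.s.\ a continuous function of $(\mathbf V_\infty, W)$'' overstates what has been shown: you only established, for each fixed $w$, local constancy at Lebesgue-a.e.\ $\mathbf v$, which is continuity in the $\mathbf v$-slice, not joint continuity. That suffices, but the clean way to use it is what the paper does (and what your expectations implicitly do): condition on $W=w$, apply the continuous mapping theorem in $\mathbf v$ alone, and then integrate out $W$ with a second DCT using the $M$-bound, rather than invoking a joint continuous mapping theorem for $(\mathbf V_b,W)$. Second, you explicitly add a zero-mean reduction; this is a reasonable clarification since Lemma \ref{convlemma}'s proof via Lindeberg indeed presupposes zero-mean sources (the paper leaves this implicit).
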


Before delving into the mathematical details of  the  proof below, we give an overview of the proof steps. 
A high level illustration is provided in Fig. \ref{fig::worstsource}. 
The overall idea is to use the achievable scheme for distributed compression of the  Gaussian source over the network, and devise a new scheme for the arbitrary source with the same covariance to show that the achievable distortion in both cases is similar.

We first apply the linear transformation $\vec Q$ to blocks of the sources, and then the transformed source symbols from various blocks are interleaved to create new blocks of independent ``effective'' source symbols. 
The idea is that each of these new effective source symbols is now a weighted aggregate of many source symbols, and using a central limit theorem-like result (Lemma \ref{convlemma}), the effective source is close to the Gaussian source with the same covariance. 
We next invoke the Functional Representation Lemma (Lemma \ref{frlemma}) to ``transform'' the given stochastic network into a randomly chosen deterministic network; that is to say that the output to a node is a deterministic function of the inputs to the network and some external  randomness, independent of the inputs. 
We then take the achievable scheme for the Gaussian source, and construct an equivalent scheme with bounded output (using Lemma \ref{boundedoutputlemma}) and finite encoding precision (using Lemma \ref{finiteprecisionlemma}), and apply it to the new effective network with approximately Gaussian sources. 
Using the continuity property of finite encoding precision schemes (Lemma \ref{continuouslemma}), and the property of bounded output, we conclude the proof by showing that the distortion achieved on this effective network with approximately Gaussian sources is close to what it would have been if the sources were actually Gaussian. 

\vspace{5mm}

\begin{proof}[Proof of Theorem \ref{thm1b}]
Suppose the distortion tuple $(D_1,\ldots,D_k)$ is achievable in the case where $(X_1[0],\ldots,X_k[0])$ is jointly Gaussian with covariance matrix $\bv K$.
Fix $\ep > 0$.
From Lemmas \ref{boundedoutputlemma} and \ref{finiteprecisionlemma} (and the subsequent remark), we can assume that we have a code $\C$ with blocklength $n$ as defined in Definition \ref{codingdefn}, which 
achieves distortion vector $(D_1 +\ep/2,\ldots,D_k+\ep/2)$
if $(X_1[0],\ldots,X_k[0])$ is jointly Gaussian, with finite encoding precision $\rho=[\rho_1,\ldots,\rho_k]\in\N^k$ and for which
\aln{
\left\| g_{d_j} ( y_1,\ldots,y_n) \right\|_\infty \leq M,
}
for any $(y_1,\ldots,y_n) \in \R^n$, $j = 1,\ldots,k$ and a fixed $M > 0$.

We will build a coding scheme $\tilde \C$ with block length $nb$, for a large integer $b$, with source encoding functions $\tilde f_{s_m,t}$, relay encoding functions $\tilde f_{r_p,t}$, destination encoding functions $\tilde f_{d_m,t}$ and destination decoding functions $\tilde g_{d_m,t}$.
Since we will be working with a block length $nb$, we will let $\bv X_m = (X_m[0],\ldots,X_m[nb-1])$, for $m=1,\ldots,k$.
All relay encoding functions and destination encoding functions will be constructed by simply repeating $f_{r_p,t}$ and $f_{d_m,t}$, $b$ times.
More precisely, for a time $t = \ell n + \tau$, for $\ell \in \{0,\ldots,b-1\}$ and $\tau \in \{0,\ldots,n-1\}$, we let
\aln{
& \tilde f_{r_p,t}(y^{t-1}) = f_{r_p,\tau}( y[\ell n],y[\ell n + 1],\ldots,y[\ell n + \tau-1]), \\
& \tilde f_{d_m,t}(y^{t-1}) = f_{d_m,\tau}( y[\ell n],y[\ell n + 1],\ldots,y[\ell n + \tau-1]).
}
Thus, from the point of view of the relays and destination encoding functions, we are simply repeating coding scheme $\C$, $b$ independent times.

As in the construction of the relay and destination encoding functions, we will essentially repeat $f_{s_m}$ $b$ times.
However, instead of applying each of these source encoding functions considering the random source $\vec X_m$, each source $s_m$ will instead apply it to an effective random source $\vec {\tilde X}_m$ which can be obtained from $\vec X_m$ through an invertible transformation.
This invertible transformation is depicted in Fig. \ref{encodingfig}.
\begin{figure}[ht] 
     \centering
       \includegraphics[width=0.5\linewidth]{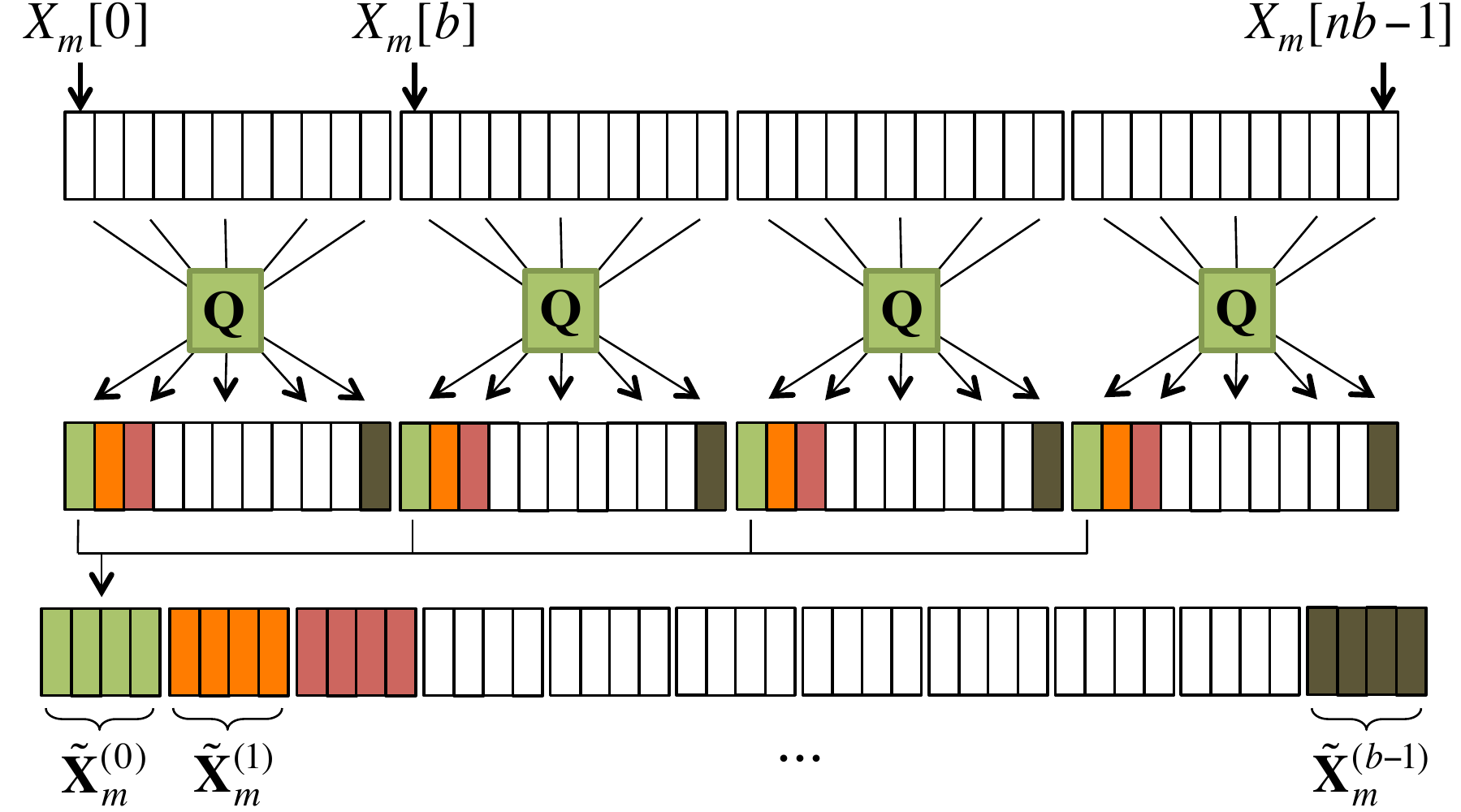} \caption{Illustration of the new encoding procedure for source node $s_m$. \label{encodingfig}}
\end{figure}
First, $\vec X_m$ is broken into $n$ blocks of length $b$.
To each of these blocks we apply the unitary linear transformation $\vec Q$.
The $n$ resulting blocks of length $b$ are then interleaved, generating $b$ length-$n$ vectors $\til{\vec X}_m^{(0)},\ldots,\til{\vec X}_m^{(b-1)}$, as shown in Figure \ref{encodingfig}.
The new effective source is given by $\til{\vec X}_m = (\til{\vec X}_m^{(0)},\ldots,\til{\vec X}_m^{(b-1)})$.
Therefore, for a time $t = \ell n + \tau$, for $\ell \in \{0,\ldots,b-1\}$ and $\tau \in \{0,\ldots,n-1\}$, the source encoding function can be described as
\aln{
& \tilde f_{s_m,t}(\vec X_m,y^{t-1}) = f_{s_m,\tau}(\vec {\tilde X}_m^{(\ell)}, y[\ell n],y[\ell n + 1],\ldots,y[\ell n + \tau-1]).
}
%
At the decoders side, we first apply the destination decoding functions $g_{d_m}$ to each block of length $n$, and then we invert the transformation applied by source $s_m$ to $\vec {\tilde X}_m$.
More precisely, $\tilde g_{d_m}$ is obtained by taking the length-$nb$ vector
\aln{
\left( g_{d_m}( Y_{d_m}[0],\ldots,Y_{d_m}[n-1]), g_{d_m}( Y_{d_m}[n],\ldots,Y_{d_m}[2n-1]),\ldots, g_{d_m}( Y_{d_m}[(b-1)n],\ldots,Y_{d_m}[bn-1]) \right),
}
interleaving the $b$ blocks of length $n$ to obtain $n$ blocks of length $b$, and then applying $\vec Q^{-1}$ to each of these blocks.


Our next goal is to show that, by choosing $b$ large enough, we can make the distortion of this new code arbitrarily close to the distortion of the original code applied to the Gaussian source.
We will employ the Functional Representation Lemma (Lemma \ref{frlemma}) in order to alternatively view our stochastic network as an ensemble of deterministic networks. 
Note that the achievable distortion of the network under consideration is a function of the joint distribution of sources, channel inputs, channel outputs, i.e., $(X_1,\ldots,X_k,Y_1,\ldots,Y_N,U_1,\ldots,U_N)$. 
Hence, from Lemma \ref{frlemma}, the memoryless network $f_{Y_1,\ldots,Y_N|U_1,\ldots,U_N}$ can be equivalently
represented as a deterministic network 
 $Y_i=h_i(U_1,\ldots,U_N,Z),\ \forall \
i \in [1:N]$ where $Z$ is a random vector independent of the channel inputs $(U_1,\ldots,U_N)$.
Thus, given a length-$nb$ vector of realizations of $Z$, $\vec z = (z_0,\ldots,z_{nb-1})$, given our coding scheme $\tilde C$, all the received signals in the network are a deterministic function of the random sources $\vec X_1, \vec X_2,\ldots,\vec X_k$.
Thus for a block length $nb$, we can write, for some functions $F_i$,
$\mathbf{Y}_i=F_i(\mathbf{X}_{1},\mathbf{X}_{2},\ldots,\mathbf{X}_{k},
\vec Z),\ \forall\ i\in [1:N]$.
Therefore, once we condition on $\vec Z = \vec z$ for some realization $\vec z$ of $\vec Z$, the distortion of the reconstruction of destination $d_m$,
\aln{
\frac1{nb} \left\| \vec{X}_m  - \tilde g_{d_m}\left( F_{d_m}(\vec X_1, \vec X_2,\ldots,\vec X_k,\vec z)\right)  \right\|^2,
}
is only a function of the random sources $\vec X_1, \vec X_2,\ldots,\vec X_k$.

Next, we notice that, since $\vec Q$ is a unitary linear transformation, the distortion of $\tilde \C$ can be written in terms of ${\vec {\til X}}_m^{(\ell)}$ for $\ell = 0,\ldots,b-1$ as
\aln{
\frac1{b} \sum_{\ell = 0}^{b-1} \frac1n\left\| \vec{\til X}_m^{(\ell)} - g_{d_m}\left( \tilde F_{d_m}(\vec{\til X}_1^{(\ell)},\ldots,\vec{\til X}_k^{(\ell)},\vec z) \right)  \right\|^2,
}
where $\tilde F_{d_m}$ are defined again through the Functional Representation Lemma, by noticing that, during times $t=\ell n, \ell n + 1,\ldots, (\ell+1) n -1$, the received signals at $d_m$ only depend on $(\vec{\til X}_1^{(\ell)},\ldots,\vec{\til X}_k^{(\ell)})$, and not the entire sources $\vec X_1, \vec X_2,\ldots,\vec X_k$.
For each $b=1,2,...$, we will let 
\aln{
\ell_b = \arg \max_{0 \leq \ell \leq b-1} \E \left\| \vec{\til X}_m^{(\ell)} - g_{d_m}\left( \tilde F_{d_m}(\vec{\til X}_1^{(\ell)},\ldots,\vec{\til X}_k^{(\ell)},\vec z) \right)  \right\|^2,
}
i.e., the $\ell_b$th length-$n$ block has the largest expected distortion.
Note that $\left\{ \left(\til X_1^{(\ell_b)}[i],\ldots,\til X_k^{(\ell_b)}[i]\right) \right\}_{i=0}^{n-1}$ is an i.i.d.~sequence of length-$k$ random vectors.
From Lemma \ref{convlemma}, we see that it converges in distribution to a sequence of i.i.d.~jointly Gaussian random vectors with covariance matrix $\vec K$, as $b \to \infty$.

Now, from Lemma \ref{continuouslemma}, each of the source encoding functions $f_{s_m,t}$ of the original coding scheme $\C$ are constant almost everywhere, since they have finite encoding precision.
Since each function $\tilde F_i$ can only depend on the effective sources $\vec{\til X}_1^{(\ell_b)},\ldots,\vec{\til X}_k^{(\ell_b)}$ through the source encoding functions $f_{s_m,t}$, it is not difficult to see that, for a fixed $\vec z$, $\tilde F_i \left( \vec{\til X}_1^{(\ell_b)},\ldots,\vec{\til X}_k^{(\ell_b)}, \vec z\right)$ is an almost-everywhere-constant function of  $\vec{\til X}_1^{(\ell_b)},\ldots,\vec{\til X}_k^{(\ell_b)}$.
Therefore, the mapping
\aln{
\left\{\vec{\til X}_m^{(\ell_b)} \right\}_{m=1}^k \mapsto \left\| \vec{\til X}_m^{(\ell_b)} - g_{d_m}\left({\tilde F}_{d_m}(\vec{\til X}_1^{(\ell_b)},\ldots,\vec{\til X}_k^{(\ell_b)},\vec z) \right)  \right\|^2,
}
for $m=1,\ldots,k$, is continuous almost everywhere.
We conclude that 
\aln{ 
& \left\| \vec{\til X}_m^{(\ell_b)} - g_{d_m}\left( {\tilde F}_{d_m}(\vec{\til X}_1^{(\ell_b)},\ldots,\vec{\til X}_k^{(\ell_b)},\vec z) \right)  \right\|^2 
 \stackrel{d}{\to} 
\left\| \vec{X}^G_m - g_{d_m}\left( {\tilde F}_{d_m}(\vec{X}^G_1,\ldots,\vec {X}^G_k, \vec z) \right)  \right\|^2,
}
as $b \to \infty$, 
where ${\vec X}^G_m = (X^G_m[0],\ldots,X^G_m[n-1])$, for $m=1,\ldots,k$, and $\left\{ \left(X^G_1[i],\ldots,X^G_k[i]\right) \right\}_{i=0}^{n-1}$  is an i.i.d.~sequence such that $(X^G_1[0],\ldots,X^G_k[0])$ is jointly Gaussian with zero mean and covariance matrix $\vec K$.
Moreover, we have that 
\al{ 
\left\| \vec{\til X}_m^{(\ell_b)} - g_{d_m}\left( {\tilde F}_{d_m}(\vec{\til X}_1^{(\ell_b)},\ldots,\vec{\til X}_k^{(\ell_b)},\vec z) \right)  \right\|^2  & \leq 2 \left\| \vec{\til X}_m^{(\ell_b)} \right\|^2 + 2 \left\| g_{d_m}\left( {\tilde F}_{d_m}(\vec{\til X}_1^{(\ell_b)},\ldots,\vec{\til X}_k^{(\ell_b)},\vec z) \right)  \right\|^2 \nonumber \\
& \leq 2 \left\| \vec{\til X}_m^{(\ell_b)} \right\|^2 + 2 n M^2, \label{bound1}
}
and also that
\al{ 
& \E \left\|\vec{\til X}_m^{(\ell_b)} \right\|^2 = n \E \left(  \sum_{j=0}^{b-1} X_m[j] ~ Q(\ell_b,j) \right)^2 = n \vec K_{m,m} \sum_{j=0}^{b-1} Q^2(\ell_b,j)  = n \vec K_{m,m} < \infty. \label{bound2}
}
Thus, from a variation of the Dominated Convergence Theorem (see Appendix \ref{appendixH}), we conclude that, as $b\to\infty$,
\al{
\E\left[ \left. \left\| \vec{\til X}_m^{(\ell_b)} - g_{d_m}\left( {\tilde F}_{d_m}(\vec{\til X}_1^{(\ell_b)},\ldots,\vec{\til X}_k^{(\ell_b)},\vec Z) \right)  \right\|^2 \right| \vec Z = \vec z \right] 
  \to \E\left[ \left. \left\| \vec{X}^G_m - g_{d_m}\left( {\tilde F}_{d_m}(\vec{X}^G_1,\ldots,\vec {X}^G_k, \vec Z) \right)  \right\|^2 \right| \vec Z = \vec z \right], 
\label{condexpz}
}
for all $\vec z$.
This means that the random variable $\E\left[ \left. \left\| \vec{\til X}_m^{(\ell_b)} - g_{d_m}\left( {\tilde F}_{d_m}(\vec{\til X}_1^{(\ell_b)},\ldots,\vec{\til X}_k^{(\ell_b)},\vec Z) \right)  \right\|^2 \right| \vec Z \right]$ converges surely to $\E\left[ \left. \left\| \vec{X}^G_m - g_{d_m}\left( {\tilde F}_{d_m}(\vec{X}^G_1,\ldots,\vec {X}^G_k, \vec Z) \right)  \right\|^2 \right| \vec Z \right]$.
Moreover, for any $b$, by following the steps in \eq{bound1} and \eq{bound2}, since $\vec{\til X}_m^{(\ell_b)}$ is independent of $\vec Z$, 
\aln{
\E\left[ \left. \left\| \vec{\til X}_m^{(\ell_b)} - g_{d_m}\left( {\tilde F}_{d_m}(\vec{\til X}_1^{(\ell_b)},\ldots,\vec{\til X}_k^{(\ell_b)},\vec Z) \right)  \right\|^2 \right| \vec Z = \vec z \right] &  \leq 2 \ \E \left[ \left. \left\|\vec{\til X}_m^{(\ell_b)} \right\|^2 \right| \vec Z = \vec z \right] + 2 n M^2 \\
&  = 2 \ \E \left[  \left\|\vec{\til X}_m^{(\ell_b)} \right\|^2  \right] + 2 n M^2 \\
& \leq 2 n \left( \vec K_{m,m} + M^2 \right).
}
Thus, a second application of the Dominated Convergence Theorem yields
\aln{
& \E\left[ \E\left[ \left. \left\| \vec{\til X}_m^{(\ell_b)} - g_{d_m}\left( {\tilde F}_{d_m}(\vec{\til X}_1^{(\ell_b)},\ldots,\vec{\til X}_k^{(\ell_b)},\vec Z) \right)  \right\|^2 \right| \vec Z \right] \right] \\
& \quad \to \E\left[ \E\left[ \left. \left\| \vec{X}^G_m - g_{d_m}\left( {\tilde F}_{d_m}(\vec{X}^G_1,\ldots,\vec {X}^G_k, \vec Z) \right)  \right\|^2 \right| \vec Z \right] \right],
}
which implies
\aln{
& \E\left[  \left\| \vec{\til X}_m^{(\ell_b)} - g_{d_m}\left( {\tilde F}_{d_m}(\vec{\til X}_1^{(\ell_b)},\ldots,\vec{\til X}_k^{(\ell_b)},\vec Z) \right)  \right\|^2  \right]  \to \E\left[ \left\| \vec{X}^G_m - g_{d_m}\left( {\tilde F}_{d_m}(\vec{X}^G_1,\ldots,\vec {X}^G_k, \vec Z) \right)  \right\|^2\right] \leq n(D_m + \ep/2).
}
Therefore, we can choose $b$ sufficiently large so that 
\aln{
& \frac1n \E\left[  \left\| \vec{\til X}_m^{(\ell_b)} - g_{d_m}\left( F_{d_m}(\vec{\til X}_1^{(\ell_b)},\ldots,\vec{\til X}_k^{(\ell_b)},\vec Z) \right)  \right\|^2  \right] \\ 
& \quad \quad \quad \quad \leq \frac1n \E\left[ \left\| \vec{X}^G_m - g_{d_m}\left( {\tilde F}_{d_m}(\vec{X}^G_1,\ldots,\vec {X}^G_k, \vec Z) \right)  \right\|^2\right] + \ep/2 \\
& \quad \quad \quad \quad \leq D_m + \ep.
}
The expected distortion of code $\tilde \C$ (with blocklength $nb$) thus satisfies
\aln{
& \frac1{nb} \sum_{\ell = 0}^{b-1} \E\left[  \left\| \vec{\til X}_m^{(\ell)} - g_{d_m}\left( {\tilde F}_{d_m}(\vec{\til X}_1^{(\ell)},\ldots,\vec{\til X}_k^{(\ell)},\vec Z) \right)  \right\|^2  \right] \\ 
& \quad \quad \quad \quad \leq \frac1n \E\left[  \left\| \vec{\til X}_m^{(\ell_b)} - g_{d_m}\left( {\tilde F}_{d_m}(\vec{\til X}_1^{(\ell_b)},\ldots,\vec{\til X}_k^{(\ell_b)},\vec Z) \right)  \right\|^2  \right] \\
& \quad \quad \quad \quad \leq D_m + \ep,
}
for $m=1,\ldots,k$.
This concludes the proof of Theorem \ref{theorem1}. 
\end{proof}

\section{Worst Case Additive Noise for a given Source}
\label{sec::worstnoise}

In this section, we prove Theorem \ref{theorem2}; that is, that given any arbitrary source to be encoded over an additive-noise network, amongst all the noise distributions with a given covariance, Gaussian noise leads to the worst distortion. Note that proving Theorem \ref{theorem2} is equivalent to proving the following theorem.
\begin{theorem}[Equivalent to Theorem \ref{theorem2}] \label{thm2b}
If a distortion tuple $(D_1,\ldots,D_k)$ is $\xi$-achievable on an additive-noise memoryless network when $(Z_1,\ldots,Z_N)$ is jointly Gaussian with covariance matrix ${\mathbf{K}}$, then for any $\epsilon>0$, the distortion tuple $(D_1+\epsilon,\ldots,D_k+\epsilon)$ is $\qd$-achievable when $(Z_1,\ldots,Z_N)$ has an arbitrary distribution with covariance matrix ${\mathbf{K}}$.
\end{theorem}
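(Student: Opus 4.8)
\textbf{Proof proposal for Theorem \ref{thm2b}.}

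The plan is to run the argument of Theorem \ref{thm1b} with the roles of source and noise interchanged: now the source is held fixed (and arbitrary, of finite covariance) and it is the additive noise that we push toward Gaussianity by means of the transform $\vec Q$. Fix $\ep>0$. Since a distortion tuple $(D_1,\ldots,D_k)$ is achievable over the AWGN network with noise $\Nscr(\mathbf 0,\vec K)$, Lemmas \ref{boundedoutputlemma} and \ref{finiteprecisionlemma2} and the subsequent remark let us start from a code $\C$ of blocklength $n$ achieving $(D_1+\ep/2,\ldots,D_k+\ep/2)$ on that network, with bounded outputs ($\|g_{d_j}\|_\infty\le M$) and finite reading precision $\rho$; as in the construction behind Lemma \ref{finiteprecisionlemma2} we also take the decoding maps $g_{d_m}$ to depend on their argument only through $\lfloor\cdot\rfloor_\rho$, so that by Lemma \ref{continuouslemma} every function of $\C$ that processes received signals — the $f_{s_m,t}(\vec x_m,\cdot)$, $f_{r_p,t}$, $f_{d_m,t}$ and $g_{d_m}$ — is locally constant (hence continuous) almost everywhere in its channel-output arguments. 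Note that, unlike in Theorem \ref{thm1b}, the Functional Representation Lemma is not needed: the additive-noise network $\vec Y=H\vec U+\vec Z$ is already presented as a deterministic function of the inputs and of an independent external source of randomness $\vec Z$.

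Next I build $\tilde\C$ of blocklength $nb$, which runs $b$ interleaved copies of $\C$ indexed by $\ell\in\{0,\ldots,b-1\}$, copy $\ell$ using the source chunk $\vec X_m^{(\ell)}=(X_m[\ell n],\ldots,X_m[\ell n+n-1])$. During physical super-slot $\tau\in\{0,\ldots,n-1\}$ (physical times $\tau b,\ldots,\tau b+b-1$), each node $j$ first forms, by feeding the encoding functions of $\C$ with its own transformed past $\{\tilde Y_j^{(\ell)}[0:\tau-1]\}_\ell$ (and, at sources, with $\vec X_m^{(\ell)}$), the virtual symbols $\tilde U_j^{(\ell)}[\tau]$; it then transmits $U_j[\tau b+\ell']=\sum_{\ell}Q(\ell,\ell')\,\tilde U_j^{(\ell)}[\tau]$, i.e.\ it applies $\vec Q^{-1}=\vec Q^\top$ across the super-slot; after receiving $Y_j[\tau b],\ldots,Y_j[\tau b+b-1]$ it sets $\tilde Y_j^{(\ell)}[\tau]=\sum_{\ell'}Q(\ell,\ell')\,Y_j[\tau b+\ell']$, i.e.\ it applies $\vec Q$. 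The decoders of $\tilde\C$ apply $g_{d_m}$ blockwise and deinterleave. This schedule is causal (virtual symbols for super-slot $\tau$ use only physical signals received before time $\tau b$), and using $\vec Q\vec Q^\top=I$ a one-line computation shows that copy $\ell$ experiences exactly the additive-noise relation $\tilde Y_i^{(\ell)}[\tau]=H_i\tilde U^{(\ell)}[\tau]+\tilde Z_i^{(\ell)}[\tau]$ with effective noise $\tilde Z_i^{(\ell)}[\tau]=\sum_{\ell'}Q(\ell,\ell')\,Z_i[\tau b+\ell']$. For fixed $\ell$ the vectors $\{(\tilde Z_1^{(\ell)}[\tau],\ldots,\tilde Z_N^{(\ell)}[\tau])\}_{\tau=0}^{n-1}$ are i.i.d.\ with covariance $\vec K$, and by Lemma \ref{convlemma} each converges in distribution to $\Nscr(\mathbf 0,\vec K)$ as $b\to\infty$, so the whole length-$n$ effective-noise vector converges in distribution to i.i.d.\ $\Nscr(\mathbf 0,\vec K)$; thus copy $\ell$ is $\C$ run over an ``approximate-AWGN'' network.

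The heart of the proof is to show that, letting $\ell_b=\arg\max_{0\le\ell\le b-1}\E\|\vec X_m^{(\ell)}-\hat{\vec X}_m^{(\ell)}\|^2$ be the worst copy, $\frac1n\E\|\vec X_m^{(\ell_b)}-\hat{\vec X}_m^{(\ell_b)}\|^2\to\frac1n\E\|\vec X_m-\hat{\vec X}_m^{G}\|^2\le D_m+\ep/2$ as $b\to\infty$, where $\hat{\vec X}_m^{G}$ is the reconstruction of $\C$ on the same source with genuine i.i.d.\ $\Nscr(\mathbf 0,\vec K)$ noise. To do this I condition on the source chunk $(\vec X_1^{(\ell_b)},\ldots,\vec X_k^{(\ell_b)})=(\vec x_1,\ldots,\vec x_k)$ (the role played by the network randomness $\vec Z$ in Theorem \ref{thm1b}). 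Given this, I claim $\hat{\vec X}_m^{(\ell_b)}$ is an almost-everywhere locally constant function of the effective-noise vector $\tilde{\vec Z}^{(\ell_b)}=(\tilde Z^{(\ell_b)}[0],\ldots,\tilde Z^{(\ell_b)}[n-1])$: unrolling the recursion $\tilde Y^{(\ell_b)}[\tau]=H\tilde U^{(\ell_b)}[\tau]+\tilde Z^{(\ell_b)}[\tau]$, with $\tilde U^{(\ell_b)}[\tau]$ depending on $\tilde Y^{(\ell_b)}[0:\tau-1]$ only through $\lfloor\cdot\rfloor_\rho$, one checks by induction on $\tau$ that, off a Lebesgue-null set, the map $\tilde{\vec Z}^{(\ell_b)}\mapsto\tilde{\vec Y}^{(\ell_b)}$ is locally a translation (each $\tilde Y^{(\ell_b)}[\tau]$ equals $\tilde Z^{(\ell_b)}[\tau]$ plus a quantity locally constant in $\tilde Z^{(\ell_b)}[0:\tau-1]$), so composing with the finite-precision, a.e.\ locally constant decoder $g_{d_m}$ gives the claim. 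Hence $\frac1n\|\vec x_m-\hat{\vec X}_m^{(\ell_b)}(\tilde{\vec Z}^{(\ell_b)})\|^2$ is a.e.\ continuous in $\tilde{\vec Z}^{(\ell_b)}$, so by the continuous mapping theorem and Lemma \ref{convlemma} it converges in distribution to the same expression evaluated at genuine i.i.d.\ Gaussian noise; since it is bounded surely by $\frac2n\|\vec x_m\|^2+2M^2$ (using $\|g_{d_m}\|_\infty\le M$), the variation of the Dominated Convergence Theorem of Appendix \ref{appendixH} upgrades this to convergence of the conditional expectations, for every $(\vec x_1,\ldots,\vec x_k)$. The resulting conditional-expectation random variable converges surely, is dominated by $\frac2n\|\vec X_m^{(\ell_b)}\|^2+2M^2$ (of finite expectation $2\vec K_{m,m}+2M^2$), and a second application of the Dominated Convergence Theorem yields $\frac1n\E\|\vec X_m^{(\ell_b)}-\hat{\vec X}_m^{(\ell_b)}\|^2\to\frac1n\E\|\vec X_m-\hat{\vec X}_m^{G}\|^2\le D_m+\ep/2$. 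Choosing $b$ large makes this $\le D_m+\ep$; since $\ell_b$ is the worst copy and the distortion of $\tilde\C$ is the average $\frac1{nb}\sum_{\ell}\E\|\vec X_m^{(\ell)}-\hat{\vec X}_m^{(\ell)}\|^2\le\frac1n\E\|\vec X_m^{(\ell_b)}-\hat{\vec X}_m^{(\ell_b)}\|^2$, the tuple $(D_1+\ep,\ldots,D_k+\ep)$ is $\qd$-achievable, proving the theorem.

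The step I expect to be the main obstacle is the almost-everywhere local constancy of the reconstruction as a function of the effective noise. In Theorem \ref{thm1b} the varying randomness (the source) enters only through the finite-encoding-precision source maps and is therefore absorbed into a.e.-constant functions, so continuity of the decoder is irrelevant; here the noise enters the channel equations additively and is never floored away, so one must track how the $\lfloor\cdot\rfloor_\rho$ operations inside the feedback loop interact with the additive term — the key structural fact being that, locally and off a null set, the noise-to-received-signal map is merely a translation, which is exactly what lets the (Lebesgue-null, hence null under a non-degenerate Gaussian limit) discontinuity sets be pushed through the composition. It is also this feature that forces us to work with finite reading precision (Lemma \ref{finiteprecisionlemma2}) rather than finite encoding precision and to impose finite precision on the decoding maps themselves.
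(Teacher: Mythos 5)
Your proposal is correct and follows essentially the same construction as the paper: the same interleaved $\vec Q^{-1}$/$\vec Q$ pre- and post-processing of transmit and receive signals to create $b$ effective sub-networks with effective noise $\tilde Z_i^{(\ell)}[\tau]=\sum_{\ell'}Q(\ell,\ell')Z_i[\tau b+\ell']$, the same appeal to Lemmas \ref{boundedoutputlemma}, \ref{finiteprecisionlemma2}, \ref{continuouslemma} and \ref{convlemma}, the same worst-copy index $\ell_b$, and the same two-step domination bounds feeding the Appendix-H variant of dominated convergence. Two small deviations are worth noting. First, you condition on the source chunk and apply the dominated-convergence variant twice (mirroring the role that conditioning on $\vec Z$ plays in the proof of Theorem \ref{thm1b}); the paper instead replaces $\vec X^{(\ell_b)}_m$ by $\vec X^{(0)}_m$ using an equality in distribution and applies dominated convergence once, unconditionally. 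Both are valid; yours is more symmetric with the other proof, the paper's is marginally shorter. Second, where the paper simply asserts that the received-signal map $F_m$ is a.e.\ continuous in the effective noise (``It is then not difficult to see that...''), you supply the missing inductive argument — that off a Lebesgue-null set the map $\tilde{\vec Z}^{(\ell_b)}\mapsto\tilde{\vec Y}^{(\ell_b)}$ is locally a translation because each $\tilde Y^{(\ell_b)}[\tau]$ equals $\tilde Z^{(\ell_b)}[\tau]$ plus a quantity locally constant in $\tilde Z^{(\ell_b)}[0{:}\tau-1]$. That is precisely the point that makes finite \emph{reading} precision (rather than finite encoding precision) necessary here, and your explicit unrolling of the recursion is a genuine improvement in rigor over the paper's hand-wave.
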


As we did in the previous section, we first give an overview of the proof. A high level illustration is provided in Fig. \ref{fig::worstnoise}. The overall idea is to use the achievable scheme for the distributed compression of the source over the additive Gaussian noise network and devise a new scheme when the noise is arbitrarily distributed with the same covariance so that the achievable distortion in both cases is comparable. 

We first apply the linear transformation $\vec Q$ to the blocks of the network inputs and outputs to create effective inputs and outputs. Our goal is to convert the additive-noise network into an approximately additive Gaussian noise network. The key idea is that the new effective channel noise is now a weighted aggregate of many independent noise components, and using a central limit theorem like result (Lemma \ref{convlemma}), the effective noise is close to Gaussian with the same covariance.  We then take the achievable scheme for the Gaussian noise, and construct an equivalent scheme with bounded output (using Lemma \ref{boundedoutputlemma}) and finite reading precision (using Lemma \ref{finiteprecisionlemma2}), and apply it to the new effective network with approximately Gaussian noise. 
Using the continuity property of finite reading precision schemes (Lemma \ref{continuouslemma}), and the property of bounded output, we conclude the proof by showing that the distortion achieved on this effective network with approximately Gaussian noise is close to what it would have been if the sources were actually Gaussian. 

\vspace{2mm}

\begin{proof}[Proof of Theorem \ref{thm2b}]
We start by noticing that, if the distortion tuple $(D_1,\ldots,D_k)$ is achievable in the case where $(Z_1[0],\ldots,Z_N[0])$ is jointly Gaussian with covariance matrix $\bv K$, then Lemmas \ref{boundedoutputlemma} and \ref{finiteprecisionlemma2} imply that we have a code $\C$ with block length $n$ with finite reading precision $\rho$,  
 which achieves distortion vector $(D_1 +\ep/2,\ldots,D_k+\ep/2)$
and for which
\aln{
\left\| g_{d_j} ( y_1,\ldots,y_n) \right\|_\infty \leq M,
}
for any $(y_1,\ldots,y_n) \in \R^n$, $j = 1,\ldots,k$ and a fixed $M > 0$.

In order to build our new coding scheme $\tilde \C$, we will first define operations that will be applied to transmit and received signals in the network.
For an integer $b$, these operations will effectively create $b$ new memoryless networks, which will replace the input-output relationship in (\ref{additiveio}) with
\al{ \label{effio}
\begin{bmatrix} \tilde Y_{1}^{(\ell)} \\ \tilde Y_{2}^{(\ell)} \\ \vdots \\ \tilde Y_{N}^{(\ell)} \end{bmatrix} & = 
{\bf H}
\begin{bmatrix} \tilde U_{1}^{(\ell)} \\ \tilde U_{2}^{(\ell)} \\ \vdots \\ \tilde U_{N}^{(\ell)} \end{bmatrix} + \begin{bmatrix} \tilde Z_{1}^{(\ell)} \\ \tilde Z_{2}^{(\ell)} \\ \vdots \\ \tilde Z_{N}^{(\ell)} \end{bmatrix},
}
for $\ell = 0,1,\ldots,b-1$, where $\tilde U_1^{(\ell)},\ldots,\tilde U_N^{(\ell)}$ are the effective inputs, $\tilde Y_1^{(\ell)},\ldots,\tilde Y_N^{(\ell)}$ the effective outputs, and $(\tilde Z_1^{(\ell)},\ldots,\tilde Z_N^{(\ell)})$ the effective additive-noise vector of the $\ell$th effective network.
In each of these memoryless networks we will then apply our original coding scheme $\C$.
The resulting coding scheme $\tilde C$ will thus have block length $nb$.

The operations applied to transmit signals and received signals will make use of the linear transformation $\bf Q$, defined in (\ref{qdef}).
Given a block of $nb$ effective inputs to a node $i \in V$ $\tilde U_i^{(\ell)} [0],\ldots,\tilde U_i^{(\ell)} [n-1]$, for $\ell = 0,1,\ldots,b-1$, the actual $nb$ transmit signals of node $i$ are built as
\al{ \label{effi}
\begin{bmatrix} U_{i}[t b] \\ U_{i}[t b+1] \\ \vdots \\ U_{i}[t b+b-1] \end{bmatrix} & = 
{\bf Q}^{-1}
\begin{bmatrix} \tilde U_{i}^{(0)}[t] \\ \tilde U_{i}^{(1)}[t] \\ \vdots \\ \tilde U_{i}^{(b-1)}[t] \end{bmatrix},
}
for $t= 0,1,\ldots,n-1$.
The effective network outputs are built from the actual received signals by setting
\al{ \label{effo}
\begin{bmatrix} \tilde Y_{i}^{(0)}[t] \\ \tilde Y_{i}^{(1)}[t] \\ \vdots \\ \tilde Y_{i}^{(b-1)}[t] \end{bmatrix} & = 
{\bf Q} \begin{bmatrix} Y_{i}[t b] \\ Y_{i}[t b+1] \\ \vdots \\ Y_{i}[t b+b-1] \end{bmatrix},
}
for $t= 0,1,\ldots,n-1$.
Using (\ref{effo}), (\ref{additiveio}) and then (\ref{effi}) we can write
\al{ 
& \begin{bmatrix} \tilde Y_{1}^{(0)}[t] & \tilde Y_{2}^{(0)}[t] & \cdots & \tilde Y_{N}^{(0)}[t]\\ 
\tilde Y_{1}^{(1)}[t] & \tilde Y_{2}^{(1)}[t] & \cdots & \tilde Y_{N}^{(1)}[t] \\ \vdots & \vdots & \ddots & \vdots \\ 
\tilde Y_{1}^{(b-1)}[t] & Y_{2}^{(b-1)}[t] & \cdots & Y_{N}^{(b-1)}[t] \end{bmatrix} = 
{\bf Q} 
\begin{bmatrix} Y_{1}[t b] & Y_{2}[t b] & \cdots & Y_{N}[t b] \\ 
Y_{1}[t b+1] & Y_{2}[t b+1] & \cdots & Y_{N}[t b+1] \\ \vdots & \vdots & \ddots & \vdots \\ 
Y_{1}[t b+b-1] & Y_{2}[t b+b-1] & \cdots & Y_{N}[t b+b-1] \end{bmatrix} \\
& \quad \quad = 
{\bf Q} 
\begin{bmatrix} U_1[tb] & \cdots & U_N[tb] \\ 
U_1[tb+1] & \cdots & U_N[tb+1]  \\ \vdots & \ddots & \vdots \\ 
U_1[tb+b-1] & \cdots & U_N[tb+b-1] \end{bmatrix} {\bf H}^T
+ {\bf Q} 
\begin{bmatrix} Z_1[tb] & \cdots & Z_N[tb] \\ 
Z_1[tb+1] & \cdots & Z_N[tb+1]  \\ \vdots & \ddots & \vdots \\ 
Z_1[tb+b-1] & \cdots & Z_N[tb+b-1] \end{bmatrix} \\
& \quad \quad =
\begin{bmatrix} \tilde U_{1}^{(0)}[t] & \cdots & \tilde U_{N}^{(0)}[t] \\ 
\tilde U_{1}^{(1)}[t] & \cdots & \tilde U_{N}^{(1)}[t]  \\ \vdots & \ddots & \vdots \\ 
\tilde U_{1}^{(b-1)}[t] & \cdots & \tilde U_{N}^{(b-1)}[t] \end{bmatrix} {\bf H}^T
+ {\bf Q} 
\begin{bmatrix} Z_1[tb] & \cdots & Z_N[tb] \\ 
Z_1[tb+1] & \cdots & Z_N[tb+1]  \\ \vdots & \ddots & \vdots \\ 
Z_1[tb+b-1] & \cdots & Z_N[tb+b-1] \end{bmatrix}.
}
This shows that we in fact have the effective input-output relationship shown in (\ref{effio}), if we define the effective noise vectors via
\al{ \label{effz}
\begin{bmatrix} \tilde Z_{1}^{(0)}[t] & \cdots & \tilde Z_{N}^{(0)}[t] \\ 
\tilde Z_{1}^{(1)}[t] & \cdots & \tilde Z_{N}^{(1)}[t]  \\ \vdots & \ddots & \vdots \\ 
\tilde Z_{1}^{(b-1)}[t] & \cdots & \tilde Z_{N}^{(b-1)}[t] \end{bmatrix} =
 {\bf Q} 
\begin{bmatrix} Z_1[tb] & \cdots & Z_N[tb] \\ 
Z_1[tb+1] & \cdots & Z_N[tb+1]  \\ \vdots & \ddots & \vdots \\ 
Z_1[tb+b-1] & \cdots & Z_N[tb+b-1] \end{bmatrix}.
}
In order to apply our original coding scheme $\C$ to the $\ell$th effective network, we set the effective transmit signals to be
\al{ \label{effcod1}
\tilde U_{s_m}^{(\ell)}[t] = f_{s_m,t}(X_m[\ell n:(\ell+1)n-1],\tilde Y_{s_m}^{(\ell)}[0:t-1])
}
for each source $s_m \in \Sscr$, and
\al{ \label{effcod2}
\tilde U_i^{(\ell)}[t] = f_{i,t}(\tilde Y_{i}^{(\ell)}[0:t-1])
}
for any other node $i \in \Rscr \cup \Dscr$.
The decoding functions are applied at each destination $d_m \in \Dscr$ as
\al{ \label{effcod3}
\hat X_m[\ell n:(\ell+1)n-1] = g_{d_m}(\tilde Y_{d_m}^{(\ell)}[0:n-1]),
}
and each destination can output the estimate $\hat X \in \R^{nb}$ by concatenating the length-$n$ output from each of the $b$ effective networks.
Next, we check that equations (\ref{effcod1}), (\ref{effcod2}) and (\ref{effcod3}) do not violate causality when they are mapped to the actual transmit and received signals according to (\ref{effi}) and (\ref{effo}).
From (\ref{effi}), we notice that, at time $tb$, for $t=0,\ldots,n-1$, in order to construct the next $b$ transmit signals $(U_i[tb],\ldots,U_i[tb+b-1])$, node $i$ needs the effective transmit signals $( \tilde U_i^{(0)}[t],\ldots,\tilde U_i^{(b-1)}[t])$.
These effective transmit signals, in turn, can be constructed given the effective received signals $\tilde Y_i^{(\ell)}[0],\ldots,\tilde Y_i^{(\ell)}[t-1]$, $\ell=0,\ldots,b-1$, from (\ref{effcod1}) and (\ref{effcod2}).
Finally, from (\ref{effo}), we notice that all these effective received signals can be constructed at the end of time slot $(t-1)b + b-1 = tb-1$, and will therefore be available at node $i$ at time $tb$. 
It can be similarly checked that the decoding operation defined in (\ref{effcod3}) does not violate causality.

We now analyze the effective noise vectors obtained.
The fact that the additive-noise vectors
\al{ \label{addnoisevec}
\left( \tilde Z_1^{(\ell)}[t], \tilde Z_2^{(\ell)}[t],\ldots,\tilde Z_N^{(\ell)}[t] \right)
}
for $t=0,\ldots,n-1$ are i.i.d.\,follows easily from the fact that they correspond to a row of the matrix on the left-hand side of (\ref{effz}), which is i.i.d.\,for $t=0,\ldots,n-1$, since the matrix on the right-hand side is i.i.d.~by the definition of the memoryless additive-noise network.
Moreover, by comparing (\ref{effz}) with (\ref{tildex}), we see that Lemma \ref{convlemma} implies that
\al{
\left( \tilde Z_1^{(\ell_b)}[t], \tilde Z_2^{(\ell_b)}[t],\ldots,\tilde Z_N^{(\ell_b)}[t] \right) \stackrel{d}{\to} \Nscr({\bf 0}, {\bf K}),
}
as $b \to \infty$, for each $t \in \{0,\ldots,n-1\}$, and any sequence $\ell_b$, $b=1,2,...$, such that $\ell_b \in \{0,\ldots,b-1\}$.

From Lemma \ref{finiteprecisionlemma2}, we were able to assume that the original coding scheme $\C$ has finite reading precision, which implies that the encoding, relaying and decoding functions $f_{s_m,t}$, $f_{r_p,t}$, $f_{d_m,t}$ and $f_{d_m,t}$ are continuous almost everywhere.
It is then not difficult to see that, for each destination $d_m$, we can write 
\aln{
\left( \tilde Y_{d_m}^{(\ell)}[0:n-1] \right) = F_m\left( \vec{X}_1^{(\ell)},\ldots,\vec{X}_k^{(\ell)}, \vec{\tilde Z}_1^{(\ell)},\ldots,\vec{\tilde Z}_N^{(\ell)} \right),
}
where $F_m$ is an almost-everywhere-continuous function of $\vec{\tilde Z}_1^{(\ell)},\ldots,\vec{\tilde Z}_N^{(\ell)}$,
 $\vec{X}_i^{(\ell)} = X_i[\ell n :\ell (n+1)-1]$ and $\vec{\tilde Z}_i^{(\ell)} = \tilde Z_i^{(\ell)}[0:n-1]$, for $\ell = 0,\ldots,b-1$.
%
%
%
%
%
Therefore, the mapping
\aln{
\left\{\vec{\tilde Z}_1^{(\ell)},\ldots,\vec{\tilde Z}_N^{(\ell)} \right\}   \mapsto  \left\| {\bf X}_m^{(\ell)} - g_{d_m}\left( F_m\left( \vec{X}_1^{(\ell)},\ldots,\vec{X}_k^{(\ell)}, \vec{\tilde Z}_1^{(\ell)},\ldots,\vec{\tilde Z}_N^{(\ell)} \right) \right)  \right\|^2,
}
for $m=1,\ldots,k$, is continuous almost everywhere as well.
We conclude that 
\aln{ 
&  \left\| {\bf X}_m^{(\ell_b)} - g_{d_m}\left( F_m\left( \vec{X}_1^{(\ell_b)},\ldots,\vec{X}_k^{(\ell_b)}, \vec{\tilde Z}_1^{(\ell_b)},\ldots,\vec{\tilde Z}_N^{(\ell_b)} \right) \right)  \right\|^2 \\
&   = \left\| {\bf X}_m^{(0)} - g_{d_m}\left( F_m\left( \vec{X}_1^{(0)},\ldots,\vec{X}_k^{(0)}, \vec{\tilde Z}_1^{(\ell_b)},\ldots,\vec{\tilde Z}_N^{(\ell_b)} \right) \right)  \right\|^2 \\
&  \stackrel{d}{\to} 
\left\| {\bf X}_m^{(0)} - g_{d_m}\left( F_m\left( {\bf X}_1^{(0)},\ldots,{\bf X}_k^{(0)}, \vec{\tilde Z}_1,\ldots,\vec{\tilde Z}_N \right) \right)  \right\|^2,
}
as $b \to \infty$, 
where $\vec{\tilde Z}_i = \tilde Z_i[0:n-1]$ for $i=1,\ldots,N$, and $\left\{ \left( \tilde Z_1[t],\ldots,\tilde Z_N[t] \right) \right\}_{t=0}^{n-1}$ is an i.i.d.\,sequence of jointly Gaussian vectors $\Nscr({\bf 0},{\bf K})$.
Moreover, we have that 
\al{ 
& \E \left\| {\bf X}_m^{(0)} - g_{d_m}\left( F_m\left( \vec{X}_1^{(0)},\ldots,\vec{X}_k^{(0)}, \vec{\tilde Z}_1^{(\ell_b)},\ldots,\vec{\tilde Z}_N^{(\ell_b)} \right) \right)  \right\|^2 \nonumber \\
& \quad \quad \leq 2 \E \left( \left\| {\bf X}_m^{(0)} \right\|^2 +  \left\| g_{d_m}\left( F_m\left( \vec{X}_1^{(0)},\ldots,\vec{X}_k^{(0)}, \vec{\tilde Z}_1,\ldots,\vec{\tilde Z}_N \right) \right)  \right\|^2 \right) \nonumber \\
& \quad \quad \leq 2 \E \left( \left\| {\bf X}_m^{(0)} \right\|^2 + n M^2 \right) \nonumber \\
& \quad \quad = 2 n {\bf K}_{m,m} + 2 n M^2 < \infty.   \label{bound21}
}
Thus, from a variant of the Dominated Convergence Theorem (see Appendix \ref{appendixH}), we conclude that, as $b\to\infty$,
\al{ 
&  \E \left\| {\bf X}_m^{(\ell_b)} - g_{d_m}\left( F_m\left( \vec{X}_1^{(\ell_b)},\ldots,\vec{X}_k^{(\ell_b)}, \vec{\tilde Z}_1^{(\ell_b)},\ldots,\vec{\tilde Z}_N^{(\ell_b)} \right) \right)  \right\|^2  \nonumber \\
& \quad \quad \quad  \stackrel{d}{\to} 
\E \left\| {\bf X}_m^{(0)} - g_{d_m}\left( F_m\left( \vec{X}_1^{(0)},\ldots,\vec{X}_k^{(0)}, \vec{\tilde Z}_1,\ldots,\vec{\tilde Z}_N \right) \right)  \right\|^2. \label{expconv}
}
Therefore, we can choose $b$ sufficiently large so that 
\al{ 
& \frac1n \E \left\| {\bf X}_m^{(\ell_b)} - g_{d_m}\left( F_m\left( \vec{X}_1^{(\ell_b)},\ldots,\vec{X}_k^{(\ell_b)}, \vec{\tilde Z}_1^{(\ell_b)},\ldots,\vec{\tilde Z}_N^{(\ell_b)} \right) \right)  \right\|^2 \nonumber \\ 
& \quad \quad \quad \quad \leq \frac1n \E \left\| {\bf X}_m^{(0)} - g_{d_m}\left( F_m\left( \vec{X}_1^{(0)},\ldots,\vec{X}_k^{(0)}, \vec{\tilde Z}_1,\ldots,\vec{\tilde Z}_N \right) \right)  \right\|^2 + \ep/2 \nonumber \\
& \quad \quad \quad \quad \leq D_m + \ep. \label{bound22}
}
The expected distortion of code $\tilde \C$ (with blocklength $nb$) thus satisfies
\aln{
& \frac1{nb} \sum_{\ell = 0}^{b-1} \E \left\| {\bf X}_m^{(\ell)} - g_{d_m}\left( F_m\left( \vec{X}_1^{(\ell)},\ldots,\vec{X}_k^{(\ell)}, \vec{\tilde Z}_1^{(\ell)},\ldots,\vec{\tilde Z}_N^{(\ell)} \right) \right)  \right\|^2 \\ 
& \quad \quad \quad \quad \leq \frac1n \max_{0 \leq \ell \leq b-1}\E \left\| {\bf X}_m^{(\ell)} - g_{d_m}\left( F_m\left( \vec{X}_1^{(\ell)},\ldots,\vec{X}_k^{(\ell)}, \vec{\tilde Z}_1^{(\ell)},\ldots,\vec{\tilde Z}_N^{(\ell)} \right) \right)  \right\|^2 \\
& \quad \quad \quad \quad \leq D_m + \ep,
}
for $m=1,\ldots,k$, since (\ref{expconv}) holds in particular for the sequence 
\aln{
\ell_b = \arg \max_{0 \leq \ell \leq b-1}\E \left\| {\bf X}_m^{(\ell)} - g_{d_m}\left( F_m\left( \vec{X}_1^{(\ell)},\ldots,\vec{X}_k^{(\ell)}, \vec{\tilde Z}_1^{(\ell)},\ldots,\vec{\tilde Z}_N^{(\ell)} \right) \right)  \right\|^2,
}
for $b=1,2,...$.
This concludes the proof of Theorem \ref{theorem2}. 
\end{proof}
\begin{note}
Note that the setup in this section (Theorem \ref{theorem2}, or equivalently Theorem \ref{thm2b}) assumes quadratic distortion as the criterion of distortion between the source and its reconstruction. Indeed, the arguments in the proofs of some of the supporting lemmas above do depend on the nature of the distortion metric. 
However, the overall idea of transforming the source or the channel into approximately Gaussian, and then constructing encoding-decoding schemes with finite encoding or reading precision is independent of the nature of the distortion metric. It can be shown that, under mild conditions, our results carry over to other distortion metrics. 
This in general is not true for the setup in Theorem \ref{theorem1} (or equivalently Theorem \ref{thm1b}), where the proof  hinges, among other things, on the fact that the distortion in the transform and the original domain is the same.
\end{note}


\section{Conclusion}
\label{sec::conclusion}

We considered the problem of distributed compression of correlated sources over a network, for which we established two complementary worst-case results.
The first one is that, under a source covariance constraint, the worst-case source is Gaussian.
The second is that, for  additive-noise networks where the noises satisfy a covariance constraint, the worst-case noise is also Gaussian.
These results provide a theoretical justification for the common adoption of Gaussian models for both the source and the additive noise in distributed  compression problems.

Our approach to establish these results is constructive, in that we describe a systematic way of converting coding schemes designed under Gaussian assumptions into coding schemes that can handle non-Gaussian assumptions. 
The idea behind the construction of such schemes is simple both conceptually and algorithmically, as the DFT transform can be implemented via the tractable FFT, and the remaining part is to employ a good scheme for the multi-terminal Gaussian source or channel, for which there is a well-developed and growing body of literature.

Another interesting aspect of our framework for converting coding schemes designed under Gaussian assumptions into coding schemes for non-Gaussian models is that it only requires the mean and the covariance matrix of the sources or the additive noises. 
Thus, this code conversion scheme can be seen as a way to design coding schemes for distributed compression problems where only the mean and the covariance matrix of the sources or the additive noises are known.
Therefore, this work may provide tools for future research in establishing inner bounds in the distortion region of distributed compression problems with unknown source or noise distributions.

Another possible research direction stemming from this work concerns finding outer bounds to the distortion region of Gaussian problems.
Notice that, even for the Gaussian $k$-source encoding problem, the rate-distortion region is unknown for $k>2$, and finding nontrivial outer bounds is in general difficult. 
In this work, we showed that, given the appropriate covariance constraints, Gaussian sources and Gaussian additive noises are worst-case assumptions.
This means that the distortion region under non-Gaussian assumptions contains the Gaussian distortion region.
Thus, by choosing special source and noise distributions (e.g., discrete distributions), it may be possible to obtain distributed compression problems where interesting outer bounds can be derived.
Our results would then imply that any outer bound derived in this manner is also an outer bound on the distortion region of the corresponding Gaussian problem.
\bibliographystyle{unsrt}



\appendices

\section{Proof of Lemma \ref{convlemma}}
\label{appendixG}
Clearly, it suffices to show that $\left(\til X_1^{(\ell_b)}[0],...,\til X_k^{(\ell_b)}[0]\right)$ converges in distribution to a jointly Gaussian random vector with covariance matrix $\vec K$, as $b \to \infty$.
In order to use the Cram\'er-Wold Theorem \cite{billingsley}, we fix an arbitrary vector $(t_1,...,t_k) \in \R^k$ and we notice that
\al{ 
\sum_{m=1}^k t_m \til X_m^{(\ell_b)}[0] 
& = \sum_{m=1}^k t_m \sum_{j=0}^{b-1} X_m[j] ~ Q(\ell_b,j) \nonumber \\
& = \sum_{j=0}^{b-1} \left( \sum_{m=1}^k t_m X_m[j] \right) Q(\ell_b,j). \label{cramerexp}
}
To characterize the convergence in distribution of (\ref{cramerexp}), we will need the following result.
\begin{theorem}[Lindeberg's Central Limit Theorem \cite{billingsley}] \label{lindthm}
Suppose that for each $b = 1,2,...$, the random variables
$
Y_{b,1}, Y_{b,2},..., Y_{b,b}
$
are independent.
In addition, suppose that, for all $b$ and $i \leq b$, $E[Y_{b,i}] = 0$, and let
\al{
s_b^2 = \sum_{i=1}^b E\left[ Y_{b,i}^2 \right].   \label{sbdef}
} 
Then, if for all $\vep > 0$, Lindeberg's condition
\al{
\frac{1}{s_b^2}\sum_{i=1}^b E\left(Y_{b,i}^2 \, \one\left\{|Y_{b,i}|  \geq \vep s_b\right\}\right) \goesto 0 \text{ as $b \goesto \infty$}
\label{lind}
}
holds, we have that
\aln{
\frac{\sum_{i=1}^b Y_{b,i}}{s_b} \stackrel{d}{\goesto} \Nscr(0,1).
}
\end{theorem}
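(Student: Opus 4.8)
\noindent\emph{Proof proposal.} The plan is to prove convergence of characteristic functions and then invoke L\'evy's continuity theorem. Write $\sigma_{b,i}^2 = E[Y_{b,i}^2]$, so $\sum_{i=1}^b \sigma_{b,i}^2 = s_b^2$, and set $\phi_{b,i}(t) = E\big[e^{\mathrm{i} t Y_{b,i}/s_b}\big]$. The goal is to show $\prod_{i=1}^b \phi_{b,i}(t) \to e^{-t^2/2}$ for every fixed $t \in \R$.

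First I would extract from the Lindeberg condition \eqref{lind} the Feller negligibility statement $\max_{i \le b} \sigma_{b,i}^2 / s_b^2 \to 0$: for any $\vep > 0$ we have $\sigma_{b,i}^2 \le \vep^2 s_b^2 + E\big[Y_{b,i}^2 \one\{|Y_{b,i}| \ge \vep s_b\}\big]$, hence $\max_i \sigma_{b,i}^2/s_b^2 \le \vep^2 + s_b^{-2}\sum_i E[Y_{b,i}^2 \one\{|Y_{b,i}|\ge \vep s_b\}] \to \vep^2$, and $\vep$ is arbitrary. Next, using $E[Y_{b,i}]=0$ and the elementary bound $\big|e^{\mathrm{i}x} - 1 - \mathrm{i}x + x^2/2\big| \le \min\{|x|^3, x^2\}$, I would write $\phi_{b,i}(t) = 1 - \tfrac{t^2 \sigma_{b,i}^2}{2 s_b^2} + \delta_{b,i}$ and control $\sum_i |\delta_{b,i}|$ by splitting on $\{|Y_{b,i}| < \vep s_b\}$ and $\{|Y_{b,i}| \ge \vep s_b\}$: on the first event the remainder integrand is $\le |t|^3 |Y_{b,i}|^3/s_b^3 \le |t|^3 \vep\, Y_{b,i}^2/s_b^2$, which sums over $i$ to $|t|^3 \vep$; on the second event it is $\le t^2 Y_{b,i}^2/s_b^2$ restricted to the Lindeberg set, which sums to $t^2$ times the quantity in \eqref{lind}. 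Hence $\limsup_b \sum_i |\delta_{b,i}| \le |t|^3 \vep$, and letting $\vep \downarrow 0$ gives $\sum_i |\delta_{b,i}| \to 0$.

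To finish, I would combine two standard facts. First, for complex numbers with $|a_i|, |c_i| \le 1$ one has $\big|\prod_i a_i - \prod_i c_i\big| \le \sum_i |a_i - c_i|$; apply this with $a_i = \phi_{b,i}(t)$ (always $|a_i| \le 1$) and $c_i = 1 - \tfrac{t^2\sigma_{b,i}^2}{2s_b^2}$, which satisfies $0 \le c_i \le 1$ once $b$ is large by Feller negligibility, to get $\big|\prod_i \phi_{b,i}(t) - \prod_i c_i\big| \le \sum_i |\delta_{b,i}| \to 0$. Second, $\prod_i c_i \to e^{-t^2/2}$: since $\sum_i \tfrac{t^2\sigma_{b,i}^2}{2s_b^2} = \tfrac{t^2}{2}$ and $\max_i \tfrac{t^2\sigma_{b,i}^2}{2s_b^2}\to 0$, the expansion $\log(1-x) = -x + O(x^2)$ gives $\sum_i \log c_i = -\tfrac{t^2}{2} + O\big(\max_i \tfrac{\sigma_{b,i}^2}{s_b^2}\big) \to -\tfrac{t^2}{2}$. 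Therefore $\prod_i \phi_{b,i}(t) \to e^{-t^2/2}$ for all $t$, and L\'evy's continuity theorem yields $\sum_{i=1}^b Y_{b,i}/s_b \cond \Nscr(0,1)$.

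I expect the main obstacle to be the bookkeeping of the error terms $\delta_{b,i}$: one must first convert \eqref{lind} into the Feller negligibility bound (needed both to keep $|c_i|\le 1$ and to control $\sum_i \log c_i$), and then recognize that the cubic/quadratic two-regime split of $|\delta_{b,i}|$ is exactly what \eqref{lind} is designed to absorb. An alternative route, avoiding characteristic functions, is Lindeberg's replacement argument: bound $\big|E f(\sum_i Y_{b,i}/s_b) - E f(\sum_i g_{b,i})\big|$ for $f \in C_b^3$, where the $g_{b,i} \sim \Nscr(0,\sigma_{b,i}^2/s_b^2)$ are independent, by swapping one summand at a time and Taylor expanding to third order, with the error again controlled by \eqref{lind}. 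In the present paper the theorem is then applied to the triangular array $Y_{b,j} = \big(\sum_{m=1}^k t_m X_m[j]\big) Q(\ell_b,j)$, where verifying \eqref{lind} reduces to the facts $\sum_j Q^2(\ell_b,j) = 1$ and $\max_j Q^2(\ell_b,j) \le 2/b \to 0$ together with finiteness of the second moments of the $X_m[j]$.
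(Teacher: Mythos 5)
Your proof is correct, but note that the paper offers no proof of this theorem at all---it is stated as a textbook result and cited to Billingsley. Your argument is precisely the characteristic-function proof found in that reference: convert the Lindeberg condition into Feller negligibility $\max_i \sigma_{b,i}^2/s_b^2 \to 0$, bound the Taylor remainder in $\phi_{b,i}(t) = 1 - t^2\sigma_{b,i}^2/(2s_b^2) + \delta_{b,i}$ by the two-regime cubic/quadratic estimate so that $\sum_i |\delta_{b,i}| \to 0$, pass from $\prod_i \phi_{b,i}(t)$ to $\prod_i c_i$ via the product inequality for unit-modulus factors, show $\prod_i c_i \to e^{-t^2/2}$ using negligibility, and conclude by L\'evy's continuity theorem. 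All steps are sound (the missing $1/6$ in the bound $|e^{\mathrm{i}x}-1-\mathrm{i}x+x^2/2|\le\min\{|x|^3/6,x^2\}$ is harmless), and the final remark correctly identifies how the theorem is invoked in the paper with $Y_{b,j} = \bigl(\sum_m t_m X_m[j]\bigr)Q(\ell_b,j)$, where $\sum_j Q^2(\ell_b,j)=1$ and $\max_j Q^2(\ell_b,j)\le 2/b$ make the verification of the Lindeberg condition routine.
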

To apply Lindeberg's CLT, we will let, for $j=0,...,b-1$,
\aln{
Y_{b,j+1} = \sqrt b \left( \sum_{m=1}^k t_m X_m[j] \right) Q(\ell_b,j).
}
Then, if we let $\vec K_{u,v}$ be the entry in the $u$th row and $v$th column of $\vec K$, we have
\aln{
s_b^2 & = \sum_{j=1}^b E\left[ Y_{b,j}^2 \right] = b \sum_{j=1}^b  Q^2(\ell_b,j-1)  E \left( \sum_{m=1}^k t_m X_m[j-1] \right)^2 \\
& = b \sum_{1 \leq u,v \leq k} t_u t_v \vec K_{u,v}  \sum_{j=1}^b  Q^2(\ell_b,j-1)  \\
& = b \sum_{1 \leq u,v \leq k} t_u t_v \vec K_{u,v},
} 
regardless of the value of $\ell_b$.
In order to verify Lindeberg's condition, we define $\sigma^2 =\sum_{1 \leq u,v \leq k} t_u t_v \vec K_{u,v}$ and we let $U_{b,j} = Y_{b,j}^2 \, \one \left\{|Y_{b,j}|  \geq \vep s_b \right\} = Y_{b,j}^2 \, \one \left\{|Y_{b,j}|  \geq \vep \sigma \sqrt b \right\}$.
Consider any sequence $j_b$, for $b=1,2,...$, such that $j_b \in \{1,...,b\}$, and any $\delta > 0$.
Then we have that
\aln{
\Pr\left(U_{b,j_b} < \delta\right) & \geq \Pr\left(|Y_{b,j_b}| < \vep \sigma \sqrt{b} \right) \geq \Pr\left(\left|\sum_{m=1}^k t_m X_m[j_b-1]\right|\sqrt2 < \vep \sigma \sqrt{b} \right) \\
& = \Pr\left(\left|\sum_{m=1}^k t_m X_m[0]\right| < \vep \sigma \sqrt{b/2} \right) \goesto 1, 
}
as $b \goesto \infty$, which means that $U_{b,j_b} \stackrel{p}{\goesto} 0$ (i.e., $U_{b,j_b}$ converges in probability to $0$) as $b \goesto \infty$.
Moreover, we have that,
\aln{
|U_{b,j_b}|\le Y_{b,j_b}^2\le 2 \left(\sum_{m=1}^k t_m X_m[j_b-1]\right)^2
}
for $b=1,2,...$,
and
\aln{
\E \left[ 2 \left(\sum_{m=1}^k t_m X_m[0]\right)^2 \right] = 2 \sigma^2 < \infty.
}
Thus by the Dominated Convergence Theorem (see Appendix \ref{appendixH}), we have that $E[U_{b,j_b}]\goesto 0$ as $b \to \infty$.
We conclude that
\aln{
\frac{1}{s_b^2}\sum_{i=1}^b E\left(Y_{b,j}^2 \, \one\left\{|Y_i|  \geq \vep s_b\right\}\right) & = \frac{1}{\sigma^2 b} \sum_{j=1}^b E\left[ U_{b,j} \right] \\
& \leq \frac{1}{\sigma^2} \max_{1 \leq j \leq b} E\left[ U_{b,j} \right]
 \to 0, 
}
as $b \to \infty$, and Lindeberg's condition (\ref{lind}) is satisfied for any $\vep > 0$.
Hence, from Theorem \ref{lindthm}, we have that
\aln{
\frac{\sum_{i=1}^{b} Y_{b,j}}{\sigma \sqrt{b}} \stackrel{d}{\goesto} \Nscr(0,1), 
} 
which implies, from (\ref{cramerexp}), that
\aln{
\sum_{m=1}^k t_m \til X_m^{(\ell_b)}[0] & =  \sum_{j=0}^{b-1} \left( \sum_{m=1}^k t_m X_m[j] \right) Q(\ell_b,j) \\
& = \frac{\sum_{j=1}^{b} Y_{b,j}}{\sqrt{b}} \stackrel{d}{\goesto} \mathcal{N}(0,\sigma^2).
}
Finally, since for a jointly Gaussian vector $(X^G_1,...,X^G_k)$ with mean zero and covariance matrix $\vec K$, we have $\sum_{m=1}^k t_m X^G_m \sim \Nscr(0,\sigma^2)$, we conclude, from the Cram\'er-Wold Theorem that $\left(\til X_1^{(\ell_b)}[0],...,\til X_k^{(\ell_b)}[0]\right)$ converges in distribution to a jointly Gaussian random vector with zero mean and covariance matrix $\vec K$, as $b \to \infty$.

\section{Proof of Lemma \ref{boundedoutputlemma}}
\label{appendixA}
From a coding scheme $\C$ with blocklength $n$ achieving distortion vector $(D_1,...,D_k)$, we will create a sequence of coding schemes $\C^{(m)}$, $m=1,2,...$, obtained by clipping the output of the decoding functions $g_{d_j}$, $j=1,...,k$.
More precisely, coding scheme $\C^{(m)}$ has the same encoding and relaying functions as $\C$, and decoding functions $g_{d_j}^{(m)}$ whose $i$th component is defined as
\aln{
g_{d_j}^{(m)} (y_1,...,y_n)[i] = \left\{
\begin{array}{ll}
m, & \, \text{  if $g_{d_j}(y_1,...,y_n)[i] > m$} \\
-m, & \, \text{  if $g_{d_j}(y_1,...,y_n)[i] < -m$} \\
g_{d_j}(y_1,...,y_n)[i],  & \, \text{  otherwise} \\
\end{array} \right.
}
for $j=1,...,k$, and $i=0,...,n-1$.
Now, consider a fixed $j \in \{1,...,k\}$, and define, for $i = 0,...,n-1$, the event $B_i$ as
\aln{
B_i = \left\{ X_j[i] > m, g_{d_j}\left( Y_{d_j}^n \right)[i] > m \right\} \cup \left\{ X_j[i] < -m, g_{d_j}\left( Y_{d_j}^n \right)[i] < -m \right\}.
}
It is easy to verify that the complementary event is given by
\aln{
B_i^c = \left\{ \left| X_j[i] \right| \leq m \right\} \cup \left\{ \left| g_{d_j}\left( Y_{d_j}^n \right)[i] \right| \leq m \right\} \cup \left\{ X_j[i] > m, g_{d_j}\left( Y_{d_j}^n \right)[i] < -m \right\} \cup \left\{ X_j[i] < -m, g_{d_j}\left( Y_{d_j}^n \right)[i] > m \right\}.
}
For each of the four sub-events in $B_i^c$, it is clear that 
\aln{
\left| X_j[i] - g_{d_j}\left( Y_{d_j}^n \right)[i] \right| \geq \left| X_j[i] - g^{(m)}_{d_j}\left( Y_{d_j}^n \right)[i] \right|.
}
Thus, we can upper bound the expected distortion of the output of decoder $j$ of $\C^{(m)}$ as
\aln{
\E\left[ \left\| X_j^n - g_{d_j}^{(m)}\left( Y_{d_j}^n \right) \right\|^2 \right]
& = \sum_{i=0}^ {n-1} \E\left[ \left( X_j[i] - g_{d_j}^{(m)}\left( Y_{d_j}^n \right) [i] \right)^2 \right] \\
& = \sum_{i=0}^ {n-1} \left\{ \E\left[ \left( X_j[i] - g_{d_j}^{(m)}\left( Y_{d_j}^n \right) [i] \right)^2 \one_{B_i^c} \right] + \E\left[ \left( X_j[i] - g_{d_j}^{(m)}\left( Y_{d_j}^n \right) [i] \right)^2 \one_{B_i} \right] \right\} \\
& \leq \sum_{i=0}^ {n-1} \left\{ \E\left[ \left( X_j[i] - g_{d_j}\left( Y_{d_j}^n \right) [i] \right)^2  \right] + \E\left[ \left( X_j[i] - g_{d_j}^{(m)}\left( Y_{d_j}^n \right) [i] \right)^2 \one_{B_i} \right] \right\} \\
& = \E\left[ \left\| X_j^n - g_{d_j}\left( Y_{d_j}^n \right) \right\|^2 \right] + \sum_{i=0}^ {n-1} \E\left[ \left( X_j[i] - g_{d_j}^{(m)}\left( Y_{d_j}^n \right) [i] \right)^2 \one_{B_i} \right] \\
& \leq \E\left[ \left\| X_j^n - g_{d_j}\left( Y_{d_j}^n \right) \right\|^2 \right] + \sum_{i=0}^ {n-1} \E\left[ \left( X_j[i] \right)^2 \one_{B_i} \right] \\
& = n D_j + n \E\left[ \left( X_j[0] \right)^2 \one_{B_0} \right].
}
Since $\left| X_j[0]^2 \one_{B_0} \right| \leq X_j[0]^2$, $E\left[ X_j[0]^2 \right] < \infty$, and
$X_j[0]^2 \one_{B_0} \stackrel{p}{\to} 0$ as $m \to \infty$, by the Dominated Convergence Theorem (see  Appendix \ref{appendixH}), 
\aln{
\lim_{m \to \infty} \E\left[ \left( X_j[0] \right)^2 \one_{B_0} \right] = 0.
}
Therefore, for any $\ep > 0$, we can pick $m = M$ large enough so that 
\aln{
\tfrac1n \E\left[ \left\| X_j^n - g_{d_j}^{(M)}\left( Y_{d_j}^n \right) \right\|^2 \right] \leq D_j + \ep
\quad \quad \text{and} \quad \quad 
\left\| g_{d_j}^{(M)} ( y_1,...,y_n) \right\|_\infty \leq M,
}
for all $j = 1,...,K$, and we may let $\tilde \C = \C^{(M)}$.

\section{Proof of Lemma \ref{densitylemma}}
\label{appendixB}
For the sake of simplicity, we will consider the case $k=2$ and $i = 2$ (i.e, $Y_1$ is quantized to $\tilde{Y}_1$).
The proof for $k>2$ follows via a straightforward generalization. The proof follows similar lines of thought as Lemma 3 in \cite{wcnoisefull}, we state here the required steps for completeness. The density $f_{\tilde{Y}_1,Y_2}(y_1,y_2)$ can be written for almost all tuples $(y_1,y_2)$ as,
\bea
f_{\tilde{Y}_1,Y_2}(y_1,y_2)&=&2^{\rho}\E[\1_{\{y_1-\lfloor Y_1\rfloor_{\rho}\in(-2^{-\rho-1},2^{-\rho-1})\}}|Y_2= y_2] \; f_{Y_2}(y_2) \nonumber \\
&=&2^{\rho}\Pr[y_1-\lfloor Y_1\rfloor_{\rho}\in(-2^{-\rho-1},2^{-\rho-1})|Y_2= y_2] \;  f_{Y_2}(y_2) \nonumber \\
&=&2^{\rho}\Pr[\lfloor Y_1\rfloor_{\rho}\in(y_1-2^{-\rho-1},y_1+2^{-\rho-1})|Y_2= y_2] \; f_{Y_2}(y_2) \nonumber \\
&=&2^{\rho}\Pr[\lfloor 2^{\rho}Y_1\rfloor\in(y_12^{\rho}-\frac{1}{2},y_12^{\rho}+\frac{1}{2})|Y_2= y_2] \; f_{Y_2}(y_2) \nonumber \\
&=&2^{\rho}\Pr[2^{\rho}Y_1\in(\lceil y_12^{\rho}-\frac{1}{2}\rceil,\lceil y_12^{\rho}+\frac{1}{2}\rceil)|Y_2= y_2] \; f_{Y_2}(y_2) \nonumber \\
&=&2^{\rho}\Pr[Y_1\in(2^{-\rho}\lceil y_12^{\rho}-\frac{1}{2}\rceil,2^{-\rho}\lceil y_12^{\rho}+\frac{1}{2}\rceil)|Y_2= y_2] \; f_{Y_2}(y_2) \nonumber \\
&=&2^{\rho} \int_{a_{\rho}}^{b_{\rho}}f_{Y_1,Y_2}(x_1,y_2)dx_1,
\eea
where $a_{\rho}=2^{-\rho}\lceil y_12^{\rho}-\frac{1}{2}\rceil$ and $b_{\rho}=2^{-\rho}\lceil y_12^{\rho}+\frac{1}{2}\rceil$, such that $b_{\rho}=a_{\rho}+2^{-\rho}$ which implies, $a_{\rho}\rightarrow y_1$. 
What is left to prove is that 
\aln{
\lim_{\rho\rightarrow\infty}2^{\rho} \int_{a_{\rho}}^{b_{\rho}}f_{Y_1,Y_2}(x_1,y_2)dx_1 =f_{Y_1,Y_2}(y_1,y_2)
}
for almost all tuples $(y_1,y_2)$. But this follows using the proof of Lemma 3 in \cite{wcnoisefull},  replacing the integrand function appropriately. 

\section{Proof of Lemma \ref{frlemma}}
\label{appendixC}

We prove the lemma by induction on the size $t$ of the random vector $Y$.
If $Y$ is a scalar, i.e., $t = 1$, let $g_u(y) = F_{Y|U}(y|u)$, where $F_{Y|U}$ is the conditional distribution function of $Y$ given $U$.
Then we let $Q$ be a uniform random variable on $[0,1]$ (independent of $U$), and we let $h(u,q) = g_u^{-1}(q)$ (where $^{-1}$ represents the generalized inverse).
It is then clear that $h(u,Q)$ is distributed as $Y$ conditioned on $U = u$ for any $u$, which implies that $(h(U,Q),U)$ is distributed as $(Y,U)$.

Now suppose the lemma is true when the size of $Y$ is $t$.
Consider a random vector $Y' = (Z^{t},\tilde Y)$, where $Z^{t}$ has size $t$ and $\tilde Y$ is a scalar.
Then there exists a random vector $Q'$ and a function $h'$ such that $(h'(U,Q'),U)$ is distributed as $(Z^t,U)$.
Now let $g_{u,z^t}(y) = F_{\tilde Y|U,Z^t}(y|u,z^t)$ be the conditional distribution function of $\tilde Y$ given $U$ and $Z^t$.
Then we let $Q = (Q',Q'')$, where $Q''$ is a uniform random variable on $[0,1]$ (independent of $U$ and $Q'$), and we let $h(u,(q',q'')) = g_{u,h'(u,q')}^{-1}(q'')$.
Then $(h(U,Q),U)$ is distributed as $(\tilde Y,U)$, and $(h'(U,Q'),h(U,Q),U)$ is distributed as $(Y',U) = (Z^t,\tilde Y,U)$.


\section{Proof of Lemma \ref{finiteprecisionlemma}}
\label{appendixD}
Achievability of the distortion tuple $(D_1,\cdots,D_k)$ implies the existence of a
coding scheme $\mathcal{C}$ with block length $n$, such that,
\bea
\frac{1}{n}\E\Big{[}\parallel
\mathbf{X}_m-\hat{\mathbf{X}}_m\parallel^2\Big{]}\le D_m,\ \forall\ m=[1:k].
\eea
Using Lemma \ref{boundedoutputlemma}, without loss of generality we will suppose that,  
\aln{
\left\| g_{d_j} ( y_1,...,y_n) \right\|_\infty \leq M,
}
for each destination $d_j \in \D$, for a fixed $M>0$. Note that, using Lemma \ref{frlemma}, the
memoryless channel $f_{Y_1,\cdots,Y_N|U_1,\cdots,U_N}$ can be equivalently
represented as a deterministic channel $Y_i=h_i(U_1,\cdots,U_N,{\bf Z}),\ \forall
i=[1:N]$ where ${\bf Z}$ is a random vector, independent of the
channel inputs, $(U_1,\cdots,U_N)$. Thus for a fixed block length $n$, given
the description of our encoding procedure, we can write, for some functions
$F_i$ depending on
$h_i$,
$\mathbf{Y}_i=F_i(\mathbf{X}_{1},\mathbf{X}_{2},\cdots,\mathbf{X}_{k},
\mathbf {Z}),\ \forall\ i\in [1:N]$, as the evolution of the system depends
only on the sources 
and the random vector ${\bf Z}$. Thus, noting that the reconstruction for the
$m${th} source is $\mathbf{\hat{X}}_m=g_{d_m}(\mathbf{Y}_m)$, the above equation on
distortion constraints can be equivalently written as,
\bea
\frac{1}{n}\E\Big{[}\parallel
\mathbf{X}_m-g_{d_m}(F_m(\mathbf{X}_1,\cdots,\mathbf{X}_k,\mathbf{Z}
))\parallel^2\Big { ] } \le D_m,\ \forall\ m=[1:k], 
\eea
To prove this lemma we have to show that, given an $\epsilon>0$, we can construct a scheme $\mathcal{C}_{\rho}$ for some $\rho=[\rho_1,\cdots,\rho_k]\in\N^k$,  where the encoding function at each source $s_m \in \mathcal{S}$ satisfies
\aln{
\tilde{f}_{s_m,t} ( x_m^n, y^{t-1}) = \tilde{f}_{s_m,t} ( \left\lfloor x_m^n \right\rfloor_{\rho_m}, y^{t-1}),\ \forall\ m\in[1:k]
}
for any $x_m^n \in \R^n$, any $y^{t-1} \in \R^{t-1}$, and any time $t$, such that, 
 
\bea
\frac{1}{n}\E\Big{[}\parallel
\mathbf{X}_m-g_{d_m}(F_m(\lfloor\mathbf{X}_1\rfloor_{\rho_1},\cdots,\lfloor\mathbf{X
}_k\rfloor_{\rho_k} , \mathbf { Z}
))\parallel^2\Big { ] } \le D_m+\epsilon,\ \forall\ m=[1:k], 
\eea
To prove this, we consider the following randomized encoding scheme
$\mathcal{C}_{\rho}$. 
Note the disclaimer that, in our definition of schemes,
the encoding, relaying and decoding operations were defined to be
deterministic, but for the time being we will allow for randomization and later
show that it can be dispensed with. The scheme $\mathcal{C}_{\rho}$, operated
in blocks of length $n$, uses the same relaying encoding and destination
encoding and decoding functions, the only change being in the source encoding. At
the source node $s_m$ the source is encoded as,
$U_{s_m,t}=f_{s_m,t}(\mathbf{\tilde{X}}_m,Y^{t-1})$, $\forall\ 
t\in[1:N]$, where $\mathbf{\tilde{X}}_m=\{\mathbf{\tilde{X}}_m[t]\}_{t=0}^{n-1}$,
such that $\mathbf{\tilde{X}}_m[t]=\lfloor
\mathbf{X}_m[t]\rfloor_{\rho_m}+V_{\rho_m}$, where $V_{\rho_m}$ is a random
variable independent of the sources in the network, uniformly distributed in
$(-2^{-\rho_m-1},2^{-\rho_m-1})$. 
Consider
\bea
\label{eq1precision}
&&\frac{1}{n}
\E\Big { [ } \parallel
\mathbf{X}_m-g_{d_m}(F_m(\mathbf{\tilde{X}}_1,\cdots,
\mathbf { \tilde{X}
}_k, \mathbf { Z }
))\parallel^2\Big { ] } \nonumber \\
&\le&\underbrace{\frac{1}{n}
\E\Big { [ } \parallel
\mathbf{X}_m-\mathbf{\tilde{X}}_m\parallel^2\Big { ]
}}_{(I)}+\underbrace{\frac{1}{n}
\E\Big { [ } \parallel
\mathbf{\tilde{X}}_m-g_{d_m}(F_m(\mathbf{\tilde{X}}_1,\cdots,
\mathbf { \tilde{X}
}_k , \mathbf { Z }
))\parallel^2\Big { ]
}}_{(II)}\nonumber\\
&&+\underbrace{\frac{1}{n}\E\Big{[}\parallel\mathbf{X}_m-\mathbf{\tilde{X}}
_m\parallel \parallel \mathbf { \tilde { X } }
_m-g_{d_m}(F_m(\mathbf{\tilde{X}}_1,\cdots,
\mathbf { \tilde{X}
}_k , \mathbf { Z }
))\parallel \Big{]}}_{(III)}.
\eea
Note that
\bea
|\mathbf{X}_m[t]-\mathbf{\tilde{X}}_m[t]|&=&|-V_{\rho_m}+\mathbf{X}_m[t]
-\lfloor\mathbf{X}_m[t]\rfloor_{\rho_m}| \nonumber \\
&=&|-V_{\rho_m}+2^{-\rho_m}(2^{\rho_m}\mathbf{X}_m[t]
-\lfloor2^{\rho_m}\mathbf{X}_m[t]\rfloor ) | \nonumber \\
&\le&|V_{\rho_m}|+2^{-\rho_m}|2^{\rho_m}\mathbf{X}_m[t]
-\lfloor2^{\rho_m}\mathbf{X}_m[t]\rfloor| \nonumber \\
&\le& 2^{-\rho_m-1}+2^{-\rho_m} \nonumber \\
&\le& 2^{-\rho_m+1},
\eea
which implies $\parallel \mathbf{X}_m-\mathbf{\tilde{X}}_m\parallel \le
\sqrt{n}2^{1-\rho_m}$. This further implies that the term (I) of (\ref{eq1precision}) is bounded as
\bea
\frac{1}{n}
\E\Big { [ } \parallel
\mathbf{X}_m-\mathbf{\tilde{X}}_m\parallel^2\Big { ]
}\le 2^{2-2\rho_m},
\eea
implying that, in the limit, term (I) vanishes. 
Define the (measurable) functions
$\mathbf{H}_m(\cdots) :
\underbrace{\mathbf{R}^n\times\cdots\times\mathbf{R}^n}_{\mbox{$k+1$
times}}\rightarrow \mathbf{R}, \forall\ m\in[1:k]$ as \bea
\mathbf{H}_m(\mathbf{{y}}_1,\cdots,\mathbf{{y}}_k,\mathbf{z})=\parallel
\mathbf{y}_m-g_{d_m}(F_m(\mathbf{{y}}_1,\cdots,
\mathbf { {y}
}_k, \mathbf { z }
))\parallel.
\eea

Since $\mathbf{Z}$ is independent of the sources, using Lemma \ref{densitylemma}, we have the following convergence of the joint densities,
\bea
\lim_{\rho_m\rightarrow\infty}
f(\mathbf {
X } _1 , \cdots
,\mathbf{\tilde{X}}_m,\cdots,\mathbf{X}_k,\mathbf{Z})=f(\mathbf {
X } _1 , \cdots
,\mathbf{X}_m,\cdots,\mathbf{X}_k,\mathbf{Z}),\ \forall\ m\in[1:k].
\eea
Using the above result we have that term (II) in (\ref{eq1precision}) satisfies
\al{
&\lim_{\rho_1\rightarrow\infty}\cdots\lim_{\rho_k\rightarrow\infty}\frac{1}{n
}\E\Big { [ } \parallel \mathbf { \tilde { X
} }
_m-g_{d_m}(F_m(\mathbf{\tilde{X}}_1,\cdots,
\mathbf { \tilde{X}
}_k , \mathbf { Z }
))\parallel^2 \Big{]}\nonumber\\
&=\lim_{\rho_1\rightarrow\infty}\cdots \lim_{\rho_k\rightarrow\infty}\frac{1}{n
}\E\Big { [ }H(\mathbf{\tilde{X}}_1,\cdots,\mathbf{\tilde{X}}_k,Z)\Big{]} \nonumber \\
&\stackrel{(a)}{=}\lim_{\rho_1\rightarrow\infty}\cdots\lim_{\rho_{k-1}
\rightarrow\infty } \frac { 1 } { n
}\E\Big { [
}H(\mathbf{\tilde{X}}_1,\cdots,\mathbf{\tilde{X}}_{k-1},\mathbf{X}_k,Z)\Big{]} \nonumber \\
&\stackrel{(b)}{=} \frac { 1 } { n
}\E\Big { [
}H(\mathbf{{X}}_1,\cdots,\mathbf{X}_k,Z)\Big{]} \nonumber \\
&\le \frac{1}{n}\E\Big{[}\parallel
\mathbf{X}_m-g_{d_m}(F_m(\mathbf{X}_1,\cdots,\mathbf{X
}_k , \mathbf { Z }
))\parallel^2\Big { ] } \le D_m
}
where $(a)$ follows from the fact that pointwise convergence of the density implies convergence
in distribution of a (measurable) function of the random variable and this implies convergence in expectation via the Dominated Convergence Theorem (see Appendix \ref{appendixH}), as we have 
from the fact that $g_m(\cdot)$ is bounded (say by $M$), 
\al{ 
& \frac{1}{n}\E \left\| {\bf X}_m - g_{d_m}\left( F_m\left( \vec{X}_1...,\vec{X}_k, \vec{Z} \right) \right)  \right\|^2 \leq \frac{2}{n} \E \left( \left\| {\bf X}_m \right\|^2 +  \left\| g_{d_m}\left( F_m\left( \vec{X}_1,...,\vec{X}_k, \vec{\tilde Z}\right) \right)  \right\|^2 \right) \nonumber \\
& \quad \quad \leq 2 \E \left( \left\| {\bf X}_m \right\|^2 +  M^2 \right) = 2 {\bf K}_{m,m} + 2 M^2 < \infty,
}
and $(b)$ follows from similarly repeating
$(a)$ by taking one limit at a time. 

Now bounding the cross term (III) in (\ref{eq1precision}), 

\bea
&&\frac{1}{n}\E\Big{[}\parallel\mathbf{X}_m-\mathbf{\tilde{X}}
_m\parallel \parallel \mathbf { \tilde { X } }
_m-g_{d_m}(F_m(\mathbf{\tilde{X}}_1,\cdots,
\mathbf { \tilde{X}
}_k, \mathbf { Z }
))\parallel \Big{]}\nonumber\\
&\le&\frac{1}{\sqrt{n}}2^{1-\rho_m}\E\Big{[}\parallel \mathbf { \tilde { X } }
_m-g_{d_m}(F_m(\mathbf{\tilde{X}}_1,\cdots,
\mathbf { \tilde{X}
}_k , \mathbf { Z }
))\parallel \Big{]} \nonumber \\
&\le&\frac{1}{\sqrt{n}}2^{1-\rho_m}\sqrt{\E\Big{[}\parallel \mathbf { \tilde { X
} }
_m-g_{d_m}(F_m(\mathbf{\tilde{X}}_1,\cdots,
\mathbf { \tilde{X}
}_k , \mathbf { Z }
))\parallel^2 \Big{]}}
\eea
and using the bound on the term (II), implies that in limit this term is bounded as $2^{1-\rho_m}\sqrt{D_m}$ which vanishes.
Hence we have proved that
\bea
\lim_{\rho_1\rightarrow\infty}\cdots\lim_{\rho_k\rightarrow\infty}\frac{1}{n}
\E\Big { [ } \parallel
\mathbf{X}_m-g_{d_m}(F_m(\mathbf{\tilde{X}}_1,\cdots,
\mathbf { \tilde{X}
}_k, \mathbf { Z }
))\parallel^2\Big { ] }&\le& \frac{1}{n}\E\Big{[}\parallel
\mathbf{X}_m-g_{d_m}(F_m(\mathbf{X}_1,\cdots,\mathbf{X
}_k, \mathbf { Z }
))\parallel^2\Big { ] } \nonumber \\
&\le& D_m.
\eea
Thus for any $\epsilon>0$, we can choose $\rho\in\N^k$, with components large enough so 
$\mathcal{C}_{\rho}$ achieves the distortion tuple, $(D_1+\epsilon,\cdots,D_k+\epsilon)$.
What is left is to show we can dispense away with random encoders. This is
argued in a standard manner by choosing the best randomizations
$\mathbf{V}_{i}$'s at respective encoders, as done in \cite{wcnoisefull}.

\section{Proof of Lemma  \ref{finiteprecisionlemma2}}
\label{appendixE}


The proof of Lemma \ref{finiteprecisionlemma2} follows similar steps to those in the proof of Lemma \ref{finiteprecisionlemma}.
We start by noticing that, by definition, if the distortion tuple $(D_1,\cdots,D_k)$ is achievable on an AWGN network, then we must have a coding scheme
$\mathcal{C}$ with block length $n$, such that,
\bea
\frac{1}{n}\E\Big{[}\parallel
\mathbf{X}_m-\hat{\mathbf{X}}_m\parallel^2\Big{]}\le D_m,\ \forall\ m=[1:k].
\eea
Using Lemma \ref{boundedoutputlemma}, without loss of generality we will suppose that
\aln{
\left\| g_{d_j} ( y_1,...,y_n) \right\|_\infty \leq M,
}
for each destination $d_j \in \D$.
We will build a randomized coding scheme $\C_\rho$, for $\rho = (\rho_1,...,\rho_N)$ by defining 
the encoding function $\tilde f_{s_m,t}$ at each source $s_m \in \mathcal{S}$ as
\aln{
\tilde f_{s_m,t} ( \vec X_m, Y_{s_m}[0:t-1]) = f_{s_m,t} (  \vec X_m, \tilde Y_{s_m}^{(\rho)}[0:t-1]),
}
and the encoding functions $\tilde f_{i,t}$ at each node $i \in \Rscr \cup \Dscr$ as
\aln{
\tilde f_{i,t} (Y_i[0:t-1]) = f_{s_m,t} ( \tilde Y_i^{(\rho)}[0:t-1]),
}
where $f_{s_m,t}$ and $f_{i,t}$ are the encoding functions of the original coding scheme $\C$ and $\tilde Y_i^{(\rho)}[t]$ is the effective received signal at node $i$ at time $t$, obtained as
\aln{
\tilde Y_i^{(\rho)}[t] = \left\lfloor Y_i[t]\right\rfloor_{\rho_i} + V_{i}^{(\rho)}[t]
}
where $V_{i}^{(\rho)}[t]$ is an i.i.d.~sequence of random variables drawn from $(-2^{-\rho_i-1},2^{-\rho_i-1})$, independent of the transmit and receive signals in the network.

Now let $\vec X = (\vec X_1,...,\vec X_k)$ be the vector of length $n k$ with the $k$ source sequences, and let $\vec Y$ be the random vector of length $n N$ corresponding to all the received signals at all nodes during the $n$ time steps in the block if code $\C$ is used. 
We also write $\vec Y = \left( \vec Y[0], ..., \vec Y[n-1] \right)$, where $\vec Y[t] = (Y_{1}[t],...,Y_{N}[t])$ is the random vector of received signals at all $N$ nodes at time $t$, for $0 \leq t \leq n-1$.
Therefore, the conditional joint density of $\vec Y$ conditioned on $\vec X = \vec x$ can be expressed as
\al{ \label{gprod} \rescnt
f_{\vec Y | \vec X} (\vec y|\vec x) & =  \prod_{t=0}^{n-1}  f_{\vec Y[t]| \vec Y[0],..., \vec Y[t-1],\vec X} \left( \left. \vec y[t] \right| \vec y[0],...,\vec y[t-1],\vec x \right).
} \rescnt
Similarly, we let $\vec {\tilde Y}^{(\rho)}$ be the vector of $n N$ effective received signals if 
code $ \C_{\rho}$ is used instead.
We also let  $\vec {\tilde Y}^{(\rho)} = \left( \vec {\tilde Y}^{(\rho)}[0], ..., \vec {\tilde Y}^{(\rho)}[n-1] \right)$, where $\vec {\tilde Y}^{(\rho)}[t] = \left({\tilde Y}^{(\rho)}_{1}[t],...,{\tilde Y}^{(\rho)}_{N}[t]\right)$.
The conditional joint density of $\vec{\tilde Y}^{(\rho)}$ conditioned on $\vec X = \vec x$ can be expressed as
\al{ \label{grprod} \rescnt
f_{\vec {\tilde Y}^{(\rho)} | \vec X} (\vec y|\vec x) & =  \prod_{t=0}^{n-1} f_{\vec{\tilde Y}^{(\rho)}[t]| \vec {\tilde Y}^{(\rho)}[0],..., \vec {\tilde Y}^{(\rho)}[t-1],\vec X} \left( \left. \vec y[t] \right| \vec y[0],...,\vec y[t-1],\vec x \right).
} \rescnt
By applying Lemma \ref{densitylemma} $N$ times, for any choice of previously received signals $\vec y[0],...,\vec y[t-1]$ and source sequences $\vec x$, we have that
\aln{
\lim_{\rho_1 \to \infty} \cdots \lim_{\rho_N \to \infty} & f_{\vec{\tilde Y}^{(\rho)}[t]| \vec {\tilde Y}^{(\rho)}[0],..., \vec {\tilde Y}^{(\rho)}[t-1],\vec X} \left( \left. \vec y[t] \right| \vec y[0],...,\vec y[t-1],\vec x \right) \\
 = & f_{\vec{Y}[t]| \vec { Y}[0],..., \vec {Y}[t-1],\vec X} \left( \left. \vec y[t] \right| \vec y[0],...,\vec y[t-1],\vec x \right),
}
for almost all $\vec y[t]$.
Therefore, (\ref{gprod}) and (\ref{grprod}) imply that
\aln{
\lim_{\rho_1 \to \infty} \cdots \lim_{\rho_N \to \infty} & \prod_{t=0}^{n-1} f_{\vec{\tilde Y}^{(\rho)}[t]| \vec {\tilde Y}^{(\rho)}[0],..., \vec {\tilde Y}^{(\rho)}[t-1],\vec X} \left( \left. \vec y[t] \right| \vec y[0],...,\vec y[t-1],\vec x \right) \\
 = & \prod_{t=0}^{n-1}  f_{\vec Y[t]| \vec Y[0],..., \vec Y[t-1],\vec X} \left( \left. \vec y[t] \right| \vec y[0],...,\vec y[t-1],\vec x \right),
}
and, in particular, we can choose a sequence $\rho[i] = (\rho_1[i],...,\rho_N[i])$, $i=1,2,...$, such that
\aln{
\lim_{i \to \infty} & \prod_{t=0}^{n-1} f_{\vec{\tilde Y}^{(\rho[i])}[t]| \vec {\tilde Y}^{(\rho[i])}[0],..., \vec {\tilde Y}^{(\rho[i])}[t-1],\vec X} \left( \left. \vec y[t] \right| \vec y[0],...,\vec y[t-1],\vec x \right) \\
 = & \prod_{t=0}^{n-1}  f_{\vec Y[t]| \vec Y[0],..., \vec Y[t-1],\vec X} \left( \left. \vec y[t] \right| \vec y[0],...,\vec y[t-1],\vec x \right),
}
%
We conclude that $f_{\vec{\tilde Y}^{(\rho[i])}|\vec X} \left( \vec y |\vec x \right) \to f_{\vec Y|\vec X} (\vec y|\vec x)$ as $i \to \infty$ for almost all $\vec y \in \R^{n N}$ and any $\vec x$.
By Scheff\'e's Theorem \cite{billingsley}, pointwise convergence of the density implies convergence in total variation.
This, in turn, implies convergence in total variation of $\left\| \vec X_m - g_{d_m}(\tilde {\vec Y}_{d_m}^{(\rho[i])}) \right\|^2$ to
 $\left\| \vec X_m - g_{d_m}({\vec Y}_{d_m}) \right\|^2$ as $i\to\infty$, which clearly implies that
 \aln{
\left\| \vec X_m - g_{d_m}(\tilde {\vec Y}_{d_m}^{(\rho[i])}) \right\|^2 \stackrel{d}{\to} 
\left\| \vec X_m - g_{d_m}({\vec Y}_{d_m}) \right\|^2.
 }
From the Dominated Convergence Theorem (Appendix \ref{appendixH}), which can be used since
\aln{
& \E \left\| \vec X_m - g_{d_m}(\tilde {\vec Y}_{d_m}^{(\rho[i])}) \right\|^2 \nonumber \\
& \quad \quad \leq 2 \E \left( \left\| {\bf X}_m \right\|^2 +  \left\| g_{d_m}(\tilde {\vec Y}_{d_m}^{(\rho[i])}) \right\|^2 \right) \nonumber \\
& \quad \quad \leq 2 \E \left( \left\| {\bf X}_m \right\|^2 +  M^2 \right) \nonumber \\
& \quad \quad = 2 {\bf K}_{m,m} + 2 M^2 < \infty.
}
We conclude that
\aln{
\lim_{i \to \infty} \frac1n \E \left[ \left. \left\| \vec X_m - g_{d_m}(\tilde {\vec Y}_{d_m}^{(\rho[i])}) \right\|^2 \right| \vec X = \vec x \right] = 
\frac1n \E \left[ \left. \left\| \vec X_m - g_{d_m}({\vec Y}_{d_m}) \right\|^2 \right| \vec X = \vec x \right]
}
for any fixed $\vec X= \vec x$, and for each decoder $d_m$.
Thus, the random variable $\E \left[ \left. \left\| \vec X_m - g_{d_m}(\tilde {\vec Y}_{d_m}^{(\rho[i])}) \right\|^2 \right| \vec X \right]$ converges surely to $\E \left[ \left. \left\| \vec X_m - g_{d_m}({\vec Y}_{d_m}) \right\|^2 \right| \vec X \right]$.
Finally, by noticing that
\aln{
\E \left[ \left. \left\| \vec X_m - g_{d_m}(\tilde {\vec Y}_{d_m}^{(\rho[i])}) \right\|^2 \right| \vec X  \right]
\leq 2 \E \left[ \left. \left\| \vec X_m \right\|^2 \right| \vec X  \right] + 2 M^2
}
and that
\aln{
\E \left[ \E \left[ \left. \left\| \vec X_m \right\|^2 \right| \vec X \right] + 2 M^2 \right] = 2 {\vec K}_{m,m} + 2 M^2,
}
a second application of the Dominated Convergence Theorem implies that
\aln{
& \lim_{i \to \infty} \frac1n \E \left[ \left\| \vec X_m - g_{d_m}(\tilde {\vec Y}_{d_m}^{(\rho[i])}) \right\|^2  \right] = 
\lim_{i \to \infty} \frac1n \E\left[ \E \left[ \left. \left\| \vec X_m - g_{d_m}(\tilde {\vec Y}_{d_m}^{(\rho[i])}) \right\|^2 \right| \vec X  \right] \right] \\
& \quad \quad = 
\frac1n \E \left[ \E \left[ \left. \left\| \vec X_m - g_{d_m}({\vec Y}_{d_m}) \right\|^2 \right| \vec X  \right] \right]
= \frac1n \E \left[ \left\| \vec X_m - g_{d_m}({\vec Y}_{d_m}) \right\|^2  \right]
}
for each decoder $d_m$.
Thus we can choose large enough $i$ so that performance of
$\mathcal{C}_{\rho[i]}$ will be arbitrarily close to that of $\mathcal{C}$.
What is left is to show we can dispense away with random encoders. This is
argued in a standard manner by choosing the best randomizations
$\mathbf{V}_{i}$'s at respective encoders. 

\section{Proof of Lemma  \ref{continuouslemma}}
\label{appendixF}
Denote the set $\mathscr{S}(\rho)=\{x\in\R^a:2^\rho x\in\Z^a\}$, where $\Z$ is the set of integers. 
Note that the function in the theorem can take values $f(y)$ where $y\in\mathscr{S}(\rho)$. Now for each $y\in\mathscr{S}(\rho)$, define the set $S(y)=\{x\in\R^a : x\neq y, \lfloor x\rfloor_{\rho}=y \}$, which are disjoint for different values of $y\in\mathscr{S}(\rho)$ and cover the whole space $\R^a$.  Since $f$ takes a constant value in each of the sets $S(\cdot)$, the only regions of discontinuity are the boundaries of these regions. But these boundaries are disjoint bounded rectangles each of which has Lebesgue measure zero, implying the total region of discontinuity has zero measure.
Thus $f$ is locally constant almost-everywhere (and hence continuous).

\section{Dominated Convergence Theorem}
\label{appendixH}
We require the following version of the Dominated Convergence Theorem.


\begin{theorem}
Suppose we have a sequence of random vectors ${\bf Z}_n \in \R^a$ converging weakly to ${\bf Z}$, and two almost-everywhere continuous functions $f, g : \R^a \to \R$ such that $0 \leq f \leq g$.
Then, if $E[g({\bf Z}_n)] = E[g({\bf Z})] = c < \infty$ for all $n$, we have $\lim_{n \to \infty} E[f({\bf Z}_n)] = E[f({\bf Z})]$.
\end{theorem}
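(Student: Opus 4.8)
The plan is to deduce the result from the Skorokhod representation theorem together with Fatou's lemma applied to both $f$ and $g-f$. First, since $f$ and $g$ are continuous almost everywhere and, in all the applications of this theorem in the paper, the limiting random vector ${\bf Z}$ has an absolutely continuous law, the discontinuity sets $D_f$ and $D_g$ of $f$ and $g$ are null under the law $P_{\bf Z}$. By the Skorokhod representation theorem, since ${\bf Z}_n$ converges weakly to ${\bf Z}$ on $\R^a$, there is a probability space carrying random vectors ${\bf Z}_n'$, $n=1,2,\ldots$, and ${\bf Z}'$ with ${\bf Z}_n' \stackrel{d}{=} {\bf Z}_n$, ${\bf Z}' \stackrel{d}{=} {\bf Z}$, and ${\bf Z}_n' \to {\bf Z}'$ almost surely. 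Because $P({\bf Z}' \in D_f) = P({\bf Z}' \in D_g) = 0$, the functions $f$ and $g$ are continuous at ${\bf Z}'$ almost surely, and hence $f({\bf Z}_n') \to f({\bf Z}')$ and $g({\bf Z}_n') \to g({\bf Z}')$ almost surely.

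Next I would apply Fatou's lemma twice on this common space. Applied to the nonnegative sequence $f({\bf Z}_n')$ it gives $\liminf_n E[f({\bf Z}_n')] \ge E[f({\bf Z}')]$, i.e.\ $\liminf_n E[f({\bf Z}_n)] \ge E[f({\bf Z})]$. Applied to the nonnegative sequence $(g-f)({\bf Z}_n')$ it gives $\liminf_n E[(g-f)({\bf Z}_n)] \ge E[(g-f)({\bf Z})]$. Since $c<\infty$ and $0 \le f \le g$, every expectation here is finite, so we may write $E[(g-f)({\bf Z}_n)] = c - E[f({\bf Z}_n)]$ and $E[(g-f)({\bf Z})] = c - E[f({\bf Z})]$. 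Substituting yields $\liminf_n \bigl(c - E[f({\bf Z}_n)]\bigr) \ge c - E[f({\bf Z})]$, which rearranges to $\limsup_n E[f({\bf Z}_n)] \le E[f({\bf Z})]$. Combining this with the first inequality gives $\lim_n E[f({\bf Z}_n)] = E[f({\bf Z})]$, as claimed.

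The only delicate point is the continuous-mapping step: Fatou's lemma on the Skorokhod space needs $f({\bf Z}_n') \to f({\bf Z}')$ almost surely, which requires the discontinuity sets of $f$ and $g$ to be null for the \emph{law} of ${\bf Z}$, not merely for Lebesgue measure. Thus the hypothesis should be read as ``continuous off a $P_{\bf Z}$-null set''; in this paper that is automatic, since the relevant limiting vectors (images of jointly Gaussian sources, or of the functional-representation noise $\vec Z$) are absolutely continuous, so a Lebesgue-null discontinuity set is also $P_{\bf Z}$-null. An alternative that sidesteps the $f$/$g-f$ split is to invoke directly the generalized (Pratt) dominated convergence theorem on the Skorokhod space: one has $0 \le f({\bf Z}_n') \le g({\bf Z}_n')$, $g({\bf Z}_n') \to g({\bf Z}')$ a.s.\ with $E[g({\bf Z}_n')] \to E[g({\bf Z}')]$, and $f({\bf Z}_n') \to f({\bf Z}')$ a.s., from which $E[f({\bf Z}_n')] \to E[f({\bf Z}')]$ follows immediately. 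Either route completes the proof, since all expectations depend only on the distributions of the random vectors involved.
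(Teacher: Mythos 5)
Your proof is correct and uses essentially the same decomposition as the paper: apply a Fatou-type lower bound to $f({\bf Z}_n)$ and to $(g-f)({\bf Z}_n)$, then combine. The only mechanical difference is that the paper invokes Billingsley's Theorem 25.11 (the weak-convergence form of Fatou's lemma) directly, while you reconstruct it by passing to a Skorokhod representation and applying the ordinary Fatou lemma; these are equivalent, and indeed the Skorokhod route is the standard way to prove Theorem 25.11. Your caveat about the continuous-mapping step is a genuine point the paper glosses over: the hypothesis ``almost-everywhere continuous'' must really mean ``continuous off a $P_{\bf Z}$-null set'' for $f({\bf Z}_n)\Rightarrow f({\bf Z})$ to follow, and this holds in the paper's applications only because the limiting vectors there have absolutely continuous laws (jointly Gaussian, or absolutely continuous images under the functional representation). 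The alternative you mention — Pratt's generalized dominated convergence — would also work and is, if anything, cleaner, but it is the same argument repackaged.
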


\begin{proof}
If we let $X_n = f({\bf Z}_n)$, $Y_n = g({\bf Z}_n)$, $X = f({\bf Z})$ and $Y = g({\bf Z})$, from the almost everywhere continuity of the functions, we have $X_n\stackrel{d}{\to} X$ and $Y_n \stackrel{d}{\to} Y$. From Theorem 25.11 in \cite{billingsley}, we have that
\aln{
E[X] \leq \liminf_{n \to \infty} E[X_n].
}

Note that, $Y_n-X_n=g({\bf Z_n})-f({\bf Z_n})$ is an almost everywhere continuous function of ${\bf Z_n}$, hence the sequence of random variables $Y_n-X_n$, converges weakly to $Y-X$. Therefore, since $Y_n - X_n \geq 0$, a second application of Theorem 25.11 yields
\aln{
c - E[X] & = E[Y-X] \leq \liminf_{n \to \infty} E[Y_n - X_n] \\
& =  \liminf_{n \to \infty} c - E[X_n] = c - \limsup_{n \to \infty} E[X_n].
}
Combining both inequalities, we obtain
\aln{
\limsup_{n \to \infty} E[X_n] \leq E[X] \leq \liminf_{n \to \infty} E[X_n],
}
which implies that  $\lim_{n \to \infty} E[X_n] = E[X]$.
\end{proof}

\end{document}